\tikzstyle{noeud} = [circle, draw, fill=white, inner sep=2pt]
\journal{Theoretical Computer Science}
\newtheorem{theorem}{Theorem}[section]
\newtheorem{lemma}{Lemma}[section]
\newtheorem{claim}{Claim}[section]
\newdefinition{remark}{Remark}[section]
\newtheorem{observation}{Observation}[section]
\newdefinition{definition}{Definition} \newdefinition{example}{Example}
\patchcmd\Gread@eps{\@inputcheck#1 }{\@inputcheck"#1"\relax}{}{}
\pgfplotsset{compat=1.15}
\journal{NA}
\begin{document}
\usetikzlibrary{arrows}
\begin{frontmatter}

\title{Balanced Dispersion on Time-Varying Dynamic Graphs\footnote{A preliminary version of this work \cite{Saxena_2025} is accepted in ICDCN 2025.}}

\author[1]{Ashish Saxena}
\author[1]{Tanvir Kaur}
\author[1]{Kaushik Mondal\footnote{Corresponding Author}}
\affiliation[1]{organization={Department of Mathematics},%Department and Organization
            addressline={Indian Institute of Technology Ropar}, 
            city={Rupnagar},
            postcode={140001}, 
            state={Punjab},
            country={India}}

%% Abstract
\begin{abstract}
We aim to connect two problems, namely, dispersion and load balancing. Both problems have already been studied over static as well as dynamic graphs. Though dispersion and load balancing share some common features, the tools used in solving load balancing differ significantly from those used in solving dispersion. One of the reasons is that the load balancing problem is introduced and studied heavily over graphs where nodes are the processors and work under the message passing model, whereas dispersion is a task for mobile agents to achieve on graphs. To bring the (load) balancing aspect in the dispersion problem, we say, mobile agents move to balance themselves as equally as possible across the nodes of the graph, instead of stationary nodes sharing loads in the load balancing problem. We call it the \emph{$k$-balanced dispersion} problem and study it on dynamic graphs. This is equivalent to the load balancing problem considering movable loads in form of the agents. \\
\\
Earlier, on static graphs, the \emph{$k$-dispersion} problem [TAMC 2019] aimed for the same by putting an upper bound on the number of agents on each node in the final configuration; however, the absence of a lower bound on the number of agents in their problem definition hampers the load-balancing aspect, as some nodes may end up with no agents in the final configuration. We take care of this part in our \emph{$k$-balanced dispersion} problem definition and thus produce a stronger connection between the two domains.\\
\\
We study the \emph{$k$-balanced dispersion} problem on temporally connected dynamic graphs [TCS 2016] and $\ell$-bounded 1-interval connected dynamic graphs [JCSS 2021]. We provide several necessary and sufficient conditions to solve \emph{$k$-balanced dispersion} problem including a couple of impossibility results. 
\end{abstract}

\begin{keyword}
Mobile agents\sep
Dispersion\sep
Load balancing \sep
Dynamic graphs\sep
Deterministic algorithm
\end{keyword}

\end{frontmatter}

\section{Introduction}\label{sec:intro}
The \textit{dispersion problem} was first introduced by Augustine et al.\ in \cite{Augustine_2018} for static graphs. The goal is to coordinate \( k \leq n \) mobile agents on a graph with \( n \) nodes so that no two agents occupy the same node. In dispersion, the objective is to minimize the total cost of reaching such a configuration, assuming that movement is relatively cheap compared to the cost of multiple agents sharing the same node. This problem has been explored under a wide range of assumptions: different agent interaction mechanisms like whiteboards, tokens, face-to-face meetings, or global communication; different timing models such as asynchronous, semi-synchronous, or fully synchronous rounds; and varying degrees of knowledge about the graph and memory limitations. For a broader perspective, see \cite{Augustine_2018, Ajay_2019, Molla_2019, ShintakuSKM20, KshemkalyaniS21, sudo, Molla_19, DAS_2023, Gorain_2024, Italiano_2025, Ajay_2025, Ajay_2022, Ajay_2025_, Patnayak_2025}. Dispersion is closely related to several classical problems. Scattering \cite{Elor_2011,SHIBATA_18,SHIBATA_22} on graphs requires that \( k \leq n \) agents spread out with equal spacing on symmetric structures such as rings or grids. This can be seen as dispersion with an additional constraint of equidistance. In multi-agent exploration \cite{Pelc_2004}, the goal is for agents starting from a common node to collectively visit all nodes in the graph as quickly as possible, which is a task that any dispersion algorithm with $k=n$ agents inherently solves. On the other hand, load balancing \cite{Cybenko_1989,Monien_2004} deals with nodes exchanging load to reach an even distribution. Given a graph $G$ representing the network with $n$ nodes, where
each node contains the load $w_i$, the goal is to move the load across the edges so that finally the weight of each node is (approximately) equal to 
\[
\overline{w_i} = \frac{1}{n} \sum_{j=1}^{n} w_j.
\]
\begin{figure}
    \centering
    \includegraphics[width=0.2\linewidth]{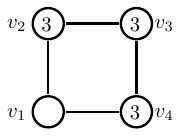}
    \caption{Configuration as per $k$-dispersion.}
    \label{fig:disp_1}
\end{figure}
Dispersion flips this perspective: instead of stationary nodes sharing load, mobile agents move to balance themselves across the network. While the tools used in load balancing differ significantly from those in mobile agent problems, studying dispersion can offer a common ground to connect these approaches and possibly transfer insights between the two domains. In 2019, Molla et al.\cite{Molla_19} established a connection between the problems of dispersion and load balancing by introducing the notion of \emph{$k$-dispersion}. Given a graph $G$ with $n$ nodes and $k \geq 1$ agents placed arbitrarily on its nodes, a configuration is said to be a $k$-dispersion if no node have more than $\lceil k/n \rceil$ agents. However, $k$-dispersion does not capture the idea of load balancing well enough. For example, consider \( n = 4 \), \( k = 9 \). Fig.~\ref{fig:disp_1} shows a valid configuration for \( k \)-dispersion as $\lceil k/n \rceil=3$. However, there is a node \( v_1 \) which has no agent. If parameters $k$ and $n$ are significantly large, there can be multiple nodes which has no agent. Therefore, agent load is not balanced. To build a stronger bridge between dispersion and load balancing, we introduce a constraint requiring each node to have at least \( \lfloor k/n \rfloor \) and at most \( \lceil k/n \rceil \) agents. Formally,

\begin{definition}
    \textbf{($k$-balanced dispersion)}
    Let $k \geq 1$ agents be placed on a graph $G$ with $n$ nodes. A configuration is said to be a balanced dispersion if each node has at least $\lfloor k/n \rfloor$ and at most $\lceil k/n \rceil$ agents.
\end{definition}

In modern networks, such as mobile ad-hoc networks, network changes are frequent. These systems are often referred to as dynamic graphs. To analyze such systems, many dynamic graph models have been proposed, each under different assumptions. In the synchronous setting, where time is considered in discrete rounds, a dynamic graph \( \mathcal{G} \) can be seen as a sequence of static graphs \( \mathcal{G}_0, \mathcal{G}_1, \mathcal{G}_2, \ldots \), where each \( \mathcal{G}_r \) represents a graph at round \( r \). This sequence is often referred to as a \textit{time evolving graph} or \textit{dynamic graph}. But if the graph evolves arbitrarily without any constraint, most distributed problems become unsolvable. To overcome this, several works introduce structural assumptions such as temporal connectivity~\cite{MICHAIL_2016}, $T$-interval connectivity~\cite{Kuhn_2010}, $T$-path connectivity~\cite{Saxena_2025_}, or connectivity time~\cite{Michail_2014}, which have enabled progress on fundamental problems like exploration, gathering, gossiping, and leader election.

A graph \( G \) is called port-labeled when each edge incident to a node \( u \) is assigned a locally unique port number from the set \( \{0, 1, \ldots, \deg(u) - 1\} \), where \( \deg(u) \) is the degree of node \( u \). That is, every undirected edge \( \overline{uv} \) has two independent port labels: one at node \( u \) and one at node \( v \). This port numbering is crucial for enabling agents to distinguish between their outgoing edges. Without port labels, especially in anonymous graphs where nodes have no IDs, all edges at a node would be indistinguishable. As shown in \cite{Pelc2012}, this makes even basic problems like exploration, dispersion, or rendezvous impossible to solve deterministically.

The question of how port numbers behave across rounds becomes even more critical in dynamic graphs. In the literature on dynamic graphs involving mobile agents, mainly two port-labeling models have been studied. The first, we call as \textit{fixed port labeling} \cite{gotoh2018group, gotoh2019exploration, GOTOH_2021, di2020distributed, bournat2017computability, Luna_2025}, the network has a static footprint graph \( G \), and each \( \mathcal{G}_r \) is a subgraph of \( G \). The outgoing ports at node $u$ in $\mathcal{G}_r$ are the same as the outgoing ports of node $u$ in $G$. Specifically, if port \( \lambda \) at node \( u \) leads to node \( v \) in the footprint $G$, then whenever that edge corresponding port appears in any \( \mathcal{G}_r \), it always leads to node \( v \). This provides local stability in the network. The second, we call \textit{degree-based dynamic port labelling} \cite{Ajay_dynamicdisp, Saxena_2025_}, where there is no footprint. Therefore, the degree of each node can change arbitrarily in each round, and port numbers are reassigned dynamically based on the current degree. As a result, the port $\lambda$ at node $u$ may lead to different nodes at different rounds.

In the next section, we formally describe the model, and on this model, we study \( k \)-balanced dispersion.

\section{Model}

\noindent \textbf{Dynamic graph}: A dynamic network is modeled as a \emph{time-varying graph (TVG)}, denoted by \( \mathcal{G} = (V, E, T, \rho) \), where \( V \) is a set of nodes, \( E \) is a set of edges, \( T \) is the temporal domain, and \( \rho : E \times T \rightarrow \{0, 1\} \) is the presence function, which indicates whether a given edge is available at a given time. The static graph \( G = (V, E) \) is referred to as the underlying graph (or footprint) of the TVG \( \mathcal{G} \), where \( |V| = n \) and \( |E| = m \). For a node \( v \in V \), let \( E(v) \subseteq E \) denote the set of edges incident on \( v \) in the footprint. The degree of node \( v \) is defined as \( \deg(v) = |E(v)| \), and the diameter of $\mathcal{G}_r$ is denoted by $D_r$, and $\widetilde{D}=\max_{r\geq 1} D_r$ denotes the dynamic diameter of $\mathcal{G}$. Nodes in $V$ are anonymous: they lack unique identifiers and possess no storage. Each edge incident to a node \( v \) is locally labeled with a port number. This labeling is defined by a bijective function \( \lambda_v : E(v) \rightarrow \{0, \ldots, \delta_v - 1\} \), which assigns a distinct label to each incident edge of \( v \). No further assumptions are made about the labeling. Assuming time is discrete, the TVG \( \mathcal{G} \) can be viewed as a sequence of static graphs \( \mathcal{S}_{\mathcal{G}} = \mathcal{G}_0, \mathcal{G}_1, \ldots, \mathcal{G}_r, \ldots \), where each \( \mathcal{G}_r = (V, E_r) \) denotes the snapshot of \( \mathcal{G} \) at round \( r \), with \( E_r = \{ e \in E \mid \rho(e, r) = 1 \} \). The set of edges not present at time \( r \) is denoted by \( \overline{E}_r = E \setminus E_r \subseteq E \). 

% the minimum degree of \( G \) is given by \( \delta = \min_{v \in V} \deg(v) \), and the maximum degree of \( G \) is given by \( \Delta = \max_{v \in V} \deg(v) \).

Dynamic graphs can be classified based on how their topological changes affect connectivity. The weakest meaningful requirement is temporal connectivity, which was introduced by Michail et al. \cite{MICHAIL_2016} in 2016. Consider a set $\mathcal{J}=\{(\overline{u_1u_2}, r_1),(\overline{u_2u_3}, r_2), \ldots, (\overline{u_ku_{k+1}}, r_k)\}$, where $\{\overline{u_1u_2}, ~\overline{u_2u_3},~ ..., ~\overline{u_ku_{k+1}}\}$ is a walk in $G$ (footprint) from $u_1$ to $u_{k+1}$ starting at round $r_1$ where $\forall \,i$, $\overline{u_iu_{i+1}}\in E_i$ and $r_{i+1}> r_i$. The set $\mathcal{J}$ is called a \emph{journey} from $u_1$ to $u_{k+1}$. All possible journeys from node $u$ to $v$ starting on or after round $r$ are denoted by $\mathcal{J}(u, v, r)$. Formally,

\begin{definition}
(temporal connectivity) \cite{MICHAIL_2016} A dynamic graph \( \mathcal{G} \) is said to be \emph{temporally connected} (or \emph{connected over time}) if for all \( r \in \mathbb{N}\cup \{0\} \) and for all pair of nodes \( u, v \in V \), there exists a journey from node \( u \) to node \( v \) starting at or after time \( r \). In other words, \( \mathcal{J}(u, v, r) \neq \emptyset \).\footnote{We refer to a dynamic graph that satisfies temporal connectivity as a \textit{temporally connected graph}.}
\end{definition}

Temporal connectivity is the minimal requirement for solving any global task in dynamic graphs, regardless of the initial positions of agents. In particular, any problem that requires every node to be involved, such as exploration or dispersion, is trivially unsolvable if the dynamic graph is not temporally connected. Stronger assumptions have also been studied in the literature. One well-known class of dynamic graphs guarantees connectivity at every round, rather than over time. A commonly used restriction is 1-interval connectivity, and a further refinement is its bounded variant.

\begin{definition}
 (\( \ell \)-bounded 1-interval connectivity) \cite{GOTOH_2021} A dynamic graph \( \mathcal{G} \) is \emph{1-interval connected} (or \emph{always connected}) if every snapshot \( \mathcal{G}_r \in \mathcal{S}_\mathcal{G} \) is connected. Furthermore, \( \mathcal{G} \) is said to be \( \ell \)-bounded 1-interval connected if it is always connected and \( |\overline{E}_r| \) is at most \( \ell \).
\end{definition}

\vspace{0.15cm}
\noindent\textbf{Agent}: We consider $ k$ agents to be present initially at nodes of the graph $G$. Each agent has a unique identifier assigned from the range $[1,\,n^{c}]$, where $c$ is a constant. Each agent knows its ID but is unaware of the other agents' IDs. Agents are not aware of the values of \( n \), \( k \), or $c$ unless stated otherwise. The agents are equipped with memory. An agent residing at a node, say $v$, in round $r$ knows $\deg(v)$ in the footprint $G$ and all associated ports corresponding to node $v$ in $G$; however, the agent does not understand if any incident edge of node $v$ is missing in $\mathcal{G}$ in round $t$. To be more precise, agent $a_i$ currently at node $v$ at round $r$ does not know the value of $\rho(e_v,r)$, where $e_v$ is an edge incident at node $v$. Such a model has been considered in \cite{GOTOH_2021}.

We call a node $v \in \mathcal{G}_r$ a \underline{hole} at round $r$ if it has no agent, and a \underline{multinode} if it has two or more agents.

\vspace{0.15cm}
\noindent \textbf{Communication model:} In this work, we consider an $l$-hop communication model, where $l \in \mathbb{N} \cup \{0\}$. Under this model, an agent at a node $v$ can send messages to all agents located within $l$ hops from $v$. When $l = 0$, this reduces to face-to-face (f-2-f) communication~\cite{Augustine_2018, Ajay_2019}, meaning agents can only communicate if they are co-located at the same node. At the other hand, when $l = \widetilde{D}$, $l$-hop communication becomes global communication~\cite{Ajay_dynamicdisp, Pelc_2006}, allowing agents to exchange messages regardless of their positions in the network.

\vspace{0.15cm}
\noindent \textbf{Visibility model:} In this work, we use two types of visibility models: 0-hop visibility and 1-hop visibility. In the 0-hop visibility model~\cite{Augustine_2018, Molla_2019, Molla_2020}, an agent at a node \( v \in \mathcal{G} \) can see the IDs of agents present at \( v \) in round \( r \), as well as the port numbers at \( v \), but nothing beyond that. In contrast, in the 1-hop visibility model~\cite{Agarwalla_2018, Avery_2020}, an agent \( a_i \) at node \( v \) can also see all neighbors of \( v \), including the IDs of agents (if any) at those neighboring nodes. Let \( e_v \) be an edge incident to node \( v \). In the 0-hop visibility model, agents cannot determine the value of \( \rho(e_v, r) \) at the beginning of round \( r \). In contrast, under the 1-hop visibility model, they can determine this value at the beginning of round \( r \).

% If the adversary sets $\rho(e_v,r)=0$ for an edge $e_v$ associated with a node $v$ with port $p$ at the start of a round $r$, the agent(s) present there cannot understand the value of $\rho(e_v,r)$ in the 0-hop visibility model. However, if the agent tries to move through $e_v$ at round $r$, it cannot make the move. In the 1-hop visibility model, agents can understand the value of $\rho(e_v,r)$ at the beginning of round $r$.

\vspace{0.2cm}
\noindent The algorithm runs in synchronous rounds. In each round $r$, an agent $a_i$ performs one \textit{Communicate-Compute-Move} (CCM) cycle as follows:
\begin{itemize}
    \item \textbf{Communicate:} Agent $a_i$ at node $v_i$ can communicate with other agents $a_j$ present at the same node $v_i$ or present at any different node $v_j$, depending on the communication (f-2-f or global). The agent also understands whether it had a \underline{successful} or an \underline{unsuccessful} move in the last round.
    \item \textbf{Compute:} Based on the information the agent has, the agent computes the port through which it will move or decides not to move at all.
    \item \textbf{Move:} Agent moves via the computed port or stays at its current node. 
\end{itemize}

In this work, we use two terms to describe agents' configurations within each round $r$: (i) \underline{at the beginning of round $r$}, which refers to the agents' configuration before the CCM cycle starts, and (ii) \underline{at the end of round $r$}, which refers to the agents' configuration after the CCM cycle completes.

If $k\geq 1$, and every node has at least \( \lfloor k/n \rfloor \) and at most \( \lceil k/n \rceil \) agents, then we refer to such a configuration as a \underline{dispersed} configuration in this work. Otherwise, we refer to it as an \underline{undispersed} configuration.

\section{Related work}\label{sec:rel}
The problem of dispersion for $k \leq n$ agents has been extensively studied in static graphs under two main communication models: (a) f-2-f communication, and (b) global communication. In f-2-f, there is a wide range of literature \cite{Augustine_2018, Ajay_2019, Molla_2019, ShintakuSKM20, KshemkalyaniS21, sudo, Molla_19, DAS_2023, Italiano_2025, Ajay_2025, Ajay_2025_, Patnayak_2025}. In 2022, Kshemkalyani et al. \cite{Ajay_2022} studied the dispersion problem for $k\leq n$ when agents are equipped with global communication. A natural and fundamental question is whether such communication capabilities are strictly necessary to achieve dispersion. In 2024, Gorain et al. \cite{Gorain_2024} partially addressed this question. They presented an algorithm for $k$ co-located agents that achieves dispersion without relying on f-2-f communication. In their model, the information available to an agent at a node $v$ at the beginning of each round is extremely limited. Specifically, each agent knows only the answers to the following two questions: (1) am I alone at node $v$? and (2) compared to the previous round, has the number of agents at $v$ increased or decreased? However, the question is still open: what is the necessary communication type to solve dispersion for $k\leq n$ agents?

Recently, researchers have shown growing interest in distributed problems such as exploration, gathering, dispersion, and leader election in dynamic graphs. This is motivated by the fact that dynamic graphs are more related to real-world networks, where topologies change over time. While dispersion has been well studied in static settings, relatively fewer works study the problem for $k \leq n$ agents in dynamic graphs~\cite{Saxena_2025_,Ajay_dynamicdisp,Agarwalla_2018}. Agarwalla et al.~\cite{Agarwalla_2018} consider a 1-bounded 1-interval connected dynamic ring and provide several deterministic algorithms. In contrast, Kshemkalyani et al.~\cite{Ajay_dynamicdisp} and Saxena et al.~\cite{Saxena_2025_} study dispersion in a more general and weaker dynamic model where the graph remains connected in every round, but there is no fixed footprint, and the adversary can arbitrarily add or remove edges.

In~\cite{Ajay_dynamicdisp}, two communication assumptions are considered: global communication and 1-neighborhood knowledge. This 1-neighbourhood knowledge is equivalent to saying 1-hop visibility. They proved the following two theorems:

\begin{theorem}\label{thm:1}
    \cite{Ajay_dynamicdisp} It is impossible to solve dispersion of $k \geq 5$ agents on a dynamic graph deterministically with the agents having 1-neighborhood knowledge and unlimited memory, but without global communication.
\end{theorem}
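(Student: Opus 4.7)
I would argue by contradiction: assume a deterministic algorithm $\mathcal{A}$ achieves dispersion for every $k \geq 5$ under these hypotheses. The natural approach is an adversarial indistinguishability argument that exploits the two freedoms the dynamic model of \cite{Ajay_dynamicdisp} grants the adversary, namely the choice of $E_r$ in each round (subject only to preserving connectivity) and, under the degree-based dynamic port labeling, the freedom to remap each node's ports across rounds. Because agents have only $1$-hop visibility and only face-to-face communication, an agent's deterministic action in round $r$ depends solely on its ID and the sequence of its own $1$-hop views. Hence the adversary, which schedules $E_r$ after reading $\mathcal{A}$'s planned moves, has complete control over what each agent ``knows''.

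I would initialize with all $k \geq 5$ agents at a single node $v_0$ of a graph with $n \geq k$ nodes, so dispersion is feasible in principle. The adversary then maintains as inductive invariant the existence of a multinode together with a pair of agents whose histories of $1$-hop views are also consistent with an alternative adversarial run that is about to create a fresh collision. At each round the adversary (i) simulates $\mathcal{A}$ on the current histories, (ii) deletes outgoing edges of would-be leavers so that some multinode persists, and (iii) relabels ports of two agents aiming at nominally distinct neighbors so they actually land at the same node. The threshold $k \geq 5$ should enter through a counting argument at this step: forcing a collision while also supplying the $1$-hop decoys needed to keep the adversary's two alternative worlds indistinguishable to every agent appears to require at least five agents in distinct roles, whereas $k \leq 4$ leaves $\mathcal{A}$ enough slack in its $1$-hop view to detect the divergence.

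The main difficulty is that agents have unbounded memory, so every round's view is permanently recorded and must remain consistent across the adversary's two alternative worlds for \emph{all} future rounds, not just the next one. To surmount this I would formalize a potential function counting multinodes plus view-compatible alternative executions, and argue by a case split on $\mathcal{A}$'s announced behavior (keep together, split a multinode, or merge groups) that the adversary can always pick a connected $\mathcal{G}_r$ together with a port relabeling that preserves positive potential. Once the invariant is inductively maintained, the contradiction follows: in the adversary's alternative world dispersion is violated at every round, but by construction the agents' histories there are identical to those in the real world, so $\mathcal{A}$ must fail in the real run too, contradicting correctness. I expect the most delicate engineering step, and the real obstacle, to be exhibiting a single connected snapshot $\mathcal{G}_r$ in every round that simultaneously realizes the intended collision in one world and the decoy-indistinguishability in the other; the feasibility of this combinatorial design is what ultimately rests on the $k \geq 5$ pigeonhole count.
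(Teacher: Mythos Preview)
This theorem is not proved in the paper at all: it is merely quoted from \cite{Ajay_dynamicdisp} in the Related Work section as background, with no accompanying argument. There is therefore no ``paper's own proof'' to compare your proposal against. Your task here was miscalibrated --- the paper treats this statement as an imported black box, and you were not expected to reconstruct the original impossibility argument from Kshemkalyani et al.

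That said, what you have written is a \emph{plan}, not a proof, and it does not actually close. You correctly identify the adversarial-indistinguishability paradigm and the relevant model features (degree-based dynamic port relabeling, 1-hop visibility, local communication), but the crucial combinatorial content --- why exactly five agents are needed, and how the adversary can maintain a multinode forever while keeping all agents' complete histories consistent with the alternate world --- is only gestured at. Phrases like ``appears to require at least five agents in distinct roles'' and ``I expect the most delicate engineering step \ldots\ to be exhibiting a single connected snapshot'' are precisely where the real proof lives, and you have not supplied them. In particular, the two-world indistinguishability you invoke is not obviously maintainable: with 1-hop visibility an agent sees the IDs of agents at neighboring nodes, so any ``decoy'' must have the right ID in the right place every round, and you have not shown that the adversary's freedom over $E_r$ and port labels suffices to arrange this against an arbitrary deterministic $\mathcal{A}$. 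If you want to actually prove this, you need to consult the construction in \cite{Ajay_dynamicdisp} rather than speculate about its shape.
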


\begin{theorem}\label{thm:2}
\cite{Ajay_dynamicdisp} It is impossible to solve dispersion of $k \geq 3$ agents on a dynamic graph deterministically with the agents having global communication and unlimited memory, but without 1-neighborhood knowledge.
\end{theorem}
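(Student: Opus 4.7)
My plan is an adversary argument that exploits the agents' lack of 1-neighborhood knowledge. Because the algorithm $\mathcal{A}$ is deterministic and the agents enjoy global communication, every agent's planned action in round $r$ is a public function of the entire past, so the adversary can compute the round-$r$ move vector in advance and then pick the snapshot $\mathcal{G}_r$ in response. The key lever is that an agent, unable to detect whether an incident edge is actually present before committing its port, cannot avoid a move that the adversary is quietly blocking, and under global communication the absence can only be inferred \emph{after} the move has failed.

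I would take the footprint to be the triangle $K_3$ on $\{v_1, v_2, v_3\}$ with all three agents initially at $v_1$; for $k>3$ the argument extends to $K_k$ with all agents at one vertex, using pigeonhole on the $k-1$ outgoing ports. In $K_3$ any subgraph consisting of $2$ or $3$ edges is connected, so the adversary has three $2$-edge spanning paths and the full triangle to choose from in every round, and the induced dynamic graph is $1$-interval connected (hence temporally connected), so the construction conforms to every dynamic-graph model stated in the paper. The invariant I would maintain is: \emph{after every round at least one node is a multinode}, which for $k\leq n$ is equivalent to the configuration being undispersed.

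The inductive step is a case analysis at the current multinode $u$ on the planned moves of the agents sitting at $u$. If at least two of them choose ``stay'', the multinode at $u$ is trivially preserved, and the adversary enables any spanning pair. If two or more choose the same port, the adversary activates that port's edge (plus another edge for connectivity) and the chosen agents co-arrive at the common neighbor, transplanting the multinode. Otherwise the agents at $u$ pick distinct ports; here the adversary selects the $2$-edge path through $u$ that exposes just one of $u$'s two ports, so the agent whose port is absent fails and stays at $u$, which together with any other stayer, or with an agent whose incoming move succeeds into $u$, keeps a multinode alive. I would close the step by inspecting the remaining finite collection of configurations $(3,0,0)$ and $(2,1,0)$ (up to symmetry) together with their companion move vectors, and verifying that among the three spanning paths of $K_3$ the adversary always has at least one choice that simultaneously avoids the dispersed $(1,1,1)$ configuration and respects $1$-interval connectivity.

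The main obstacle, in my view, is the sub-case where $u$ holds exactly two agents and both of them leave via distinct ports while the third agent (at another vertex) also moves: here $u$ empties, so the multinode must materialise elsewhere by choosing a spanning path that either forces one outgoing agent's move to fail back to $u$ or funnels two successful moves into a common node. The key bookkeeping task is to show that for every such move vector at least one of the three admissible snapshots does the job; once this finite check is in place, invariance yields the impossibility, because the algorithm never sees a dispersed configuration and therefore cannot correctly terminate, contradicting its assumed correctness.
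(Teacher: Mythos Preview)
The paper does not prove this theorem: it is quoted verbatim from \cite{Ajay_dynamicdisp} in the related-work section and no argument is given. So there is no ``paper's own proof'' of Theorem~\ref{thm:2} to compare against. The closest thing the paper proves is Theorem~\ref{thm:imp_global}, which is the analogue for $k$-balanced dispersion in the TVG model.

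For what it is worth, your triangle construction for $k=3$ is sound and, once the finite case analysis you outline is actually written out, it goes through (I checked the ``main obstacle'' sub-case you flag: when the two agents at the multinode $u$ take distinct ports and the third agent moves, removing the edge from $u$ to the vertex \emph{not} currently hosting the third agent always leaves a multinode). Compared to the paper's proof of Theorem~\ref{thm:imp_global}, your argument is more combinatorial (enumerate move vectors, pick a blocking snapshot), whereas the paper's argument is more arithmetic: on a ring it looks at the node $u$ with too many agents, computes the two net flows $(x_2-x_1)$ and $(y_2-y_1)$ across $u$'s edges under the hypothetical all-edges-present run, observes that at least one must be strictly negative for $u$ to drop to the target count, and deletes that edge. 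Your style is perhaps more transparent for $n=k=3$; the paper's net-flow trick scales to arbitrary $n$ and to the balanced setting without a growing case analysis.

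One caution on your extension to $k>3$: you claim the argument generalises to $K_k$ ``using pigeonhole on the $k-1$ outgoing ports'', but in the $1$-bounded $1$-interval connected model you have been using you may delete only one edge per round, and with $k-1\geq 3$ ports at the multinode a single deletion need not force any collision or failed move. The original theorem is stated in the footprint-free model of \cite{Ajay_dynamicdisp}, where the adversary may reshape the graph arbitrarily (subject only to connectivity), so the generalisation is fine there; but if you intend your proof to live inside the TVG/$1$-bounded framework the paper uses elsewhere, the $k>3$ step as written does not go through and you would need a different construction.
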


They provide an asymptotically optimal $\Theta(k)$ round algorithm in that setting. This communication model is strictly stronger than the one we consider. Translating their results into our model, one might infer that dispersion for $k\leq n$ agents in 1-interval connected dynamic graphs requires at least 1-hop visibility and 1-hop communication. This indicates a gap between their sufficient conditions and the necessary conditions in our model. Saxena et al.~\cite{Saxena_2025_} study on the same model \cite{Ajay_dynamicdisp} and extend the analysis to $T$-path connected graphs \footnote{A dynamic graph $\mathcal{G}$ is \emph{$T$-Path Connected} for $T \geq  1$, if for all $r \in \mathbb{N} \cup \{0\}$ and for any $u$, $v \in V$, there exists at least one round $i \in [r, r + T - 1]$ such that there exists a path between $u$ and $v$ in $\mathcal{G}_i$.}. Since $T$-path connectivity is weaker than 1-interval connectivity, Theorem \ref{thm:1} and Theorem \ref{thm:2} are valid in $T$-path connectivity as well. They design an algorithm using 1-hop visibility and global communication, and also show that dispersion becomes impossible in connectivity-time dynamic graphs even when agents have full visibility, global communication, and knowledge of all system parameters. Refer to Table~\ref{tab:imp} and Table~\ref{tab:algo} for the results from~\cite{Ajay_dynamicdisp, Saxena_2025_} in our model.

The dynamic graphs in both~\cite{Ajay_dynamicdisp, Saxena_2025_} are weaker than time-varying graphs (TVGs) because they lack an underlying footprint. Yet, their communication assumptions are quite restrictive. In contrast, we generalize the communication model by introducing $l$-hop communication, which captures a spectrum between f-2-f ($l = 0$) and global communication ($l = \widetilde{D}$). Their results identify conditions that are sufficient for dispersion with $k\leq n$ agents in weak dynamic models, but they do not establish what is necessary for stronger models like TVGs. They kept this as an open question in their conclusion. Our work fills this gap: we establish necessary conditions for solving $k$-balanced dispersion in temporally connected and $l$-bounded 1-interval connected TVGs. Since TVGs are stronger than the models used in~\cite{Ajay_dynamicdisp, Saxena_2025_}, our necessary conditions apply to those as well. While the question of necessary conditions for dispersion in static graphs remains open, we provide a necessary condition to solve $k$-balanced dispersion in TVGs.

\begin{table}[ht]
\centering
\scriptsize
\caption{Necessary conditions to solve $k$-balanced dispersion.}\label{imp}
\setlength{\tabcolsep}{6pt}
\renewcommand{\arraystretch}{2}
\begin{tabular}{|>{\centering\arraybackslash}p{2.5cm}|
                >{\centering\arraybackslash}p{3.5cm}|
                >{\centering\arraybackslash}p{3cm}|
                >{\centering\arraybackslash}p{2.8cm}|}
\hline
 \textbf{\raisebox{1pt}[2em][1em]{\makecell{Connectivity}}} & 
\textbf{\raisebox{1pt}[2em][1em]{\makecell{Number of agents}}} & 
\textbf{\raisebox{1pt}[2em][1em]{\makecell{Initial configuration}}} & 
\textbf{\raisebox{1pt}[2em][1em]{\makecell{Necessary \\conditions}}} \\
\hline \hline
\raisebox{1.2pt}[2.2em][1.5em]{\makecell{1-interval connected,\\ No footprint, \\ \cite{Ajay_dynamicdisp}}} & 
$5\leq k\leq n$ & 
\raisebox{1.2pt}[2.2em][1.5em]{\makecell{Any undispersed \\configuration}}  & 
\raisebox{1.2pt}[2.2em][1.5em]{\makecell{1-hop visibility,\\1-hop communication}} \\
\hline

\raisebox{1.2pt}[2.2em][1.5em]{\makecell{Connectivity time,\\ No footprint, \\ \cite{Saxena_2025_}}} & 
$3\leq k\leq n$ & 
\raisebox{1.2pt}[2.2em][1.5em]{\makecell{Any undispersed \\configuration}}  & 
\raisebox{1.2pt}[2.2em][1.5em]{\makecell{Impossible}} \\
\hline

\raisebox{1.2pt}[2.2em][1.5em]{\makecell{$T$-path connected,\\ No footprint, \\ \cite{Saxena_2025_}}} & 
$3\leq k\leq n$ & 
\raisebox{1.2pt}[2.2em][1.5em]{\makecell{Any undispersed \\configuration}}  & 
\raisebox{1.2pt}[2.2em][1.5em]{\makecell{1-hop visibility,\\1-hop communication}} \\
\hline

\raisebox{1pt}[2em][1.5em]{\makecell{Temporal\\connectivity \\ (TVGs)}} & \raisebox{1pt}[2em][1.5em]{\makecell{$k = pn + q$; $p,q \in \mathbb{N} \cup \{0\}$, \\ $3 \leq q \leq n - 1, n\geq 4$}}
 & \raisebox{1pt}[2em][1.5em]{\makecell{At least one node with \\ $\geq p+2$ agents}}
 & 
Impossible \\
\hline
\raisebox{1pt}[2em][1.5em]{\makecell{Temporal\\connectivity}} & 
\raisebox{1pt}[2em][1.5em]{\makecell{$k = pn + q$; $p,q \in \mathbb{N} \cup \{0\}$, \\ $3 \leq q \leq n - 3, n\geq 6$}} & \raisebox{1pt}[2em][1em]{\makecell{Any undispersed \\configuration}}
 & 
Impossible \\
\hline

\raisebox{1pt}[2em][1.5em]{\makecell{$1$-bounded\\1-interval connected }} & 
$k = pn$, $p \in \mathbb{N}$ & 
\raisebox{1pt}[2em][1.5em]{\makecell{Any undispersed \\configuration}}  & 
\raisebox{1pt}[2em][1.5em]{\makecell{1-hop visibility,\\1-hop communication}} \\
\hline

\raisebox{1pt}[2em][1.5em]{\makecell{$\ell$-bounded\\1-interval connected\\ ($\ell\geq 25$)}} & \raisebox{1pt}[2em][1.5em]{\makecell{$k = pn + q$; $p ,q\in \mathbb{N} \cup \{0\}$, \\ $5 \leq q \leq n - 1$, $n \in [5, \lfloor \sqrt{\ell} \rfloor]$}}
 & 
\raisebox{1pt}[2em][1.5em]{\makecell{At least one node with \\ $\geq p+2$ agents}} & 
\raisebox{1pt}[2em][1.5em]{\makecell{1-hop visibility,\\1-hop communication}}\\
\hline

\end{tabular}
\label{tab:imp}
\end{table}

\begin{table}[ht]
\centering
\scriptsize
\caption{Algorithmic results of $k$-balanced dispersion.}\label{algo}
\setlength{\tabcolsep}{6pt}
\renewcommand{\arraystretch}{3}
\begin{tabular}{|>{\centering\arraybackslash}p{1.9cm}|
                >{\centering\arraybackslash}p{1.8cm}|
                >{\centering\arraybackslash}p{1.9cm}|
                >{\centering\arraybackslash}p{2.8cm}|
                >{\centering\arraybackslash}p{2.9cm}|
                >{\centering\arraybackslash}p{2.6cm}|}
\hline
\textbf{\raisebox{1pt}[2em][1em]{\makecell{Connectivity}}}  & 
\textbf{\raisebox{1pt}[2em][1em]{\makecell{Number of \\agents}}} & 
\textbf{\raisebox{1pt}[2em][1em]{\makecell{Initial\\configuration}}}  &
\textbf{\raisebox{1pt}[2em][1em]{\makecell{Sufficient\\condition}}} & 
\textbf{\raisebox{1pt}[2em][1em]{\makecell{Time\\complexity}}} & 
\textbf{\raisebox{1pt}[2em][1em]{\makecell{Agent\\memory}}} \\ 
\hline\hline
\raisebox{1pt}[3em][2.5em]{\makecell{1-interval \\connected,\\ No footprint\\ \cite{Ajay_dynamicdisp}}} & 
$k \leq n$ & 
Any & 
\raisebox{1pt}[3em][2.5em]{\makecell{1-hop visibility,\\global communication}} & 
$\Theta(k)$ & 
$\Theta(\log n)$ \\
\hline
\raisebox{1pt}[3em][2.5em]{\makecell{$T$-path \\connected,\\ No footprint\\ \cite{Saxena_2025_}}} & 
$k \leq n$ & 
Any & 
\raisebox{1pt}[3em][2.5em]{\makecell{1-hop visibility,\\global communication,\\ knowledge of $T$}} & 
$\Theta(kT)$ & 
$\Theta(\log \max\{k,T\})$ \\
\hline
\raisebox{1pt}[3em][2.5em]{\makecell{$\ell$-bounded\\1-interval \\connected}} & 
$k \geq 1$ & 
Any & 
\raisebox{1pt}[3em][2.5em]{\makecell{1-hop visibility,\\global communication}} & 
$O(k)$ & 
$O(\log n)$ \\
\hline
\raisebox{1pt}[3em][2.5em]{\makecell{$1$-bounded\\1-interval \\connected}} & \raisebox{1pt}[3em][2.5em]{\makecell{$k = pn + q$,\\ $p \in \mathbb{N} \cup \{0\}$,\\ $q \in \{1,2\}$}}
 & 
\raisebox{1pt}[3em][2.5em]{\makecell{Co-located}} & 
\raisebox{1pt}[3em][2.5em]{\makecell{0-hop visibility,\\f-2-f communication,\\knowledge of $n$}}  & 
$O(n^4 \max(\log n, \log p))$ & 
$O(\max\{\log n, \log p\})$ \\
\hline

\end{tabular}
\label{tab:algo}
\end{table}

\section{Our contribution}
In this work, we introduce $k$-balanced dispersion, which is close to the load balancing problem. The details of our contributions are summarized as follows.

\begin{enumerate}
    \item Let $k = pn + q$; $p,q \in \mathbb{N} \cup \{0\}$, $3 \leq q \leq n - 1, n\geq 4$. We have shown that if initially there is at least one node in $G$ with $p+2$ agents, the problem of $k$-balanced dispersion is impossible to solve in temporally connected dynamic graphs. This holds even if nodes have infinite storage, and agents have infinite memory, full visibility, global communication, and complete knowledge of both $k$ and $n$. (Theorem \ref{thm:disp_temporal})
    \item Let $k = pn + q$; $p,q \in \mathbb{N} \cup \{0\}$, $3 \leq q \leq n - 3, n\geq 6$. We have shown that if the initial configuration is undispersed, the problem of $k$-balanced dispersion is impossible to solve in temporally connected dynamic graphs. This holds even if nodes have infinite storage, and agents have infinite memory, full visibility, global communication, and complete knowledge of both $k$ and $n$. (Theorem \ref{thm:disp_temporal_1})
    \item Let $k=pn$, $p\in \mathbb{N}$. If the initial configuration is undispersed, then to solve $k$-balanced dispersion in 1-bounded 1-interval connected graphs, agents need 1-hop visibility and 1-hop communication. (Theorem \ref{thm:imp_1hop} and Theorem \ref{thm:imp_global})

    \item Let $\ell\geq 25$ and $k = pn + q$; $p ,q\in \mathbb{N} \cup \{0\}$, $5 \leq q \leq n - 1$, $n \in [5, \lfloor \sqrt{\ell} \rfloor]$. If initially there is at least one node in $G$ with $p+2$ agents, then to solve $k$-balanced dispersion in $\ell$-bounded 1-interval connected graphs, agents need 1-hop visibility and 1-hop communication. (Theorem \ref{thm:imp_1hop_l>1} and Theorem \ref{thm:imp_global_l>1})

    \item We provide an algorithm which solves $k$-balanced dispersion in $\ell$-bounded 1-interval connected graphs with $k\geq 1$ agents in $O(k)$ rounds using $O(\log n)$ bits of memory per agent in the synchronous setting with global communication and 1-hop visibility (Section \ref{sec:l-bdd-1-int}).

    \item We provide an algorithm which solves $k$-balanced dispersion in $1$-bounded 1-interval connected graphs with $k\in \{pn+1, pn+2\}$ agents, $p\in \mathbb{N}$, in $O(n^4 \max(\log n, \log p))$ rounds using $O(\max\{\log n, \log p\})$ bits of memory per agent in the synchronous setting with global communication and 1-hop visibility (Section \ref{sec:1-bdd-1-int}).
\end{enumerate}

Refer to Table \ref{imp} for the necessary assumptions to solve $k$-balanced dispersion across different connectivity models, and refer to Table \ref{algo} for the sufficient assumptions to solve $k$-balanced dispersion in the same models.

\section{Impossibility for temporally connected graphs}
In this section, we prove that $k$-balanced dispersion is impossible to solve in temporally connected graphs. We begin by constructing a sequence of static graphs $\mathcal{S}_{\mathcal{G}} = \mathcal{G}_0,\, \mathcal{G}_1,\, \ldots,$ which is temporally connected graph. Then, we show that dispersion cannot be achieved in any round $r$; that is, it is impossible to solve dispersion in $\mathcal{G}_r$ for every $r \in \mathbb{N} \cup \{0\}$. We provide two results: (i) initial configuration has at least one node with more than $\lceil k/n\rceil$ agents, (ii) initial configuration is undispersed and has at most $\lceil k/n\rceil$ agents at every node.

\subsection{Impossibility with at least one node with more than $\lceil k/n\rceil$ agents}\label{sec:temp_1}
We consider a clique of size $n \geq 4$ as the footprint $G$ of the dynamic graph $\mathcal{G}$. The number of agents is $k = pn + q$, where $p,q \in \mathbb{N} \cup \{0\}$ and $3 \leq q \leq n - 1$. At round $r$, $\alpha_r(u)$ denotes the number of agents at $u$ at the beginning of round $r$, and $\beta_r(u)$ denotes the number of agents at the end of round $r$. At round $r$, we denote nodes as $v_1^r$, $v_2^r$, \ldots, $v_n^r$.

The key idea behind our construction is as follows. At every round $r$, the nodes are divided into two connected components: the first component, say $C_1^r$, includes at most $2$ nodes, and the second component, say $C_2^r$, includes the remaining nodes. The construction guarantees that if $C_1^r$ has a single node, it contains at least $p + 2$ agents; and if $C_1^r$ has two nodes, then it contains at least $2p + 3$ agents. In both cases, such construction ensures that there's a node in $C_1^r$ with at least $p + 2$ agents, which prevents successful dispersion. The construction of the dynamic graph $\mathcal{G}$ is as follows.
\begin{figure}
    \centering
    \includegraphics[width=0.27\linewidth]{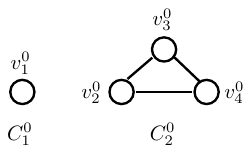}
    \caption{The graph $\mathcal{G}_0$.}
    \label{fig:round_0}
\end{figure}

\begin{figure}
    \centering
    \includegraphics[width=0.85\linewidth]{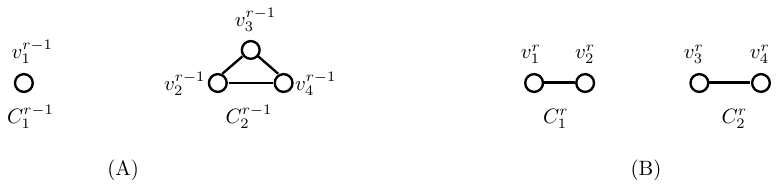}
     \caption{(A) The graph $\mathcal{G}_{r-1}$; (B) the graph $\mathcal{G}_{r}$, where $r$ is an odd round.}
    \label{fig:round_odd}
\end{figure}
\vspace{0.2cm}
\noindent \textbf{Construction of the dynamic graph $\mathcal{G}$:} Suppose $k$ agents are placed arbitrarily at nodes at round $r=0$. At round $0$, there is one node which has at least one node with at least $\lceil k/n\rceil+1= p+2$ agents. Without loss of generality, let $v_1^0\in \big\{v_i^0: \alpha_0(v_i^0)=\max\{\alpha_0(v_i^0): i\in [1,n]\}\big\}$. 

\begin{itemize}
    \item $\bm{r =0}$: Define the set of edges as $X_1=\big\{\overline{v_1^0 v_j^0} : j\in [2, n]\big\}$. At round $r$, we define $\rho(e,r)$ for every $e\in E$ as follows:
     \[
\rho(e,r) =
\begin{cases}
0, & \text{if } e\in X_1 \\
1, & \text{if } e\in E\setminus X_1
\end{cases}
\]

 The resulting graph has two connected components: the first, say $C_1^0$, is a clique of size 1, while the second, say $C_2^0$, is a connected component that is a clique of size $n-1$ (see Fig. \ref{fig:round_0} for $n=4$).

    \item $\bm{r=2i-1, i\in \mathbb{N}}$: At the end of round $r - 1$, the graph consists of two connected component $C_1^{r-1}$ and $C_2^{r-1}$, where $C_1^{r-1}$ is a clique of size 1, and $C_2^{r-1}$ is a clique of $n-1$ nodes (see Fig. \ref{fig:round_odd}(A) for $n=4$). Let node $v_1^{r-1}$ forms $C_1^{r-1}$, and nodes $v_2^{r-1}$, $v_3^{r-1}$, \ldots, $v_n^{r-1}$ forms $C_2^{r-1}$. Without loss of generality, let $v_2^{r-1}\in \big\{v_i^{r-1}: \beta_{r-1}(v_i^{r-1})=\max\{\beta_{r-1}(v_i^{r-1}): i\in [2,n]\}\big\}$. Define the set of edges as $X_2=\big\{\overline{v_i^{r-1} v_j^{r-1}} : i\in [1,2], \;\text{and }j\in [3, n]\big\}$. we define $\rho(e,r)$ for every $e\in E$ as follows:
     \[
\rho(e,r) =
\begin{cases}
0, & \text{if } e\in X_2 \\
1, & \text{if } e\in E\setminus X_2
\end{cases}
\]

The resulting graph has two connected components: the first, say $C_1^r$, is a clique of size 2, while the second, say $C_2^r$, is a connected component that is a clique of size $n-2$ (see Fig. \ref{fig:round_odd}(B) for $n=4$).
    
\begin{figure}
    \centering
    \includegraphics[width=0.85\linewidth]{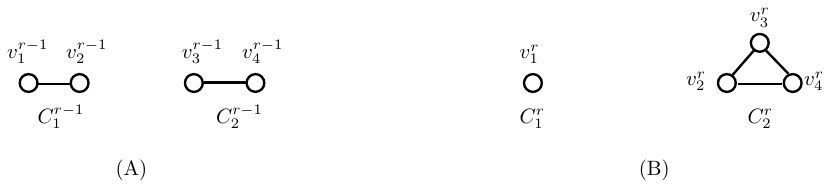}
     \caption{(A) The graph $\mathcal{G}_{r-1}$; (B) the graph $\mathcal{G}_{r}$, where $r$ is an even round and $r\geq 1$.}
    \label{fig:round_even}
\end{figure}
    \item $\bm{r=2i, i\in \mathbb{N}}$:  At the end of round $r - 1$, the graph consists of two connected component $C_1^{r-1}$ and $C_2^{r-1}$, where $C_1^{r-1}$ is a clique of two nodes, and $C_2$ is a clique of $n-2$ nodes (see Fig. \ref{fig:round_even}(A) for $n=4$). Let nodes $v_1^{r-1}$ and $v_2^{r-1}$ forms $C_1^{r-1}$, and nodes $v_3^{r-1}$, \ldots, $v_n^{r-1}$ forms $C_2^{r-1}$. Without loss of generality, let $v_1^{r-1}\in \big\{v_i^{r-1}: \beta_{r-1}(v_i^{r-1})=\max\{\beta_{r-1}(v_i^{r-1}): i\in \{1,2\}\}\big\}$. Define the set of edges as $X_3=\big\{\overline{v_1^{r-1} v_j^{r-1}} : j\in [2, n]\big\}$. we define $\rho(e,r)$ for every $e\in E$ as follows:
     \[
\rho(e,r) =
\begin{cases}
0, & \text{if } e\in X_3 \\
1, & \text{if } e\in E\setminus X_3
\end{cases}
\]

The resulting graph has two connected components: the first, say $C_1^r$, is a clique of size 1, while the second, say $C_2^r$, is a connected component that is a clique of size $n-1$ (see Fig. \ref{fig:round_even}(B) for $n=4$).
\end{itemize}

\begin{lemma}\label{lm:tempo_connec}
The dynamic graph $\mathcal{G}$ in our construction maintains temporal connectivity.
\end{lemma}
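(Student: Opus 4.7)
The plan is to show that for every round $r_0 \in \mathbb{N}\cup\{0\}$ and every pair of distinct nodes $u, v \in V$ there is a journey from $u$ to $v$ starting at or after $r_0$, in fact one of length at most two. I will rely on three structural facts that are immediate from the construction: every snapshot $\mathcal{G}_r$ splits into exactly two connected components $A_r$ (``small'') and $B_r$ (``large''), each of which is a clique; $|A_r|\in\{1,2\}$, with $|A_r|=1$ when $r$ is even and $|A_r|=2$ when $r$ is odd, so $|B_r|\ge n-2\ge 2$ whenever $n\ge 4$; and consecutive small components overlap, i.e.\ any new element of $A_{r+1}$ is drawn from $B_r$, and when $A_r$ has two elements, one of them remains in $A_{r+1}$ while the other joins $B_{r+1}$.

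With these facts in hand I would argue by a short case analysis on the positions of $u$ and $v$ at round $r_0$. If they lie in the same component at $r_0$, the edge $\overline{u,v}$ is present immediately and yields a one-step journey. Otherwise one of them, say $u$, is in $A_{r_0}$ and the other, $v$, in $B_{r_0}$ (the reverse direction is handled the same way). For $r_0$ even, write $A_{r_0+1}=\{u,t\}$ with $t\in B_{r_0}$: either $t=v$ and we are done in one step at round $r_0+1$, or examining round $r_0+2$ shows that, regardless of which of $u$ or $t$ becomes the new singleton, either $\{t,v\}\subseteq B_{r_0+2}$ (giving the journey $u\xrightarrow{r_0+1} t\xrightarrow{r_0+2} v$) or $\{u,v\}\subseteq B_{r_0+2}$ (giving the one-step journey $u\xrightarrow{r_0+2} v$). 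For $r_0$ odd with $A_{r_0}=\{u,w\}$, inspecting which of $u,w$ is the singleton at $r_0+1$ produces either the one-step journey $u\xrightarrow{r_0+1} v$ (when $w$ is the singleton and $u$ joins $B_{r_0+1}$ alongside $v$), or the two-step journey $u\xrightarrow{r_0} w\xrightarrow{r_0+1} v$, which uses the edge inside $A_{r_0}$ at round $r_0$ and an edge inside $B_{r_0+1}$ at round $r_0+1$.

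The main obstacle I anticipate is just the bookkeeping, because the precise identity of the singleton and of the partner node $t$ at each round is chosen adversarially from the agents' current configuration (via the ``max'' rule in the construction), so every subcase must be checked for every such choice. However, the required bridge---either the shared element of two consecutive small components, or any element of $B_{r_0}\cap B_{r_0+1}$---always exists thanks to $|B_r|\ge 2$, so the desired length-two journey goes through in every subcase regardless of the adversarial choice, and temporal connectivity follows.
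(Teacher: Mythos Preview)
Your proposal is correct and follows essentially the same case analysis as the paper: split on whether $u,v$ lie in the same clique component, and if not, branch on whether the small component has one or two nodes and track where the pivot node goes in the next round or two. The only (cosmetic) difference is that in the odd case the paper, when $u$ remains the singleton at round $r_0+1$, simply reduces to the even case at round $r_0+1$, whereas you give the explicit two-step journey $u\xrightarrow{r_0} w\xrightarrow{r_0+1} v$; your version has the mild advantage of yielding a uniform length-two bound on journeys.
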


\begin{proof}
Consider any round $r \geq 0$ and any two nodes $u$ and $v$. If $u$ and $v$ are in the same connected component, then $\mathcal{J}(u, v, r) \neq 0$, since the component is a clique and $u$ and $v$ are directly connected. Suppose $u$ and $v$ are not in the same connected component. Without loss of generality, let node $u$ be in clique $C_1^r$, and node $v$ be in clique $C_2^r$. We divide the proof into two cases: Case (i) the size of clique $C_1^r$ is one, and Case (ii) the size of clique $C_1^r$ is two.

\begin{itemize}
    \item \textbf{Case (i):} As per the construction at round $r+1$, the adversary selects a node $w$ from $C_2^r$ and forms a clique $C_1^{r+1}$ with nodes $u$ and $w$. If $w = v$, then $\mathcal{J}(u, v, r) \neq 0$, since $C_1^{r+1}$ is a clique and $u$ and $v$ are directly connected. If $w \neq v$, then $v$ remains in $C_2^{r+1}$. At round $r+2$, the adversary selects a node $w'$ from $C_1^{r+1}$ and forms a clique $C_2^{r+2}$ with $w'$ and the nodes of $C_2^{r+1}$.

    \noindent \hspace{0.2cm}$\ast$ If $w' = u$, $\mathcal{J}(u, v, r) \neq 0$, as $u$ and $v$ are directly connected in the clique $C_2^{r+2}$.

    \noindent \hspace{0.2cm}$\ast$ If $w' \neq u$, $\mathcal{J}(u, v, r) \neq 0$, since there exists a journey from $u$ to $w'$ at round $r+1$ (as nodes $u$ and $w'$ are in clique $C_1^{r+1}$) and from $w'$ to $v$ at round $r+2$ (as nodes $w'$ and $v$ are in clique $C_2^{r+2}$), i.e., $\mathcal{J} = \{(\overline{u w'}, r+1), (\overline{w' v}, r+2)\}.$

    \item \textbf{Case (ii):} As per the construction at round $r+1$, the adversary selects a node $w$ from $C_1^{r}$ and forms a clique $C_2^{r+1}$ with $w$ and the nodes of $C_2^{r}$. If $w=u$, then $\mathcal{J}(u, v, r) \neq 0$, since $C_2^{r+1}$ is a clique and $u$ and $v$ are directly connected. If $w\neq u$, then it becomes Case (i) at round $r+1$. Since Case (i) is proved for every $r\geq 0$, therefore, $\mathcal{J}(u, v, r) \neq 0$.
\end{itemize}
In both cases, we have shown that $\mathcal{J}(u, v, r) \neq 0$. This completes the proof.
\end{proof}

\begin{theorem}\label{thm:disp_temporal}
In the constructed dynamic graph, for every round $r \geq 0$, there exists at least one node that has at least $p + 2$ agents. This holds even if nodes have infinite storage, and agents have infinite memory, full visibility, global communication, and complete knowledge of both $k$ and $n$.
\end{theorem}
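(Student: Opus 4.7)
The plan is to induct on $r$, maintaining the invariant that the small component $C_1^r$ carries enough agents to force some node in it to have $\geq p+2$ agents. Concretely, writing $T_r$ for the total number of agents in $C_1^r$ at the start of round $r$, I will prove that $T_r \geq p+2$ whenever $|C_1^r|=1$ (i.e.\ $r$ is $0$ or even under the construction) and $T_r \geq 2p+3$ whenever $|C_1^r|=2$ (i.e.\ $r$ is odd). Each invariant implies the conclusion at once: in the singleton case the lone node already holds $T_r \geq p+2$ agents, and in the two-node case the pigeonhole principle gives a node with at least $\lceil (2p+3)/2\rceil = p+2$ agents. The crucial structural fact driving the argument is that $C_1^r$ is disconnected from $C_2^r$ in $\mathcal{G}_r$, so no agent can cross the cut during round $r$ regardless of memory, visibility, communication range, or knowledge of $k$ and $n$; this is exactly why strengthening the agent model does not help.

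The base case $r=0$ is immediate from the hypothesis that some node starts with $\geq p+2$ agents, which the construction designates as $v_1^0$. For the even-to-odd step, the isolated node of $C_1^r$ retains all $T_r$ of its agents, contributing $T_r \geq p+2$ to $T_{r+1}$; the second node of $C_1^{r+1}$ is selected by the adversary as a maximum-count node of $C_2^r$ at the end of round $r$, and by the pigeonhole principle (applied to $k - T_r$ agents spread over $n-1$ nodes) this contributes at least $\lceil (k-T_r)/(n-1)\rceil$ agents. Hence $T_{r+1} \geq f(T_r)$ where $f(a) = a + \lceil (k-a)/(n-1)\rceil$. The function $f$ is non-decreasing in $a$ (each unit increase in $a$ drops the ceiling by $0$ or $1$), so it suffices to evaluate $f(p+2)$: substituting $k = pn+q$ gives $(k-p-2)/(n-1) = p + (q-2)/(n-1)$, and the bounds $3 \leq q \leq n-1$ force $(q-2)/(n-1) \in (0,1)$, so the ceiling equals exactly $1$ and $f(p+2) = 2p+3$. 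For the odd-to-even step, the two-node $C_1^r$ retains $T_r \geq 2p+3$ agents internally; the adversary picks the more populous node to remain isolated as $C_1^{r+1}$, so $T_{r+1} \geq \lceil T_r/2\rceil \geq p+2$, closing the induction.

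The main obstacle is the ceiling arithmetic in the even-to-odd step, where the hypothesis $q \geq 3$ is doing essential work: it is what forces $\lceil (q-2)/(n-1)\rceil = 1$ rather than $0$, preventing the invariant from decaying round by round, while $q \leq n-1$ keeps this ceiling from exceeding $1$ (a sharper bound would be fine but would complicate the bookkeeping). Everything else is routine; agent behavior enters the argument only through the freedom to redistribute within a component, and since the adversary always selects the component-wise maximum, the worst case for the adversary is perfect internal balancing by the agents, which is precisely what the pigeonhole bound captures. Consequently the argument is completely independent of the agents' protocol, memory, visibility, or parameter knowledge, which is why the impossibility persists under the strong model stated in the theorem.
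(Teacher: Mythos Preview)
Your proof is correct and follows essentially the same approach as the paper: both exploit that $C_1^r$ is a cut-off component whose total agent count is conserved through round $r$, then use the adversary's maximum-selection rule plus pigeonhole to propagate the ``too many agents'' condition. Your packaging is somewhat cleaner---you track the total $T_r$ in $C_1^r$ and isolate the monotone function $f(a)=a+\lceil (k-a)/(n-1)\rceil$, whereas the paper tracks the count at the designated node directly and works through the inequality $\beta_1(v_1^0)\geq p+\lceil ((n-2)q_1+q)/(2(n-1))\rceil$ with an explicit excess parameter $q_1$---but the substance is identical.
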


\begin{proof}
Suppose there exists a deterministic algorithm \( \mathcal{A} \) that solves dispersion in our construction. In our construction, at round \( r = 0 \), node \( v_1^0 \) forms component \( C_1^0 \) and have at least \( p+2 \) agents. The remaining nodes \( v_2^0, \ldots, v_n^0 \) form a clique (component \( C_2^0 \)). Since no movement can occur between \( C_1^0 \) and \( C_2^0 \), the number of agents at node \( v_1 \) remains unchanged at the end of round 0. Therefore, the dispersion is not achieved at round $r=0$.

Without loss of generality, let \( v_2^0\) be the node with the maximum number of agents at the end of round 0 in $C_2^{0}$; that is, \( \beta_0(v_2^0) \geq \beta_0(v_i^0) \) for all \( i \in [3,n] \). As per our construction at round 1, we create two components: \( C_1^1 \) consisting of \( v_1^0 \) and \( v_2^0 \) (a clique of size 2), and \( C_2^1 \) consisting of the remaining nodes \( v_3^0, \ldots, v_n^0 \) (a clique of size \( n-2 \)). Let \( \beta_0(v_1^0) = p + q_1 \), \( q_1 \geq 2 \) since \( v_1^0 \) has at least \( p+2 \) agents. The total number of agents in \( C_2^0 \) is \( pn + q - (p + q_1) = p(n-1) + (q - q_1) \). Let $L=p(n-1) + (q - q_1)$. Then,

\[
\beta_0(v_2^0) \geq \frac{L}{n-1}
\]

Without loss of generality, let \( \beta_1(v_1^0) \geq \beta_1(v_2^0) \), we have the following inequality.

\begin{equation}
\label{eq:avg}
\beta_1(v_1^0) \geq \left\lceil \frac{\beta_0(v_1^0) + \beta_0(v_2^0)}{2} \right\rceil
\end{equation}

Using \( \beta_0(v_1^0) = p + q_1 \) and the bound on \( \beta_0(v_2^0) \), we derive the following from Eq.~\ref{eq:avg}.

\begin{equation}
\label{eq:ineq}
\beta_1(v_1^0) \geq p + \left\lceil \frac{(n-2)q_1 + q}{2(n-1)} \right\rceil
\end{equation}

Since \( q \geq 3 \) and \( q_1 \geq 2 \), we have the following inequality

\begin{align*}
\beta_1(v_1^0) &\geq p + \left\lceil \frac{2(n-2) + 3}{2(n-1)} \right\rceil =p+ \left\lceil 1+\frac{1}{2(n-1)} \right\rceil \implies \beta_1(v_1^0) \geq p + 2
\end{align*}

Hence, at the end of round \( r=1 \), there exists a node in \( C_1^1 \) with at least \( p+2 \) agents, so dispersion has not been achieved. At round \( r = 2 \), that node $v_1^0$ becomes isolated again, and forms a clique $C_1^2$ of size 1, like \( v_1^0 \) in round 0. Thus, the situation repeats. By applying this reasoning inductively, dispersion is never achieved in any round \( r \geq 0 \) as there is at least one node with $p+2$ agents.

Since this construction is independent of node storage, the agents' memory, visibility, communication model, or any knowledge, no such algorithm \( \mathcal{A} \) can exist. This completes the proof.
\end{proof}

\subsection{Impossibility with every node with at most $\lceil k/n\rceil $ agents}
We consider a clique of size $n \geq 6$ as the footprint $G$ of the dynamic graph $\mathcal{G}$. The number of agents is $k = pn + q$, where $p,q \in \mathbb{N} \cup \{0\}$ and $3 \leq q \leq n - 3$. At round $r$, $\alpha_r(u)$ denotes the number of agents at $u$ at the beginning of the round, and $\beta_r(u)$ denotes the number at the end. At round $r$, we denote nodes as $v_1^r$, $v_2^r$, \ldots, $v_n^r$.

The key idea behind our construction remains the same as in Section~\ref{sec:temp_1}, with a minor modification. We ensure that in every round, the dynamic graph consists of two connected components such that either some node has fewer than $p$ agents or some node has at least $p + 2$ agents. The construction of the dynamic graph $\mathcal{G}$ is as follows.
 
\vspace{0.2cm}
\noindent \textbf{Construction of the dynamic graph $\mathcal{G}$:} Suppose $k$ agents are placed arbitrarily at nodes at round $r=0$. At round $0$, there is at least one node which has at most $p-1$ agents. It is important that if $p=0$, then the undispersed configuration has a node with at least $p+2=2$ agents. Therefore, in this section, we can assume $p\geq 1$. For $p=0$, we can use the result from Section \ref{sec:temp_1}. So, for the rest of this section, we assume $p \geq 1$. Without loss of generality, let $v_1^0\in \big\{v_i^0: \alpha_0(v_i^0)=\min\{\alpha_0(v_i^0): i\in [1,n]\}\big\}$. 

\begin{itemize}
    \item $\bm{r =0}$: Define the set of edges as $X_1=\big\{\overline{v_1^0 v_j^0} : j\in [2, n]\big\}$. At round $r$, we define $\rho(e,r)$ for every $e\in E$ as follows:
     \[
\rho(e,r) =
\begin{cases}
0, & \text{if } e\in X_1 \\
1, & \text{if } e\in E\setminus X_1
\end{cases}
\]

 The resulting graph has two connected components: the first, say $C_1^0$, is a clique of size 1, while the second, say $C_2^0$, is a connected component that is a clique of size $n-1$.

    \item $\bm{r=2i-1, i\in \mathbb{N}}$: At the end of round $r - 1$, the graph consists of two connected component $C_1^{r-1}$ and $C_2^{r-1}$, where $C_1^{r-1}$ is a clique of size 1, and $C_2^{r-1}$ is a clique of $n-1$ nodes. Let node $v_1^{r-1}$ forms $C_1^{r-1}$, and nodes $v_2^{r-1}$, $v_3^{r-1}$, \ldots, $v_n^{r-1}$ forms $C_2^{r-1}$. If there is a node with at least $p+2$ agents in $C_2^{r-1}$ at the end of round $r-1$, then the adversary considers round $r=2i-1$ as $r=0$, and starts using the construction mentioned in Section \ref{sec:temp_1}. Since we have already shown in Section \ref{sec:temp_1} that at every round $r$, there is a node with at least $p+2$ agents, therefore dispersion is not achieved for every round $r\geq 2i-1$. Otherwise, if there is no node with at least $p+2$ agents in $C_2^{r-1}$ at the end of round $r-1$, then it constructs $\mathcal{G}_r$ as follows.
    
    Without loss of generality, let $v_2^{r-1}\in \big\{v_i^{r-1}: \beta_{r-1}(v_i^{r-1})=\min\{\beta_{r-1}(v_i^{r-1}): i\in [2,n]\}\big\}$. Define the set of edges as $X_2=\big\{\overline{v_i^{r-1} v_j^{r-1}} : i\in [1,2], \;\text{and }j\in [3, n]\big\}$. we define $\rho(e,r)$ for every $e\in E$ as follows:
     \[
\rho(e,r) =
\begin{cases}
0, & \text{if } e\in X_2 \\
1, & \text{if } e\in E\setminus X_2
\end{cases}
\]

The resulting graph has two connected components: the first, say $C_1^r$, contains a clique of size 2, while the second, say $C_2^r$, is a connected component that is a clique of size $n-2$.

    \item $\bm{r=2i, i\in \mathbb{N}}$:  At the end of round $r - 1$, the graph consists of two connected component $C_1^{r-1}$ and $C_2^{r-1}$, where $C_1^{r-1}$ is a clique of two nodes, and $C_2$ is a clique of $n-2$ nodes. Let nodes $v_1^{r-1}$ and $v_2^{r-1}$ forms $C_1^{r-1}$, and nodes $v_3^{r-1}$, \ldots, $v_n^{r-1}$ forms $C_2^{r-1}$. Without loss of generality, let $v_1^{r-1}\in \big\{v_i^{r-1}: \beta_{r-1}(v_i^{r-1})=\min\{\beta_{r-1}(v_i^{r-1}): i\in \{1,2\}\}\big\}$. Define the set of edges as $X_3=\big\{\overline{v_1^{r-1} v_j^{r-1}} : j\in [2, n]\big\}$. we define $\rho(e,r)$ for every $e\in E$ as follows:
     \[
\rho(e,r) =
\begin{cases}
0, & \text{if } e\in X_3 \\
1, & \text{if } e\in E\setminus X_3
\end{cases}
\]

The resulting graph has two connected components: the first, say $C_1^r$, is a clique of size 1, while the second, say $C_2^r$, is a connected component that is a clique of size $n-1$.
\end{itemize}

\begin{lemma}
The dynamic graph $\mathcal{G}$ in our construction maintains temporal connectivity.
\end{lemma}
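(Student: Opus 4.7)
The plan is to reuse, almost verbatim, the argument of Lemma~\ref{lm:tempo_connec}. The reason is that the topological evolution of the snapshots in this section is identical in shape to that of Section~\ref{sec:temp_1}: every $\mathcal{G}_r$ consists of two disjoint cliques whose sizes alternate between $(1,n-1)$ (when $r$ is even) and $(2,n-2)$ (when $r$ is odd). The only structural difference is which specific node the adversary selects to isolate --- the minimum-agent node rather than the maximum-agent node --- but the proof of Lemma~\ref{lm:tempo_connec} never used agent counts. It relied only on the alternation of clique sizes and on the fact that the singleton (resp.\ the pair) in round $r$ is merged with a node of the large clique in round $r+1$.

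Concretely, I would fix any two nodes $u,v \in V$ and any starting round $r \geq 0$. If $u$ and $v$ lie in the same component of $\mathcal{G}_r$, the edge $\overline{uv}$ is present at round $r$ since that component is a clique, so $\mathcal{J}(u,v,r) \neq \emptyset$. Otherwise I apply the same case split as in Lemma~\ref{lm:tempo_connec}: Case~(i) when $u$'s component has size one, and Case~(ii) when it has size two. Case~(i) finishes in one round if the round-$(r+1)$ merger happens to pair $u$ with $v$, and in two rounds otherwise, because round $r+2$ merges one node of the current pair into the large component, yielding a journey of the form $\{(\overline{uw'}, r+1),(\overline{w'v}, r+2)\}$. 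Case~(ii) reduces to Case~(i) at round $r+1$. None of these steps reference the adversary's selection rule.

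The one genuinely new ingredient is the adversary's optional switch, at an odd round $r^{*} = 2i-1$, to the construction of Section~\ref{sec:temp_1}, which is triggered when $C_2^{r^{*}-1}$ already contains a node with at least $p+2$ agents. I would handle this cleanly as follows: for any query $(u,v,r')$ with $r' \geq r^{*}$ temporal connectivity is immediate from Lemma~\ref{lm:tempo_connec} applied to the Section~\ref{sec:temp_1} construction, which the adversary is now running from its own ``round $0$''. For $r' < r^{*}$, the two-round journey described above either stays entirely inside the interval $[r', r^{*}-1]$ or spills at most one round past $r^{*}$; since both regimes use exactly the same alternating clique topology, the merger relations invoked in the case split remain valid across the boundary.

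I do not foresee a genuine obstacle. The whole argument is purely structural, and the switch is a harmless cosmetic change in which node is selected. The only care required is to phrase the case split so that it is independent of the adversary's selection rule, and to verify that the snapshot at the switch round is consistent with the one immediately before it --- which it is, since both constructions produce the same $(1,n-1)$-split at even rounds and the same $(2,n-2)$-split at odd rounds.
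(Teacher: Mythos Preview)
Your proposal is correct and matches the paper's approach exactly --- the paper's entire proof is the single sentence ``The proof is similar to Lemma~\ref{lm:tempo_connec}.'' One small imprecision in your treatment of the switch: when the adversary restarts the Section~\ref{sec:temp_1} construction at the odd round $r^{*}$, that round is treated as the new ``round~$0$'' and hence produces a $(1,n-1)$ split, so rounds $r^{*}-1$ and $r^{*}$ are \emph{both} $(1,n-1)$ and the parity you claim is preserved actually shifts by one; the conclusion is unaffected (the old singleton either stays isolated one more round, reducing to Case~(i) again, or lands in the big clique, which is strictly easier), but the sentence ``the merger relations \ldots\ remain valid across the boundary'' is not literally true as written.
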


\begin{proof}
The proof is similar to Lemma \ref{lm:tempo_connec}.
\end{proof}

\begin{theorem}\label{thm:disp_temporal_1}
In the constructed dynamic graph, $k$-balanced dispersion is not achieved at any round $r \geq 0$. This holds even if nodes have infinite storage, and agents have infinite memory, full visibility, global communication, and complete knowledge of both $k$ and $n$.
\end{theorem}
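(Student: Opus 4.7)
The plan is to adapt the argument of Theorem~\ref{thm:disp_temporal}, but to track a lightly loaded node (with fewer than $\lfloor k/n\rfloor = p$ agents) rather than a heavily loaded one, since the hypothesis now caps every node at $\lceil k/n\rceil = p+1$. A pigeonhole count on the total $k = pn+q \geq pn+1$ shows that any undispersed configuration with every node at $\leq p+1$ agents must contain at least one node with $\leq p-1$ agents, and the construction designates such a node as $v_1^0$. The core of the proof is the inductive invariant: at the end of every round $r \geq 0$, either (i) some node in $\mathcal{G}_r$ holds $\geq p+2$ agents, or (ii) some node in $\mathcal{G}_r$ holds $\leq p-1$ agents. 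Either alternative certifies that the configuration is undispersed, and the theorem follows once temporal connectivity is established, which is identical to Lemma~\ref{lm:tempo_connec} since the topological schedule has not changed.

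Alternative (i) is self-maintaining: as soon as any node attains $\geq p+2$ agents, the construction explicitly reboots to the schedule of Section~\ref{sec:temp_1}, and Theorem~\ref{thm:disp_temporal} guarantees that (i) then persists in every subsequent round. Hence I only need to verify that alternative (ii) propagates as long as no node ever exceeds $p+1$ agents. The base case is trivial: $v_1^0$ is isolated in $\mathcal{G}_0$ and therefore retains its $\leq p-1$ agents. In an even inductive step $r = 2i$, the construction isolates exactly the node of $C_1^{r-1}$ that already has the smaller (hence $\leq p-1$) agent count, which is preserved because the singleton component admits no movement. The genuinely new work lies in odd rounds $r = 2i-1$, where the isolated $v_1^{r-1}$ (with $\leq p-1$ agents) is merged with the minimum-load node $v_2^{r-1}$ of $C_2^{r-1}$ into a 2-clique $C_1^r$. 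Since $\beta_r(v_1^{r-1}) + \beta_r(v_2^{r-1}) = \beta_{r-1}(v_1^{r-1}) + \beta_{r-1}(v_2^{r-1})$, it suffices to prove the pair-sum is at most $2p-1$; the smaller of the two would then be at most $\lfloor (2p-1)/2 \rfloor = p-1$, carrying (ii) forward.

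This pair-sum bound is the only real obstacle, and it is precisely where the hypothesis $q \leq n-3$ is used. Write $\beta_{r-1}(v_1^{r-1}) = p - s$ with $s \geq 1$, so the total agent count in $C_2^{r-1}$ equals $p(n-1) + q + s$. I will argue by contradiction: suppose the minimum $\beta_{r-1}(v_2^{r-1}) = p+1$. Because $p+1$ is also the global cap in this branch, every node of $C_2^{r-1}$ must then carry exactly $p+1$ agents, forcing the total to equal $(n-1)(p+1)$. Equating the two expressions yields $s = n-1-q$, which is $\geq 2$ under $q \leq n-3$. Consequently $\beta_{r-1}(v_2^{r-1}) = p+1$ forces $s \geq 2$, so the pair-sum is at most $(p-2)+(p+1) = 2p-1$; and if $s = 1$ we must have $\beta_{r-1}(v_2^{r-1}) \leq p$, giving the sum $(p-1)+p = 2p-1$. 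In every case the bound holds, the invariant propagates, and therefore $k$-balanced dispersion is never achieved, regardless of node storage, agent memory, visibility, communication range, or knowledge of $k$ and $n$.
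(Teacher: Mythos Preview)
Your proof is correct and follows essentially the same approach as the paper: maintain the invariant that at the end of every round either some node carries $\geq p+2$ agents (whereupon the adversary reboots to the schedule of Section~\ref{sec:temp_1}) or the isolated side carries a node with $\leq p-1$ agents, with the crux being the pair-sum bound at odd rounds, obtained via the same computation $s = n-1-q \geq 2$ under $q \leq n-3$. If anything, your explicit inductive framing is cleaner than the paper's ``the situation repeats'' phrasing. One cosmetic point: your case split for the pair-sum reads as $[\beta_{r-1}(v_2^{r-1})=p+1]$ versus $[s=1]$, which is not literally exhaustive; the natural dichotomy is on whether $\beta_{r-1}(v_2^{r-1})=p+1$ or $\beta_{r-1}(v_2^{r-1})\leq p$, the latter giving the sum $\leq (p-s)+p\leq 2p-1$ immediately for all $s\geq 1$.
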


\begin{proof}
Suppose there exists a deterministic algorithm \( \mathcal{A} \) that solves dispersion in our construction. In our construction, at round \( r = 0 \), node \( v_1^0 \) forms component \( C_1^0 \) and have at most \( p-1 \) agents. The remaining nodes \( v_2^0, \ldots, v_n^0 \) form a clique (component \( C_2^0 \)). Since no movement can occur between \( C_1^0 \) and \( C_2^0 \), the number of agents at node \( v_1 \) remains unchanged at the end of round 0. Therefore, the dispersion is not achieved at round $r=0$.

If at the end of round $r=0$, there exists a node in $C_2^0$ with at least $p+2$ agents, the adversary treats round $r = 1$ as the new starting round $r = 0$ and proceeds with the construction described in Section~\ref{sec:temp_1}. As shown in Theorem~\ref{thm:disp_temporal}, this ensures that in every round $r$, at least one node has at least $p+2$ agents, preventing dispersion for all $r \geq 1$. It remains to handle the case where no node in $C_2^0$ has at least $p+2$ agents at the end of round $0$, i.e., every node has at most $p+1$ agents. We now show that even in this case, dispersion is not achieved at the end of round $r = 1$.

Without loss of generality, let \( v_2^0\) be the node with the least number of agents at the end of round 0 in $C_2^{0}$; that is, \( \beta_0(v_2^0) \leq \beta_0(v_i^0) \) for all \( i \in [3,n] \). As per our construction at round 1, we create two components: \( C_1^1 \) consisting of \( v_1^0 \) and \( v_2^0 \) (a clique of size 2), and \( C_2^1 \) consisting of the remaining nodes \( v_3^0, \ldots, v_n^0 \) (a clique of size \( n-2 \)). Let \( \beta_0(v_1^0) = p - q_1 \), \( q_1 \geq 1 \) since \( v_1^0 \) has fewer than \( p \) agents. The value of $\beta_0(v_2^0)$ has two choices, either $\leq p$ or $p+1$, as every node of $C_2^0$ has at most $p+1$ agents at the end of round $r=0$. We discuss each case for the value of $\beta_0(v_2^0)$ one by one as follows.

\begin{itemize}
    \item If $\beta_0(v_2^0)\leq p$, then at the end of round $r=1$ in $C_1^1$, there is a node which has $<p$ agents as $\beta_0(v_2^0)\leq p$ and $\beta_0(v_1^0)<p$.
    \item If $\beta_0(v_2^0)=p+1$, then at the end of round $r=0$, $\beta_0(v_i^0)=p+1$ for every $i\in [3,n]$. Therefore, the total agents in $C_2^0$ is $(p+1)(n-1)$. The value of $k$ is nothing but the sum of agents in $C_1^0$ and agents in $C_2^0$. 

   \begin{align*}
       np+q&=(n-1)(p+1)+p-q_1\implies q=n-1-q_1\\
       \implies n-3&\geq n-1-q_1 \;\; \text{as }\;\; q\leq n-3  \\
       \therefore\;\; q_1\geq 2 
   \end{align*}

   Since at the beginning of round $r=1$ in $C_1^1$, $\beta_0(v_1^0)\leq p-2$, and  $\beta_0(v_2^0)=p+1$, at the end of round $r=1$ in $C_1^r$, there is a node which has fewer than $p$ agents.
\end{itemize}

Hence, at the end of round \( r=1 \), there exists a node, say $v$, in \( C_1^1 \) with fewer than \( p \) agents, so dispersion has not been achieved. At round \( r = 2 \), node $v$ becomes isolate. Thus, the situation repeats. By applying this reasoning inductively, dispersion is never achieved in any round \( r \geq 0 \).

Since this construction is independent of node storage, the agents' memory, visibility, communication model, or any knowledge, no such algorithm \( \mathcal{A} \) can exist. This completes the proof.
\end{proof}

Based on Theorem \ref{thm:disp_temporal} and Theorem \ref{thm:disp_temporal_1}, we have the following theorem.

 \begin{theorem}\label{thm:final_temporal}
    The problem of $k$-balanced dispersion is impossible to solve in temporally connected dynamic graphs. This holds even if nodes have infinite storage, and agents have infinite memory, full visibility, global communication, and complete knowledge of both $k$ and $n$.
\end{theorem}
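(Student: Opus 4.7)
My plan is to derive Theorem \ref{thm:final_temporal} as a direct corollary of Theorems \ref{thm:disp_temporal} and \ref{thm:disp_temporal_1} rather than construct anything new. To show impossibility it is enough to exhibit, for suitable parameters, a single temporally connected dynamic graph together with a starting placement that no deterministic algorithm can resolve; the two preceding theorems already do this for complementary types of initial placements, so the argument reduces to a case split on the starting configuration.

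Concretely, I would fix $k = pn+q$ with $3 \leq q \leq n-3$ and $n \geq 6$ (chosen so that both theorems apply) and consider an arbitrary undispersed initial placement of $k$ agents on the clique footprint of size $n$. Being undispersed means some node holds more than $\lceil k/n \rceil = p+1$ agents or some node holds fewer than $\lfloor k/n \rfloor = p$. In the former case, some node has at least $p+2$ agents and the adversary simply deploys the sequence $\mathcal{S}_{\mathcal{G}}$ of Section \ref{sec:temp_1}, whereupon Theorem \ref{thm:disp_temporal} yields the conclusion verbatim. In the latter case, some node has at most $p-1$ agents and Theorem \ref{thm:disp_temporal_1} applies directly; note that its construction already incorporates a switch back to the Section \ref{sec:temp_1} schedule whenever a node later accumulates $p+2$ agents, so no separate bookkeeping is needed.

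The one thing to be careful about is the quantifier order in the theorem statement: the claim is that the problem is not solvable in general, which is discharged by producing \emph{some} instance on which every algorithm fails, not \emph{every} instance. I would therefore state the corollary as follows and keep the proof short: choose any $(n,k)$ with $n \geq 6$ and $k = pn + q$, $3 \leq q \leq n-3$ (such pairs exist for all sufficiently large $n$), and any undispersed starting configuration; then Theorems \ref{thm:disp_temporal} and \ref{thm:disp_temporal_1} together supply a temporally connected dynamic graph in which dispersion is never achieved. Since both constructions are oblivious to node storage, agent memory, visibility radius, communication model, and knowledge of $k$ and $n$, the impossibility transfers under all those strengthenings, completing the proof. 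I do not expect a genuine obstacle here; the only subtlety is making the case split clean and pointing out that the parameter ranges of the two lemmas overlap nontrivially, which prevents the combined claim from being vacuous.
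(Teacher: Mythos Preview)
Your proposal is correct and matches the paper's approach exactly: the paper states Theorem~\ref{thm:final_temporal} with only the sentence ``Based on Theorem~\ref{thm:disp_temporal} and Theorem~\ref{thm:disp_temporal_1}, we have the following theorem'' and gives no further proof. Your write-up actually supplies more detail than the paper does, making the case split on undispersed configurations explicit and noting the overlap of parameter ranges, but the underlying argument is identical.
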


\section{Impossibility and lower bound for $\ell$-bounded 1-interval connected graphs}\label{sec:imp_l-bdd}
In this section, we provide impossibility results of the dispersion problem in $\ell$-bounded 1-interval connected graphs, where $\ell\geq 1$.

\begin{figure}
    \centering
    \includegraphics[width=0.27\linewidth]{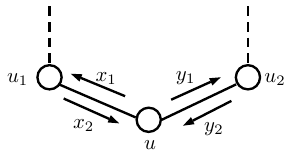}
    \caption{Movement of agents at round $r$ at nodes $u$, $u_1$, and $u_2$.}
    \label{fig:ring}
\end{figure}

\begin{theorem}\label{thm:imp_1hop}  
    It is impossible to solve $k$-balanced dispersion in 1-bounded 1-interval connected graphs for any $n \geq 4$ if agents have only 1-hop visibility and f-2-f communication. The impossibility holds for all undispersed initial configurations of agents, even when nodes have infinite storage, agents have infinite memory and know the values of $n$ and $k$.
\end{theorem}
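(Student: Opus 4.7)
The plan is to build a 1-bounded 1-interval connected dynamic graph on a cycle footprint $C_n$ ($n\geq 4$) together with an adaptive adversary that prevents $k$-balanced dispersion forever when agents use only f-2-f communication and 1-hop visibility. Admissibility is immediate: removing any single edge of $C_n$ leaves a Hamiltonian path, so every snapshot is connected and $|\overline{E}_r|=1$. For the initial placement of $k=pn$ agents I would pick a canonical undispersed witness, e.g.\ $p+1$ agents at some node $u$, $p-1$ at a node $u^*$ at distance at least $2$ from $u$, and exactly $p$ elsewhere; any other undispersed starting configuration can be handled analogously by targeting its specific deficient node.

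The adversary maintains the invariant that at the end of every round some node holds a number of agents different from $p$. Because the algorithm is deterministic and the adversary knows agent IDs, memories and positions, before each round it can simulate the round under each candidate single-edge removal and commit to the one most favorable to the invariant. Two structural properties of the model are exploited. First, under f-2-f communication an agent's round-$r$ move depends only on its own memory and on the 1-hop view at its current node, so the agents intending to enter a given deficient node $v^*$ in round $r$ split into two groups, one at each cycle-neighbor of $v^*$, whose round-$r$ decisions are made without knowledge of each other's current local snapshots. Second, with only 1-hop visibility an agent at distance $\geq 2$ from the removed edge cannot even detect its absence, let alone reroute, within the same round. Consequently, cancelling the inflow from one side of $v^*$ does not let the other side compensate.

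The technical core is a round-by-round accounting. Let $a_L, a_R$ denote the number of agents scheduled to enter $v^*$ from its two cycle-neighbors under full connectivity; the adversary removes whichever of the two incident cycle-edges carries the larger inflow (ties broken arbitrarily). If the surviving smaller flow does not push $v^*$ up to the balanced count $p$, the invariant holds trivially. Otherwise the supplying neighbor, which began the round with at most $p$ agents, must end it strictly below $p$, so the deficit simply migrates to a new node; either way some node ends the round off-balance. Induction on $r$ then gives the impossibility.

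The main obstacle I expect is the corner case in which the algorithm attempts, within a single round, a coordinated circular permutation of agents around the whole cycle, since such a rotation could in principle repair all deficits and surpluses at once. I plan to handle this by observing that a cyclic rotation uses every cycle-edge in the same orientation, so deleting any single edge breaks the rotation and simultaneously creates a fresh deficit at one endpoint of the removed edge and a fresh surplus at the other, keeping the invariant alive. Lifting the argument from $C_n$ to arbitrary $n\geq 4$ requires only that $n\geq 4$ guarantees the existence of a non-adjacent pair $u,u^*$, which is immediate.
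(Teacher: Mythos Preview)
Your skeleton is right (cycle footprint, $k=pn$, single edge removal, focus on a deficient node), but the round accounting has a genuine gap that the paper's proof closes and yours does not.

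You track only gross inflows $a_L,a_R$ into $v^*$ and delete the edge carrying the larger one. But under 1-hop visibility the agents at $v^*$ itself, and at the far endpoint of the deleted edge, \emph{see} the deletion in that very round and may legally change every outgoing decision. So after you cut, say, the $R$-edge, you no longer know what $v^*$ sends to $w_L$, and hence you cannot conclude that ``the surviving smaller flow does not push $v^*$ up to $p$''; nor can you conclude anything about $w_L$'s final count without knowing $v^*$'s (possibly changed) outflow to it. Your fallback ``otherwise the supplying neighbor, which began the round with at most $p$ agents, must end it strictly below $p$'' is unsupported: you have not bounded that neighbor's starting load (in a general undispersed configuration it need not be $\leq p$), and you have not tracked its flow with its \emph{other} neighbor.

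The paper's fix, which is also exactly where the hypothesis $n\geq 4$ is used, is to reason about the neighbor $u_1$ on the \emph{non-deleted} side. Because $n\geq 4$, $u_1$ is at distance $\geq 2$ from the removed edge, so its 1-hop view is unchanged and its actions are pinned to the full-graph simulation: it still sends $x_2$ into $u$ and still exchanges the same amounts with its other neighbor. This forces a dilemma on the deficient node $u$: if $u$ (which \emph{can} see the deletion) alters the number it ships to $u_1$, then $u_1$ ends off $p$; if $u$ ships the same amount, then $u$ ends with $z+(x_2-x_1)$, which cannot equal $p$ because the full-graph balance required a strictly positive net contribution $(y_2-y_1)>0$ from the now-severed side. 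Note this is a net-flow argument ($x_2-x_1$, $y_2-y_1$), not gross inflow; your $a_L,a_R$ bookkeeping loses the outflows and with them the contradiction. The ``cyclic rotation'' worry you flag is a non-issue once you argue via net flows: a pure rotation has all net flows zero and can never raise a node from $<p$ to $p$.
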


\begin{proof}
    Consider the ring of size $n$ as a footprint $G$. Let $k=pn$ agents be in an undispersed configuration in $G$, where $p\in \mathbb{N}$.\footnote{If $k = pn$, then in a dispersed configuration, every node has $p$ agents.} If the agents have 1-hop visibility, then as per our visibility model, the agent can sense the missing edge corresponding to some port $\lambda$. The movement of the agents depends on 1-hop visibility and f-2-f communication. We prove our result using a contrapositive argument. Suppose there is a deterministic algorithm $\mathcal{A}$ which solves dispersion.

Since the adversary knows the algorithm $\mathcal{A}$, it can precompute the agents' movement in $\mathcal{G}_r$ for every round $r$. At round $r$, considering $\rho(e, r) = 1$ for all $e \in E$, if it precomputes that it does not lead to a dispersed configuration, then it keeps $\mathcal{G}_r=G$ at the beginning of round $r$. Otherwise, it sets $\rho(e_j, r) = 0$ for a chosen edge $e_j \in E$, and $\rho(e, r) = 1$ for every $e\in E\setminus \{e_j\}$. How the adversary selects such an edge $e_j$ at round $r$ is described below.

Since agents are not in a dispersed configuration at the beginning of round $r$, there is a node $u$ which has at most $p-1$ agents. Let $u_1$ and $u_2$ be two neighbours of node $u$ in $G$. Let $z$ ($\leq p-1$) agents be at node $u$. If $\mathcal{G}_r=G$, then the following is the precomputation of the agents as per algorithm $\mathcal{A}$: let $x_1$ agents move from node $u$ to $u_1$, $y_1$ agents move from node $u$ to $u_2$, $x_2$ agents move from node $u_1$ to $u$, and $y_2$ agents move from node $u_2$ to $u$ (see Fig. \ref{fig:ring}). And accordingly, at the end of round $r$, there are $z+(x_2-x_1)+(y_2-y_1)=p$ agents at node $u$ in $G$. The adversary knows the precomputed values of $x_1, x_2, y_1, y_2$. It is important to note that $(x_2-x_1)\leq 0$ and $(y_2-y_1)\leq 0$ are not possible due to the following reason. If $(x_2-x_1)\leq 0$ and $(y_2-y_1)\leq 0$, then the value $z+(x_2-x_1)+(y_2-y_1)\leq z<p$ as $z<p$. Therefore, either $(x_2-x_1)> 0$ or $(y_2-y_1)> 0$. Without loss of generality, let $(y_2-y_1)> 0$. In this case, at the beginning of round $r$, the adversary constructs $\mathcal{G}_r$ as follows: $\rho(\overline{uu_2},r)=0$, and $\rho(e,r)=1$ for every $e\in E\setminus \{\overline{uu_2}\}$. Let's call this new configuration $G'$. Note that the 1-hop view of agents at node $u_1$ in $G'$ is the same as the 1-hop view at node $u_1$ in $G$, and the 1-hop view of agents at node $u$ and $u_2$ changes at round $r$. Since f-2-f communication is present, agents at node $u_1$ can not get the information about the configuration change, and hence the movement of agents at node $u_1$ in $G'$ remains the same as in $G$ at round $r$. Since the 1-hop view of agents at node $u$ and $u_2$ has changed, they can change their decision based on algorithm $\mathcal{A}$. In this case, if more than $x_1$ or fewer than $x_1$ agents move from $u$ to $u_1$ in $G'$, then at node $u_1$, the number of agents is not $p$ due to the fact that the 1-hop view of node $u_1$ in $G'$ is the same as in $G$, and $n\geq 4$. Therefore, the number of agents at node $u$ in $G'$ is $z+(x_2-x_1)$. If $z+(x_2-x_1)=p$, then $z+(x_2-x_1)=z+(x_2-x_1)+(y_2-y_1) \implies (y_2-y_1)=0$. This gives a contradiction. Therefore, at the end of round $r$, the number of agents at node $u$ in $G'$ is not $p$. Hence, the dispersion is not achieved in $G'$ at the end of round $r$.   

Since in every round $r\geq 0$, $\overline{E}_r\leq 1$, and in every round $\mathcal{G}_r$ is connected. Therefore, $\mathcal{G}$ is a 1-bounded 1-interval connected graph. Also, this proof is independent of nodes having infinite storage, agents having infinite memory and knowledge of $n$, $k$. This completes the proof. 
\end{proof}

\begin{theorem}\label{thm:imp_global}
    It is impossible to solve $k$-balanced dispersion in 1-bounded 1-interval connected graphs for any $n \geq 3$ if agents have only 0-hop visibility and global communication. The impossibility holds for all undispersed initial configurations of agents, even when agents have infinite memory and know the values of $n$ and $k$.
\end{theorem}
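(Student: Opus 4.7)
The plan is to adapt the adversarial strategy used for Theorem \ref{thm:imp_1hop}, exploiting the fact that $0$-hop visibility denies every agent any a priori information about which of its incident edges is missing in the current round. I would again fix the footprint $G$ to be a ring of size $n$, place $k = pn$ agents in an arbitrary undispersed initial configuration, and assume for contradiction that a deterministic algorithm $\mathcal{A}$ solves $k$-balanced dispersion with $0$-hop visibility and global communication. Since the adversary knows $\mathcal{A}$ and the public history of every agent, at the start of each round $r$ it can precompute the move decisions every agent would make if the snapshot were $\mathcal{G}_r = G$.

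For any round $r$, if the precomputed moves on $G$ fail to produce a dispersed configuration, the adversary simply sets $\mathcal{G}_r = G$ and advances. Otherwise, since the current configuration is undispersed, the adversary fixes a node $u$ with $z < p$ agents, its two ring-neighbors $u_1, u_2$, and the precomputed out-flows $x_1, y_1$ from $u$ toward $u_1, u_2$ and in-flows $x_2, y_2$ into $u$ from $u_1, u_2$. The assumption that the simulated run disperses gives $z - x_1 - y_1 + x_2 + y_2 = p$, so $(x_2 - x_1) + (y_2 - y_1) = p - z \geq 1$; without loss of generality assume $y_2 > y_1$. The adversary sets $\rho(\overline{u u_2}, r) = 0$ and $\rho(e, r) = 1$ for every other $e \in E$, producing a connected snapshot (a ring minus one edge remains connected for $n \geq 3$) with $|\overline{E}_r| = 1$.

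The crux is that no agent can deviate from the simulated plan in round $r$. Because $0$-hop visibility prevents every agent from observing $\rho(\cdot, r)$ at the beginning of the round, and because global communication can only transmit information already possessed by some agent, the round-$r$ Communicate step leaves each agent with exactly the same information as in the $\mathcal{G}_r = G$ simulation; in particular, the past-round histories, the agents' memories, and all globally shared data are identical in the two scenarios at Compute time. Hence each agent deterministically reproduces its simulated decision. When moves are executed, the $x_1$ agents crossing $\overline{u u_1}$ and the $x_2$ agents returning succeed, while the $y_1$ agents attempting $\overline{u u_2}$ and the $y_2$ agents on the far side both fail and remain where they were. The end-of-round count at $u$ is $z - x_1 + x_2 = p - (y_2 - y_1) < p$, so $u$ is still under-occupied and dispersion fails at the end of round $r$.

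The inductive closure is immediate: the configuration at the beginning of round $r+1$ is again undispersed, so the adversary re-runs the same procedure indefinitely, producing a $1$-bounded $1$-interval connected dynamic graph on which $\mathcal{A}$ never reaches a dispersed configuration, contradicting its assumed correctness. The main obstacle I anticipate is making the ``identical view at Compute time'' claim fully airtight: one must rule out subtle side channels, such as agents inferring the missing edge from message timing or from failed moves reported in round $r$ itself. Both possibilities are excluded by the CCM ordering, since Communicate strictly precedes Move, and so any failed-move information from round $r$ cannot be shared until round $r+1$'s Communicate step; the $0$-hop visibility definition additionally forbids reading $\rho(\cdot, r)$ at round $r$. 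Because the construction is independent of agent memory and knowledge of $n$ and $k$, the impossibility holds as stated.
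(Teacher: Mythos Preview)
Your proof is correct and follows essentially the same adversarial ring construction as the paper's own proof; the only cosmetic difference is that the paper anchors the argument at an \emph{overloaded} node $u$ with $z \geq p+1$ agents (forcing some $(y_2 - y_1) < 0$), whereas you anchor at an \emph{underloaded} node with $z < p$ (forcing some $(y_2 - y_1) > 0$). Both choices are available in any undispersed configuration with $k = pn$, and the edge-deletion and indistinguishability reasoning you give is exactly the paper's, so there is nothing to add.
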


\begin{proof} 
Consider the ring of size $n$ as a footprint $G$. Let $k=pn$ agents be in an arbitrary configuration at nodes of $G$, where $p\in \mathbb{N}$. If the agents have 0-hop visibility, then the choice of movement of each agent depends only on the information present at the current node and the information attained via global communication. Therefore, the movement of the agents depends on 0-hop visibility and global communication. We prove our result using a contrapositive argument. Suppose there is a deterministic algorithm $\mathcal{A}$ which solves dispersion. 
    
Since the adversary knows the algorithm $\mathcal{A}$, it can precompute the agents' movement in $\mathcal{G}_r$ for every round $r$. At round $r$, considering $\rho(e, r) = 1$ for all $e \in E$, if it precomputes that it does not lead to a dispersed configuration, then it keeps $\mathcal{G}_r=G$ at the beginning of round $r$. Otherwise, it sets $\rho(e_j, r) = 0$ for a chosen edge $e_j \in E$, and $\rho(e, r) = 1$ for every $e\in E\setminus \{e_j\}$. How the adversary selects such an edge $e_j$ at round $r$ is described below.

Since agents are not in a dispersed configuration at the beginning of round $r$, there is a node $u$ which has at least $p+1$ agents. Let $u_1$ and $u_2$ be two neighbours of node $u$ in $G$. Let $z$ ($\geq p+1$) agents be at node $u$. If $\mathcal{G}_r=G$, then the following is the precomputation of the agents as per algorithm $\mathcal{A}$: let $x_1$ agents move from node $u$ to $u_1$, $y_1$ agents move from node $u$ to $u_2$, $x_2$ agents move from node $u_1$ to $u$, and $y_2$ agents move from node $u_2$ to $u$ (see Fig. \ref{fig:ring}). And accordingly, at the end of round $r$, there are $z+(x_2-x_1)+(y_2-y_1)=p$ agents at node $u$. The adversary knows the precomputed values of $x_1, x_2, y_1, y_2$. It is important to note that $(x_2-x_1)\geq 0$ and $(y_2-y_1)\geq 0$ are not possible due to the following reason. If $(x_2-x_1)\geq 0$ and $(y_2-y_1)\geq 0$, the value of $z+(x_2-x_1)+(y_2-y_1)\geq z\geq p+1$. Therefore, either $(x_2-x_1)< 0$ or $(y_2-y_1)< 0$. Without loss of generality, let $(y_2-y_1)< 0$. In this case, at the beginning of round $r$, the adversary constructs $\mathcal{G}_r$ as follows: $\rho(\overline{uu_2},r)=0$, and $\rho(e,r)=1$ for every $e\in E\setminus \{\overline{uu_2}\}$. Let's call this new configuration $G'$. Since agents have 0-hop visibility, they can not understand the configuration change at round $r$. Hence, the movement of agents in $G'$ is the same as in $G$. Since edge $\overline{uu_2}$ is not present, the movement of agents from node $u$ to $u_2$ and node $u_2$ to $u$ in $G'$ will be unsuccessful. Therefore, the number of agents at node $u$ in $G'$ is $z+(x_2-x_1)$. If $z+(x_2-x_1)=p$, then $z+(x_2-x_1)=z+(x_2-x_1)+(y_2-y_1) \implies (y_2-y_1)=0$. This gives a contradiction as $y_2<y_1$. Therefore, at the end of round $r$, the number of agents at node $u$ in $G'$ is not $p$. Hence, the dispersion is not achieved in $G'$ at the end of round $r$.   

Since in every round $r\geq 0$, $\overline{E}_r\leq 1$, and in every round $\mathcal{G}_r$ is connected. Therefore, $\mathcal{G}$ is a 1-bounded 1-interval connected graph. Also, this proof is independent of nodes having infinite storage, agents having infinite memory and knowledge of $n$, $k$. This completes the proof. 
\end{proof}

\begin{remark}
    Theorems~\ref{thm:imp_1hop} and~\ref{thm:imp_global} hold when $k = pn$, for any $p \in \mathbb{N}$. That is, a team of $k = pn$ agents requires both 1-hop visibility and 1-hop communication to solve $k$-balanced dispersion in 1-bounded 1-interval connected graphs.
\end{remark}

\begin{remark}
    Theorems~\ref{thm:imp_1hop} and~\ref{thm:imp_global} are valid in $\ell$-bounded 1-interval connected graphs for any $\ell \geq 1$. In $\ell$-bounded 1-interval connected graphs, solving $k$-balanced dispersion with $k = pn$ agents (where $p \in \mathbb{N}$) also requires both 1-hop visibility and 1-hop communication.
\end{remark}

Now, we show two impossibilities to solve $k$-balanced dispersion in $\ell$-bounded 1-interval connected graphs. 

\begin{theorem}\label{thm:imp_1hop_l>1}
\textbf{($\ell \geq 25$)} It is impossible to solve $k$-balanced dispersion in $\ell$-bounded 1-interval connected graphs when agents have $1$-hop visibility and f-2-f communication. The impossibility holds for all initial configurations where at least one node has more than $\lceil k/n \rceil$ agents, even if nodes have infinite storage, agents have infinite memory and knowledge of $n$, $\ell$, and $k$.
\end{theorem}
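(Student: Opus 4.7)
My plan is to adapt the construction of Theorem~\ref{thm:disp_temporal}---which used true temporal disconnection---to the $\ell$-bounded 1-interval connected setting by replacing each disconnection step with a single bridge edge, preserving connectivity at every round. I take the footprint to be $K_n$; since $n \leq \lfloor\sqrt{\ell}\rfloor$ with $\ell \geq 25$, the number of edges that must be absent per round (at most roughly $2n-5$) stays within the budget $\ell$, because $2\sqrt{\ell} - 5 \leq \ell$ whenever $\ell \geq 25$.

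The construction mirrors the alternating structure of the proof of Theorem~\ref{thm:disp_temporal}: at even rounds, form a one-node cluster $C_1^r = \{u\}$ (where $u$ is the current maximally loaded node) connected to the clique on $V \setminus \{u\}$ by exactly one bridge edge; at odd rounds, form a two-node cluster $C_1^r = \{u, u'\}$ (with $u'$ chosen from the previous $C_2^{r-1}$ as in the proof of Theorem~\ref{thm:disp_temporal}) connected to the remaining clique by exactly one bridge edge. After verifying connectivity and the edge-budget bound for each $\mathcal{G}_r$, I will argue by induction that at the end of every round some node holds at least $p+2$ agents, contradicting any algorithm that purportedly achieves $k$-balanced dispersion.

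The inductive step reduces, via 1-hop visibility and f-2-f communication, to the averaging computation in the proof of Theorem~\ref{thm:disp_temporal}. Under these model assumptions, every agent not located at a bridge endpoint has a local view identical to its view in the corresponding fully disconnected construction of Section~\ref{sec:temp_1}: the same set of incident edges is present, the same IDs sit at its current neighbours, and f-2-f communication prevents any information from crossing the bridge to a non-endpoint. By determinism of the algorithm, such agents perform exactly the same move as in Theorem~\ref{thm:disp_temporal}'s run, so only the two bridge-endpoint nodes can deviate. Reapplying the averaging inequality from the proof of Theorem~\ref{thm:disp_temporal}, with $q \geq 5$ instead of the weaker $q \geq 3$ providing the extra slack, the cluster's maximum load at the end of the round remains $\geq p+2$.

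I expect the main obstacle to be bounding the net effect of the bridge-endpoint moves, which can transfer agents between $C_1^r$ and $C_2^r$ in a single round even though no such transfer is possible in Theorem~\ref{thm:disp_temporal}. I plan to handle this by adversarially picking the bridge endpoints at each round: on the $C_1^r$ side, choose (when $|C_1^r|=2$) the less-loaded of the two cluster nodes to carry the bridge so as to limit outflow from $C_1^r$, and on the $C_2^r$ side, pick a low-loaded node so that little can enter $C_1^r$ and still survive as a maximum at the end of the round. Combining these endpoint choices with the additional slack from $q \geq 5$ in the averaging inequality should preserve the invariant of a $\geq p+2$ node at every round, and iterating this invariant yields the desired contradiction to the existence of any algorithm $\mathcal{A}$ solving $k$-balanced dispersion under 1-hop visibility with f-2-f communication.
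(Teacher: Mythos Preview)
Your plan has a real gap: the averaging step you want to reuse from Theorem~\ref{thm:disp_temporal} relies on zero flow between $C_1^r$ and $C_2^r$, and a single bridge already destroys that in a way your endpoint choices and the slack $q\ge 5$ do not repair. Take $n=6$, $p=0$, $q=5$, and start with $u$ carrying two agents, three other nodes carrying one each, and two holes. In your even round $|C_1^r|=1$, so $u$ is \emph{forced} to be a bridge endpoint and your ``pick the less-loaded cluster node'' rule is vacuous there. For \emph{every} choice of $v\in C_2^r$ the deterministic rule ``present degree $1$ with load $\ge 2$: ship one agent across; present degree $n-1$ with load $\ge 1$ and a neighbour of load $\ge 2$ visible: ship one agent to a visible hole; otherwise stay'' reaches a balanced configuration at the end of the round. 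The non-bridge nodes of $C_2^r$ have present degree $n-2$, hence exactly the disconnected view, and stay; $u$ drops to load $1$; and $v$ either absorbs the incoming agent (if it was a hole) or forwards its own agent to a hole it sees inside the clique. So the $\ge p+2$ invariant already dies at round $0$ for this instance, and since it fails for all $n-1$ bridge choices simultaneously, even adding precomputation to your scheme would not rescue it.

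The paper's argument is structurally different and never tries to bound bridge leakage. The adversary lays the nodes on a Hamiltonian path $P_r=w_1\sim\cdots\sim w_n$ sorted by decreasing load, \emph{precomputes} $\mathcal{A}$'s run on $P_r$, and if that run would leave every node with at most $p+1$ agents it presents instead $P_r'=\bigl(P_r\setminus\{\overline{w_4w_5}\}\bigr)\cup\{\overline{w_1w_5}\}$. The key is a two-graph indistinguishability: $w_2$ and $w_3$ have the same $1$-hop view in $P_r$ and in $P_r'$, so under f-2-f communication their moves are identical; only $w_1,w_4,w_5$ can change. From ``all end loads $\le p+1$ on $P_r$'' together with $\sum_{i\le 4}\alpha_r(w_i)\ge 4p+5$ (this is precisely where $q\ge 5$ and the sorted ordering enter) one lower-bounds the net flow $x_4-y_5$ across $\overline{w_4w_5}$ in $P_r$; a four-case split on whether $E_{P_r}(w_3),E_{P_r}(w_4)$ equal $p+1$ or lie at some $p-c$ then forces $E_{P_r'}(w_3)\ge p+2$ or $E_{P_r'}(w_4)\ge p+2$. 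It is this precompute-and-switch between two locally indistinguishable spanning paths, not a single bridged clique on which one hopes to control leakage, that carries the proof.
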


\begin{proof}
      In this proof, we consider $\ell\geq 25$ and the clique of size $n$, where $n \in \big[5, \lfloor \sqrt{l} \rfloor \big]$, as a footprint $G$. Consider \( k = pn + q \), where \( p,q \in \mathbb{N} \cup \{0\} \) and \( 5 \leq q \leq n - 1 \). Consider an initial configuration which has at least one node containing at least $\lceil k/n \rceil+1=p+2$ agents. Let $S_{\mathcal{G}_r}(u)$ denote the number of agents at node $u$ at the beginning of round $r$, and $E_{\mathcal{G}_r}(u)$ denote the number of agents at node $u$ at the end of round $r$. At the beginning of round $r\geq 0$, the adversary finds a path $w_1 \sim w_2\sim \ldots\sim w_n$ of length $n$ in $G$ such that $S_{\mathcal{G}_r}(w_i)\geq S_{\mathcal{G}_r}(w_{i+1})$ holds for every $i\in [1,n-1]$. Since $G$ is a clique, the adversary can always find at least one such path in $G$ at the beginning of round $r$. If there are several such paths, the adversary can pick any. Let's call the path chosen by the adversary as \( P_r \). Let $E'$ be the edges of $P_r$. At the beginning of round \( r \), the adversary sets \( \mathcal{G}_r \) to either \( P_r \) or \( P_r' \), where \( P_r' \) is defined later.

      Suppose there is an algorithm $\mathcal{A}$ that solves dispersion. Since the adversary is aware of $\mathcal{G}_r$ and the agent's algorithm, it can precompute the movement of agents at round $r$. Based on precomputation, if the adversary observes that at the end or round $r$, there is at least one node in $P_r$ which has at least $p+2$ agents, then it constructs $\mathcal{G}_r=P_r$ at the beginning of round $r$. Otherwise, based on precomputation, if the adversary observes that at the end or round $r$, there is no node in $P_r$ which has at least $p+2$ agents, it sets $\mathcal{G}_r$ to $P_r'$ such that there is at least one node in $P_r'$ with at least $p+2$ agents at the end of round $r$. Let $E''=\big(E'\setminus \{\overline{w_4w_5}\}\big)\cup \{\overline{w_1w_5}\}$. The graph $P_r''$ is $(V, E'')$.

 \begin{figure}[ht!]
    \centering
    \includegraphics[width=0.75\linewidth]{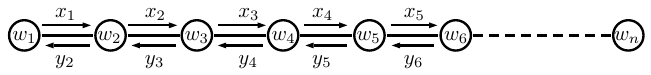}
    \caption{Agents' movement when \( \mathcal{G}_r \) is \( P_r \).}
    \label{fig:P_r}
\end{figure}
 
 Observe that in $P_r'$, the 1-hop view of agents located at nodes $w_1$, $w_4$, and $w_5$ is changed, while the 1-hop view of agents at all other nodes remains identical to their view in $P_r$, due to the presence of f-2-f communication. Now we prove that there is a node in $P_r'$ which has at least $p+2$ agents. Since there is a node with at least $p+2$ agents in $\mathcal{G}_r$ at the beginning of round $r$, and $S_{\mathcal{G}_r}(w_i) \geq S_{\mathcal{G}_r}(w_{i+1})$ for every $i$, therefore, $S_{P_r}(w_1) \geq p + 2$. As per precomputaion in $P_r$, let $x_i$ agents move from node $w_i$ to $w_{i+1}$, and $y_{i+1}$ agents move from node $w_{i+1}$ to $w_i$ (see Fig. \ref{fig:P_r} for agents' movement in $P_r$). Since at the end of round $r$, there is no node with at least $p+2$ agents in $P_r$, the following cases are possible. 

\begin{itemize}
    \item \textbf{Case 1:} $E_{P_r}(w_3)=p-c_1$ and $E_{P_r}(w_4)=p-c_2$, where $c_1\geq 0$ and $c_2\geq 0$,
    \item \textbf{Case 2:} $E_{P_r}(w_3)=p+1$ and $E_{P_r}(w_4)=p-c_2$, where $c_2\geq 0$,
    \item \textbf{Case 3:} $E_{P_r}(w_3)=p-c_1$ and $E_{P_r}(w_4)=p+1$, where $c_1\geq 0$,
    \item \textbf{Case 4:} $E_{P_r}(w_3)=p+1$ and $E_{P_r}(w_4)=p+1$.
\end{itemize}

Since $q\geq 5$, $S_{P_r}(w_1)\geq p+2$ and $S_{P_r}(w_i)\geq S_{P_r}(w_{i+1})$ in $P_r$, the value of $\sum_{i=1}^{4}S_{P_r}(w_i)$ is at least $4p+5$. Hence, 

$$x_4-y_5=\sum_{i=1}^{4}S_{P_r}(w_i)-\sum_{i=1}^{4}E_{P_r}(w_i)\geq4p+5-\sum_{i=1}^{4}E_{P_r}(w_i). $$

Since at the end of round $r$, there is no node in $P_r$ which has at least $p+2$ agents, $E_{P_r}(w_1)\leq p+1$ and $E_{P_r}(w_2)\leq p+1$. Hence,

 $$x_4-y_5\geq 2p+3-\sum_{i=3}^{4}E_{P_r}(w_i).$$ 

Since in all aforementioned cases, we have the value of $E_{P_r}(w_3)$ and $E_{P_r}(w_4)$, the value of $x_4-y_5$ is:

\begin{itemize}
    \item  Case 1 $\implies$ $x_4-y_5\geq 3+c_1+c_2$
    \item Case 2 $\implies$ $x_4-y_5\geq 2+c_2$,
    \item Case 3 $\implies$ $x_4-y_5\geq 2+c_1$,
    \item Case 4 $\implies$ $x_4-y_5\geq 1$.
\end{itemize}

\begin{figure}
    \centering
    \includegraphics[width=0.75\linewidth]{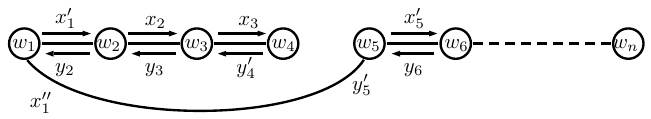}
    \caption{Agents' movement when \( \mathcal{G}_r \) is \( G' \).}
    \label{fig:1-hop}
\end{figure}

Note that if $\mathcal{G}_r=P_r$, then $E_{P_r}(w_3)=S_{P_r}(w_3)+(y_4+x_2)-(y_3+x_3)$, and $E_{P_r}(w_4)=S_{P_r}(w_4)+(y_5+x_3)-(y_4+x_4)$. Since the 1-hop view of node $w_2$ and $w_3$ in $P_r'$ is the same as in $P_r$, the decision of agents' movement at node $w_2$ and $w_3$ in $P_r'$ remains the same as in $P_r$. Note that in $P_r'$, $S_{P_r}(w_i)=S_{P_r'}(w_i)$ for every $i$. Since there is no edge between node $w_4$ and $w_5$, no agent can move from node $w_4$ to $w_5$ in $P_r'$. Therefore, in $P_r'$, $E_{P_r'}(w_4)=S_{P_r'}(w_4)+x_3-y_4'$ and $E_{P_r'}(w_3)=S_{P_r'}(w_3)+(y_4'+x_2)-(y_3+x_3)$, where $y_4'$ is the number agents that move from node $w_4$ to node $w_3$ in $P_r'$ as the 1-hop view of $w_2$, $w_3$ are the same but the 1-hop view of node $w_4$ is changed (See Fig. \ref{fig:1-hop} for agents' movement in $P_r'$). Observe that if \( x_4 - x_5 \leq 0 \), then the value of \( \sum_{i=1}^{4}E_{P_r}(w_i) \) is at least \( \sum_{i=1}^{4}S_{P_r}(w_i) \). Therefore, \( \sum_{i=1}^{4}E_{P_r}(w_i) \geq 4p + 5 \) as \( \sum_{i=1}^{4}S_{P_r}(w_i) \geq 4p + 5 \). This contradicts the fact that at the end of round \( r \), at most \( p + 1 \) agents are present at every node in \( P_r \). So, the value of \( x_4 - x_5 \) must be at least 1. In \( P_r' \), one might argue that if \( x_1'' - y_5' = x_4 - y_5 \) holds, where \( x_1'' \) agents move from node \( w_1 \) to \( w_5 \) in $P_r'$, and \( y_5' \) agents move from node \( w_5 \) to \( w_1 \) in $P_r'$, then the extra load of agents can be distributed across nodes \( w_5, w_6, \ldots, w_n \), and it might be possible to ensure that at the end of round \( r \), at most \( p + 1 \) agents are present at every node in \( P_r' \). Note that agents at nodes \( w_1 \), \( w_4 \), and \( w_5 \) are aware of this change. However, agents at nodes \( w_2 \) and \( w_3 \) are not aware of it; as previously mentioned, the decisions of agents at \( w_2 \) and \( w_3 \) remain the same as in \( P_r \). Below, we show that due to this reason, there is a node with at least \( p + 2 \) agents at the end of round \( r \) in \( P_r' \).

\begin{itemize}
    \item \textbf{Case 1:} Here, $x_4-y_5\geq 3+c_1+c_2$. We divide this case into the following sub-cases.
    \begin{itemize}
        \item If $y_4'\leq y_4+1+c_1$,
        \begin{align*}
            E_{P_r'}(w_4)&=S_{P_r'}(w_4)+x_3-y_4'\\
            &=S_{P_r}(w_4)+x_3-y_4',  \; \text{ as } \;S_{P_r}(w_i)=S_{P_r'}(w_i)\\
            E_{P_r'}(w_4) &\geq S_{P_r}(w_4)+x_3-y_4-1-c_1,  \; \text{ as } \; y_4'\leq y_4+c_1+1\\
             &= S_{P_r}(w_4)+x_3-y_4 + (y_5-x_4)-(y_5-x_4)-1-c_1\\
             &=S_{P_r}(w_4)+(y_5+x_3)-(y_4+x_4)+(x_4-y_5)-1-c_1\\
              &=  p+(x_4-y_5)-1-c_1-c_2, \;\; \\
              &\;\;\text{ as } \;S_{P_r}(w_4)+(y_5+x_3)-(y_4+x_4)=E_{P_r}(w_4)=p-c_2\\
              \therefore E_{P_r'}(w_4)&\geq p+2, \; \text{ as } \;x_4 - y_5 \geq 3+c_1+c_2\\
        \end{align*}   
        \item If $y_4'\geq y_4+2+c_1$,
        \begin{align*}
            E_{P_r'}(w_3)&=S_{P_r'}(w_3)+(y_4'+x_2)-(y_3+x_3)\\
            &=S_{P_r}(w_3)+(y_4'+x_2)-(y_3+x_3), \; \text{ as } \;S_{P_r}(w_i)=S_{P_r'}(w_i)\\
            E_{P_r'}(w_3) &\geq S_{P_r}(w_3)+(y_4+x_2)-(y_3+x_3)+c_1+2, \; \text{ as } \;y_4'\geq y_4+c_1+2\\
              &= p+2, \;\text{ as }\; S_{P_r}(w_3)+(y_4+x_2)-(y_3+x_3)=E_{P_r}(w_3)=p-c_1 \\
              \therefore E_{P_r'}(w_3)&\geq p+2
        \end{align*}
    \end{itemize}
    \item \textbf{Case 2:} In this case, $x_4-y_5\geq 2+c_2$. We divide this case into the following sub-cases. 
    \begin{itemize}
        \item If $y_4'\leq y_4$,
        \begin{align*}
           E_{P_r'}(w_4)&=S_{P_r'}(w_4)+x_3-y_4'\\
            &=S_{P_r}(w_4)+x_3-y_4',  \; \text{ as } \;S_{P_r}(w_i)=S_{P_r'}(w_i)\\
            E_{P_r'}(w_4) &\geq S_{P_r}(w_4)+x_3-y_4,  \; \text{ as } \; y_4'\leq y_4\\
             &= S_{P_r}(w_4)+x_3-y_4 + (y_5-x_4)-(y_5-x_4)\\
             &=S_{P_r}(w_4)+(y_5+x_3)-(y_4+x_4)+(x_4-y_5)\\
              &=  p+(x_4-y_5)-c_2 \;\; \\
              &\text{ as } \;S_{P_r}(w_4)+(y_5+x_3)-(y_4+x_4)=E_{P_r}(w_4)=p-c_2\\
              \therefore E_{P_r'}(w_4)&\geq p+2, \; \text{ as } \;x_4 - y_5 \geq 2+c_2\\
        \end{align*}
        \item If $y_4'\geq y_4+1$,
                \begin{align*}
            E_{P_r'}(w_3)&=S_{P_r'}(w_3)+(y_4'+x_2)-(y_3+x_3)\\
            &=S_{P_r}(w_3)+(y_4'+x_2)-(y_3+x_3), \; \text{ as } \;S_{P_r}(w_i)=S_{P_r'}(w_i)\\
            E_{P_r'}(w_3) &\geq S_{P_r}(w_3)+(y_4+x_2)-(y_3+x_3)+1, \; \text{ as } \;y_4'\geq y_4+1\\
              &= p+2, \;\text{ as }\; S_{P_r}(w_3)+(y_4+x_2)-(y_3+x_3)=E_{P_r}(w_3)=p+1 \\
              \therefore E_{P_r'}(w_3)&\geq p+2
        \end{align*}
    \end{itemize}
    \item \textbf{Case 3:} In this case, $x_4-y_5\geq 2+c_1$. We divide this case into the following sub-cases. 

     \begin{itemize}
        \item If $y_4'\leq y_4+c_1+1$,
        \begin{align*}
           E_{P_r'}(w_4)&=S_{P_r'}(w_4)+x_3-y_4'\\
            &=S_{P_r}(w_4)+x_3-y_4'  \; \text{ as } \;S_{P_r}(w_i)=S_{P_r'}(w_i)\\
            E_{P_r'}(w_4) &\geq S_{P_r}(w_4)+x_3-y_4-c_1-1  \; \text{ as } \; y_4'\leq y_4+c_1+1\\
             &= S_{P_r}(w_4)+x_3-y_4 + (y_5-x_4)-(y_5-x_4)-c_1-1\\
             &=S_{P_r}(w_4)+(y_5+x_3)-(y_4+x_4)+(x_4-y_5)-c_1-1\\
              &=  p+(x_4-y_5)-c_1 \;\; \text{ as } \;S_{P_r}(w_4)+(y_5+x_3)-(y_4+x_4)=S_{P_r}(w_4)=p+1\\
              \therefore E_{P_r'}(w_4)&\geq p+2 \; \text{ as } \;x_4 - y_5 \geq 2+c_1\\
        \end{align*}
        \item If $y_4'\geq y_4+c_1+2$,
                \begin{align*}
             E_{P_r'}(w_3)&=S_{P_r'}(w_3)+(y_4'+x_2)-(y_3+x_3)\\
            &=S_{P_r}(w_3)+(y_4'+x_2)-(y_3+x_3), \; \text{ as } \;S_{P_r}(w_i)=S_{P_r'}(w_i)\\
            E_{P_r'}(w_3) &\geq S_{P_r}(w_3)+(y_4+x_2)-(y_3+x_3)+c_1+2, \; \text{ as } \;y_4'\geq y_4+c_1+2\\
              &= p+2, \;\text{ as }\; S_{P_r}(w_3)+(y_4+x_2)-(y_3+x_3)=E_{P_r}(w_3)=p-c_1 \\
              \therefore E_{P_r'}(w_3)&\geq p+2
        \end{align*}
    \end{itemize}
    \item \textbf{Case 4:} In this case, $x_4-y_5\geq 1$. We divide this case into the following sub-cases.
    \begin{itemize}
        \item If $y_4'\leq y_4$,
        \begin{align*}
            E_{P_r'}(w_4)&= S_{P_r'}(w_4)+x_3-y_4'\\
            &= S_{P_r}(w_4)+x_3-y_4', \; \text{ as } \;S_{P_r}(w_i)=S_{P_r'}(w_i)\\
            E_{P_r'}(w_4) &\geq S_{P_r}(w_4)+x_3-y_4,  \; \text{ as } \; y_4'\leq y_4\\
             &= S_{P_r}(w_4)+x_3-y_4 + (y_5-x_4)-(y_5-x_4)\\
             &=S_{P_r}(w_4)+(y_5+x_3)-(y_4+x_4)+(x_4-y_5)\\
              E_{P_r'}(w_4)&\geq  p+1+(x_4-y_5), \;\; \text{ as } \;S_{P_r}(w_4)+(y_5+x_3)-(y_4+x_4)=E_{P_r}(w_4)=p+1\\
              \therefore\; E_{P_r'}(w_4)&\geq p+2, \; \text{ as } \;x_4 - y_5 \geq 1
        \end{align*}
        \item If $y_4'\geq y_4+1$,
                \begin{align*}
            E_{P_r'}(w_3)&=S_{P_r'}(w_3)+(y_4'+x_2)-(y_3+x_3)\\
            &=S_{P_r}(w_3)+(y_4'+x_2)-(y_3+x_3), \; \text{ as } \;S_{P_r}(w_i)=S_{P_r'}(w_i)\\
            E_{P_r'}(w_3)&\geq S_{P_r}(w_3)+(y_4+x_2)-(y_3+x_3)+1, \; \text{ as } \; y_4'\geq y_4+1\\
             \therefore\; E_{P_r'}(w_3)&\geq p+2, \;\;\;\text{ as }\; S_{P_r}(w_3)+(y_4+x_2)-(y_3+x_3)=E_{P_r}(w_3)=p+1 
        \end{align*}
    \end{itemize}
\end{itemize}

In all cases, either node $w_3$ or $w_4$ has at least $p+2$ agents in $P_r'$ at the end of round $r$. Therefore, the dispersion is not achieved. Since in every round $r\geq 0$, $\mathcal{G}_r$ is connected. Therefore, $\mathcal{G}$ is a 1-interval connected graph. At every round $r$, since $|E_r|=n-1$ and $|E|=\frac{n(n-1)}{2}$, $\overline{E}_r=|E|-|E_r|=\frac{n(n-1)}{2}-n+1=\frac{n(n-2)+2}{2}\leq n^2\leq\ell$. Therefore, $\mathcal{G}$ is $\ell$-bounded 1-interval connected graph. Also, this proof is independent of nodes having infinite storage, agents having infinite memory and knowledge of $\ell$, $n$, $k$. This completes the proof. 
\end{proof}

\begin{theorem}\label{thm:imp_global_l>1}
    (\textbf{Impossibility without 1-hop visibility and $\ell\geq 25$}) It is impossible to solve $k$-balanced dispersion in $\ell$-bounded 1-interval connected graphs when agents have global communication but no 1-hop visibility. The impossibility holds for all initial configurations where at least one node has more than $\lceil k/n \rceil$ agents, even if nodes have infinite storage, agents have infinite memory and knowledge of $n$, $\ell$, and $k$.
\end{theorem}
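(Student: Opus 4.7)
The plan is to mirror the construction of Theorem~\ref{thm:imp_1hop_l>1}. I take the footprint $G$ as a clique on $n$ vertices with $n\in[5,\lfloor\sqrt{\ell}\rfloor]$ (so $\binom{n}{2}-(n-1)\leq n^2\leq\ell$) and, at the start of each round $r$, let the adversary pick a Hamiltonian path $P_r=w_1w_2\cdots w_n$ of $G$ ordered so that $S_{\mathcal{G}_r}(w_i)\geq S_{\mathcal{G}_r}(w_{i+1})$ for every $i$. Define $P_r'$ by deleting $\overline{w_4w_5}$ from $P_r$ and inserting $\overline{w_1w_5}$, so $P_r'$ is a spanning tree on the same vertex set. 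At round $r$ the adversary commits $\mathcal{G}_r:=P_r$ if precomputing the supposed algorithm $\mathcal{A}$ on $P_r$ already leaves some node with at least $p+2$ agents; otherwise it commits $\mathcal{G}_r:=P_r'$. In either case the snapshot is connected and $|\overline{E}_r|\leq n^2\leq\ell$, so $\mathcal{G}$ is $\ell$-bounded $1$-interval connected.

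The decisive observation is that, without $1$-hop visibility, no agent can distinguish $P_r$ from $P_r'$ at the Communicate or Compute step of round $r$. Ports are assigned from the footprint; global communication only propagates the shared history of IDs, positions, and past successes/failures, all of which agree on the two adversarial branches up to round $r$. Hence $\mathcal{A}$ issues identical port choices at round $r$ under $P_r$ and $P_r'$. Moves across edges other than $\overline{w_4w_5}$ and $\overline{w_1w_5}$ resolve identically in both snapshots; the $x_4$ agents planning $w_4\to w_5$ and the $y_5$ agents planning $w_5\to w_4$ succeed only in $P_r$, and the agents choosing the port of $\overline{w_1w_5}$ succeed only in $P_r'$ but affect only $w_1$ and $w_5$. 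In particular,
\[
E_{P_r'}(w_4)=E_{P_r}(w_4)+(x_4-y_5).
\]

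To finish, I argue that $x_4-y_5$ is large enough to push $E_{P_r'}(w_4)$ above $p+1$. The non-increasing ordering together with $S_{P_r}(w_1)\geq p+2$ and $q\geq 5$ forces $\sum_{i=1}^{4}S_{P_r}(w_i)\geq 4p+5$, exactly as in Theorem~\ref{thm:imp_1hop_l>1}. Conservation on $\{w_1,\ldots,w_4\}$ in $P_r$ gives $\sum_{i=1}^{4}E_{P_r}(w_i)=\sum_{i=1}^{4}S_{P_r}(w_i)-(x_4-y_5)$; in the branch where the adversary uses $P_r'$ we have $E_{P_r}(w_i)\leq p+1$ for every $i$, so using $E_{P_r}(w_1),E_{P_r}(w_2)\leq p+1$ I derive $x_4-y_5\geq 2p+3-E_{P_r}(w_3)-E_{P_r}(w_4)$. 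Substituting into the key equation yields $E_{P_r'}(w_4)\geq 2p+3-E_{P_r}(w_3)\geq p+2$, the last inequality because $E_{P_r}(w_3)\leq p+1$. Hence some node ends round $r$ with at least $p+2$ agents under $\mathcal{G}_r=P_r'$, contradicting dispersion. This is cleaner than the four-case split in Theorem~\ref{thm:imp_1hop_l>1}: without $1$-hop visibility, the agents at $w_4$ cannot reroute, forcing $y_4'=y_4$ and collapsing the sub-cases that were needed there.

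The principal obstacle is justifying the indistinguishability claim inductively: the adversary must commit $\mathcal{G}_0,\ldots,\mathcal{G}_{r-1}$ before fixing $\mathcal{G}_r$, so that the Communicate step of round $r$ sees the same information in both branches and $\mathcal{A}$'s port choices are truly identical. This follows because Communicate and Compute depend only on round-$(r-1)$ data and the adversary branches only after $\mathcal{A}$'s plan for round $r$ is fixed. A concluding remark observing that infinite storage, infinite memory, and knowledge of $n$, $\ell$, $k$ are never invoked by the argument then closes the impossibility.
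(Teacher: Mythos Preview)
Your proposal is correct and follows the paper's construction exactly: the same clique footprint with $n\in[5,\lfloor\sqrt{\ell}\rfloor]$, the same adversarial choice between the sorted Hamiltonian path $P_r$ and its perturbation $P_r'$, and the same indistinguishability observation that, under $0$-hop visibility, every agent issues the identical port at round $r$ regardless of which snapshot the adversary commits.

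Where you diverge is in the bookkeeping that shows $E_{P_r'}(w_4)\geq p+2$. The paper splits into four cases according to whether $E_{P_r}(w_3)$ and $E_{P_r}(w_4)$ equal $p+1$ or $p-c$, derives a tailored lower bound on $x_4-y_5$ in each, and then expands $E_{P_r'}(w_4)=S(w_4)+x_3-y_4$ case by case. Your identity $E_{P_r'}(w_4)=E_{P_r}(w_4)+(x_4-y_5)$ together with the cut-conservation bound $x_4-y_5\geq 2p+3-E_{P_r}(w_3)-E_{P_r}(w_4)$ collapses all four cases into the single line $E_{P_r'}(w_4)\geq 2p+3-E_{P_r}(w_3)\geq p+2$. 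This is a genuine simplification: it exploits precisely the fact you flag, namely that without $1$-hop visibility the agents at $w_4$ cannot reroute ($y_4'=y_4$), which is what forced the sub-case analysis in Theorem~\ref{thm:imp_1hop_l>1} but is unnecessary here. Your treatment of the inductive indistinguishability (the adversary commits $\mathcal{G}_0,\ldots,\mathcal{G}_{r-1}$ before branching on round $r$) is also more explicit than the paper's.
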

\begin{proof}
  In this proof, we consider $\ell\geq 25$ and the clique of size $n$, where $n \in \big[5, \lfloor \sqrt{l} \rfloor \big]$, as a footprint $G$. Consider \( k = pn + q \), where \( p,q \in \mathbb{N} \cup \{0\} \) and \( 5 \leq q \leq n - 1 \). Consider an initial configuration which has at least one node containing at least $\lceil k/n \rceil+1=p+2$ agents. Let $S_{\mathcal{G}_r}(u)$ denote the number of agents at node $u$ at the beginning of round $r$, and $E_{\mathcal{G}_r}(u)$ denote the number of agents at node $u$ at the end of round $r$. At the beginning of round $r\geq 0$, the adversary finds a path $w_1 \sim w_2\sim \ldots\sim w_n$ of length $n$ in $G$ such that $S_{\mathcal{G}_r}(w_i)\geq S_{\mathcal{G}_r}(w_{i+1})$ holds for every $i\in [1,n-1]$. Since $G$ is a clique, the adversary can always find at least one such path in $G$ at the beginning of round $r$. If there are several such paths, the adversary can pick any. Let's call the path chosen by the adversary as \( P_r \). Let $E'$ be the edges of $P_r$. At the beginning of round \( r \), the adversary sets \( \mathcal{G}_r \) to either \( P_r \) or \( P_r' \), where \( P_r' \) is defined later.

     Suppose there is an algorithm $\mathcal{A}$ that solves dispersion. Since the adversary is aware of $\mathcal{G}_r$ and the agent's algorithm, it can precompute the movement of agents at round $r$. Based on precomputation, if the adversary observes that at the end or round $r$, there is at least one node in $P_r$ which has at least $p+2$ agents, then it constructs $\mathcal{G}_r=P_r$ at the beginning of round $r$. Otherwise, based on precomputation, if the adversary observes that at the end or round $r$, there is no node in $P_r$ which has at least $p+2$ agents, it sets $\mathcal{G}_r$ to $P_r'$ such that there is at least one node in $P_r'$ with at least $p+2$ agents at the end of round $r$. Let $E''=\big(E'\setminus \{\overline{w_4w_5}\}\big)\cup \{\overline{w_1w_5}\}$. The graph $P_r''$ is $(V, E'')$.

 Observe that in $P_r'$, the 0-hop view of agents at all nodes remains identical to their view in $P_r$, due to the absence of 1-hop visibility. Now we prove that given that there is no node with $p+2$ agents at the end of round $r$ in $P_r$, there is a node in $P_r'$ which has at least $p+2$ agents. Since there is a node with at least $p+2$ agents in $\mathcal{G}_r$ at the beginning of round $r$, and $S_{\mathcal{G}_r}(w_i) \geq S_{\mathcal{G}_r}(w_{i+1})$ for every $i$, therefore, $S_{P_r'}(w_1) \geq p + 2$. As per precomputaion in $P_r$, let $x_i$ agents moves from node $w_i$ to $w_{i+1}$, and $y_{i+1}$ agents move from node $w_{i+1}$ to $w_i$ (see Fig. \ref{fig:P_r} for agents' movement in $P_r$). The following cases are possible at the end of round $r$.

 \begin{figure}[ht!]
    \centering
    \includegraphics[width=0.75\linewidth]{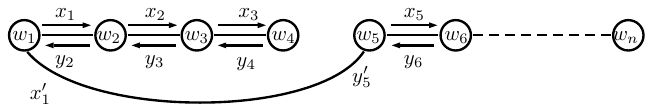}
    \caption{Agents' movement when \( \mathcal{G}_r \) is \( G' \).}
    \label{fig:global}
\end{figure}

\begin{itemize}
    \item \textbf{Case 1:} $E_{P_r}(w_3)=p-c_1$ and $E_{P_r}(w_4)=p-c_2$, where $c_1\geq 0$ and $c_2\geq 0$,
    \item \textbf{Case 2:} $E_{P_r}(w_3)=p+1$ and $E_{P_r}(w_4)=p-c_2$, where $c_2\geq 0$,
    \item \textbf{Case 3:} $E_{P_r}(w_3)=p-c_1$ and $E_{P_r}(w_4)=p+1$, where $c_1\geq 0$,
    \item \textbf{Case 4:} $E_{P_r}(w_3)=p+1$ and $E_{P_r}(w_4)=p+1$.
\end{itemize}

Since $q\geq 5$, $S_{P_r}(w_1)\geq p+2$ and $S_{P_r}(w_i)\geq S_{P_r}(w_{i+1})$ in $P_r$, the value of $\sum_{i=1}^{4}S_{P_r}(w_i)$ is at least $4p+5$. Hence,  

$$x_4-y_5=\sum_{i=1}^{4}S_{P_r}(w_i)-\sum_{i=1}^{4}E_{P_r}(w_i)\geq4p+5-\sum_{i=1}^{4}E_{P_r}(w_i). $$

Since at the end of round $r$, there is no node in $P_r$ which has at least $p+2$ agents, $E_{P_r}(w_1)\leq p+1$ and $E_{P_r}(w_2)\leq p+1$. Hence,

 $$x_4-y_5\geq 2p+3-\sum_{i=3}^{4}E_{P_r}(w_i).$$ 

Since in all aforementioned cases, we have the value of $E_{P_r}(w_3)$ and $E_{P_r}(w_4)$, the value of $x_4-x_4$ is:

\begin{itemize}
    \item  Case 1 $\implies$ $x_4-y_5\geq 3+c_1+c_2$
    \item Case 2 $\implies$ $x_4-y_5\geq 2+c_2$,
    \item Case 3 $\implies$ $x_4-y_5\geq 2+c_1$,
    \item Case 4 $\implies$ $x_4-y_5\geq 1$.
\end{itemize}

Note that if $\mathcal{G}_r=P_r$, then $E_{P_r}(w_3)=S_{P_r}(w_3)+(y_4+x_2)-(y_3+x_3)$, and $E_{P_r}(w_4)=S_{P_r}(w_4)+(y_5+x_3)-(y_4+x_4)$. Since the 0-hop view of agents in $P_r'$ is the same as the 0-hop view, the decision of agents' movement at every node in $P_r'$ remains the same as in $P_r$ (See Fig. \ref{fig:global} for agents' movement in $P_r'$). Since there is no edge between node $w_4$ and $w_5$, no agent can move from node $w_4$ to $w_5$ in $P_r'$. Therefore, in $P_r'$, $E_{P_r'}(w_4)=S_{P_r'}(w_4)+x_3-y_4$ and $E_{P_r'}(w_3)=S_{P_r'}(w_3)+(y_4+x_2)-(y_3+x_3)$. Observe that if \( x_4 - x_5 \leq 0 \), then the value of \( \sum_{i=1}^{4}E_{P_r}(w_i) \) is at least \( \sum_{i=1}^{4}S_{P_r}(w_i) \). Therefore, \( \sum_{i=1}^{4}E_{P_r}(w_i) \geq 4p + 5 \) as \( \sum_{i=1}^{4}S_{P_r}(w_i) \geq 4p + 5 \). This contradicts the fact that at the end of round \( r \), at most \( p + 1 \) agents are present at every node in \( P_r \). So, the value of \( x_4 - x_5 \) must be at least 1. Note that at node \( w_i \), if an agent attempts to move through some port \( \lambda \) in \( P_r \), but there is no outgoing edge via port \( \lambda \), the move fails. For instance, some agents at nodes \( w_1 \) and \( w_5 \) may attempt to move through ports that lead to \( w_5 \) and \( w_1 \), respectively. Since the edge between \( w_1 \) and \( w_5 \) does not exist in \( P_r \), these moves fail. However, in \( P_r' \), the edge \( \overline{w_1w_5} \) appears, so the same movements become successful. For this reason, in Fig.~\ref{fig:global}, we use \( x_1' \) to denote a move from \( w_1 \) to \( w_5 \), and \( y_5' \) to denote a move from \( w_5 \) to \( w_1 \). In \( P_r' \), one might argue that if \( x_1' - y_5' = x_4 - y_5 \) holds, where \( x_1' \) agents move from node \( w_1 \) to \( w_5 \), and \( y_5 \) agents move from node \( w_5 \) to \( w_1 \), then the extra load of agents can be distributed across nodes \( w_5, w_6, \ldots, w_n \), and it might be possible to ensure that at the end of round \( r \), at most \( p + 1 \) agents are present at every node in \( P_r' \). Note that each agent in $P_r'$ has the same 0-hop view as in $P_r$; as previously mentioned, the decisions of agents at \( w_2 \), \( w_3 \), $w_4$ remain the same as in \( P_r \). Below, we show that due to this reason, there is a node with at least \( p + 2 \) agents at the end of round \( r \) in \( P_r' \).

\begin{itemize}
    \item \textbf{In Case 1:} In this case, $x_4-y_5\geq 3+c_1+c_2$. 
        \begin{align*}
           E_{P_r'}(w_4)&=S_{P_r'}(w_4)+x_3-y_4\\
            &=S_{P_r}(w_4)+x_3-y_4,  \; \text{ as } \;S_{P_r}(w_i)=S_{P_r'}(w_i)\\
             &= S_{P_r}(w_4)+x_3-y_4 + (y_5-x_4)-(y_5-x_4)\\
             &=S_{P_r}(w_4)+(y_5+x_3)-(y_4+x_4)+(x_4-y_5)\\
              &=  p+(x_4-y_5)-c_2, \;\; \text{ as } \;S_{P_r}(w_4)+(y_5+x_3)-(y_4+x_4)=E_{P_r}(w_4)=p-c_2\\
               E_{P_r'}(w_4)&\geq p+2+c_1, \; \text{ as } \;x_4 - y_5 \geq 3+c_1+c_2\\
              \therefore E_{P_r'}(w_4)&\geq p+2, \; \text{ as } \; c_1\geq 0
        \end{align*}
    \item \textbf{In Case 2:} In this case, $x_4-y_5\geq 2+c_2$.      
        \begin{align*}
            E_{P_r'}(w_4)&=S_{P_r'}(w_4)+x_3-y_4\\
            &=S_{P_r}(w_4)+x_3-y_4,  \; \text{ as } \;S_{P_r}(w_i)=S_{P_r'}(w_i)\\
             &= S_{P_r}(w_4)+x_3-y_4 + (y_5-x_4)-(y_5-x_4)\\
             &=S_{P_r}(w_4)+(y_5+x_3)-(y_4+x_4)+(x_4-y_5)\\
              &=  p+(x_4-y_5)-c_2, \;\;\text{ as } \;S_{P_r}(w_4)+(y_5+x_3)-(y_4+x_4)=E_{P_r}(w_4)=p-c_2\\
              \therefore E_{P_r'}(w_4)&\geq p+2, \; \text{ as } \;x_4 - y_5 \geq 2+c_2
        \end{align*}
    \item \textbf{In Case 3:} In this case, $x_4-y_5\geq 2$. 
        \begin{align*}
            E_{P_r'}(w_4)&=S_{P_r'}(w_4)+x_3-y_4\\
            &=S_{P_r}(w_4)+x_3-y_4,  \; \text{ as } \;S_{P_r}(w_i)=S_{P_r'}(w_i)\\
             &= S_{P_r}(w_4)+x_3-y_4 + (y_5-x_4)-(y_5-x_4)\\
             &=S_{P_r}(w_4)+(y_5+x_3)-(y_4+x_4)+(x_4-y_5)\\
              &=  p+1+(x_4-y_5), \;\; \text{ as } \;S_{P_r}(w_4)+(y_5+x_3)-(y_4+x_4)=S_{P_r}(w_4)=p+1\\
              E_{P_r'}(w_4)&\geq p+3+c_1, \; \text{ as } \;x_4 - y_5 \geq 2+c_1\\
              \therefore E_{P_r'}(w_4)&\geq p+3, \; \text{ as } \; c_1\geq 0
        \end{align*}
\item \textbf{In Case 4:} In this case, $x_4-y_5\geq 1$.
\begin{align*}
            E_{P_r'}(w_4)&= S_{P_r'}(w_4)+x_3-y_4\\
            &= S_{P_r}(w_4)+x_3-y_4, \; \text{ as } \;S_{P_r}(w_i)=S_{P_r'}(w_i)\\
             &= S_{P_r}(w_4)+x_3-y_4 + (y_5-x_4)-(y_5-x_4)\\
             &=S_{P_r}(w_4)+(y_5+x_3)-(y_4+x_4)+(x_4-y_5)\\
              &=  p+1+(x_4-y_5), \;\; \text{ as } \;S_{P_r}(w_4)+(y_5+x_3)-(y_4+x_4)=E_{P_r}(w_4)=p+1\\
              \therefore\; E_{P_r'}(w_4)&\geq p+2, \; \text{ as } \;x_4 - y_5 \geq 1
        \end{align*}
\end{itemize}
In all cases, either node $w_3$ or $w_4$ has at least $p+2$ agents in $P_r'$ at the end of round $r$. Therefore, the dispersion is not achieved. Since in every round $r\geq 0$, $\mathcal{G}_r$ is connected. At every round $r$, since $|E_r|=n-1$ and $|E|=\frac{n(n-1)}{2}$, $\overline{E}_r=|E|-|E_r|=\frac{n(n-1)}{2}-n+1=\frac{n(n-2)+2}{2}\leq n^2\leq\ell$. Therefore, $\mathcal{G}$ is $\ell$-bounded 1-interval connected graph. Also, this proof is independent of nodes having infinite storage, agents having infinite memory and knowledge of $\ell$, $n$, $k$. This completes the proof.  
\end{proof}

\begin{remark}
    Theorems~\ref{thm:imp_1hop_l>1} and~\ref{thm:imp_global_l>1} hold for $\ell \geq 25$. A team of $k = pn + q$ agents, where $p,q \in \mathbb{N} \cup \{0\}$, $5 \leq q \leq n - 1$, and $n \in [5, \lfloor \sqrt{\ell} \rfloor]$, requires both 1-hop visibility and 1-hop communication to solve $k$-balanced dispersion in $\ell$-bounded 1-interval connected graphs.

    In both Theorems~\ref{thm:imp_1hop_l>1} and~\ref{thm:imp_global_l>1}, we assume that there is at least one node with more than $\lceil k/n \rceil$ agents. As noted in Section~\ref{sec:intro}, it is possible that even when a node has exactly $\lceil k/n \rceil$ agents, $k$-balanced dispersion has not yet been achieved. We leave this case as an open question. 
\end{remark}

Now, we provide the memory lower bound result to solve $k$-balanced dispersion in static graphs for $k \geq n+1$. The proof of this lemma is motivated by \cite{Augustine_2018}. In \cite{Augustine_2018}, they have provided the memory lower bound $\Omega(\log k)$ to solve dispersion on a static graph when $k\leq n$. We extend their idea for the memory lower bound to solve dispersion in static graphs $G$ when $k\geq n+1$.
\begin{lemma}\label{lm:memory_lower}
        Assuming all agents are given the same amount of memory, agents require $\Omega(\log n)$ bits of memory each for any deterministic algorithm to solve $k$-balanced dispersion, where $k \geq  n+1$.
\end{lemma}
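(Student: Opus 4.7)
The plan is to extend the $\Omega(\log k)$ memory lower bound of Augustine et al.~\cite{Augustine_2018}, which is stated for $k \leq n$, to the regime $k \geq n+1$, where the target bound degrades to $\Omega(\log n)$. I would argue by contradiction: assume a deterministic algorithm $\mathcal{A}$ solves $k$-balanced dispersion on every static graph using $m = o(\log n)$ bits of working memory per agent, so that each agent has at most $2^{m} = o(n)$ internal memory states available at any round.

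Next I would exhibit a hard instance. It suffices to handle the canonical case $k = n+1$ (so $p = 1$, $q = 1$), from which the general $k \geq n+1$ result follows by padding arguments analogous to those in \cite{Augustine_2018}. Choose a symmetric anonymous footprint such as the clique $K_{n}$, together with an initial configuration in which all $n+1$ agents are co-located at a single node $v_{0}$. Any valid $k$-balanced dispersion on this instance must terminate with exactly one node holding two agents and each of the remaining $n-1$ nodes holding exactly one agent; in particular, the $k$ agents must collectively reach all $n$ nodes. Because the footprint is highly symmetric and the nodes are anonymous, each agent's choice of terminal node is informed only by its ID, its working-memory trajectory, and its local observations during the execution.

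I would then invoke an indistinguishability/pigeonhole argument in the spirit of \cite{Augustine_2018}: if two agents occupy the same node in the same round with the same working-memory content and the same local view, then their subsequent trajectories coincide for all future rounds. With fewer than $n$ distinct memory states available, a pigeonhole step applied across the $n+1$ co-located agents (and carried forward through the execution) forces some pair of agents, whose terminal nodes in any valid dispersion must differ, to synchronize onto a single trajectory at some round. This contradicts the requirement that all $n$ nodes be covered in the final configuration, delivering the desired lower bound.

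The principal obstacle is carefully separating the role of the agent's ID, drawn from the range $[1, n^{c}]$, from the working memory whose size we are bounding; since merely storing an ID verbatim already costs $\Omega(\log n)$ bits, the argument must treat the ID as an input the agent may consult each round but cannot fully internalize, and must show that no amount of ID-reading compensates for insufficient memory to record the information derived from those reads. Faithfully adapting the state-versus-input accounting from the $k \leq n$ proof of \cite{Augustine_2018} to the $k \geq n+1$ regime is the main technical work, and ensuring that the balanced-dispersion constraint (rather than mere dispersion) is what drives the $n$ distinct terminal nodes is the key conceptual step unique to our setting.
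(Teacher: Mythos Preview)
Your approach matches the paper's: assume $o(\log n)$ bits per agent, co-locate all agents, apply pigeonhole to find agents in identical states, and conclude they move in lockstep and hence cannot cover all $n$ nodes required by $k$-balanced dispersion. The paper's proof is a single short paragraph doing exactly this.

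Where you diverge is in treating the ID as a separate oracle input requiring delicate ``state-versus-input accounting.'' In this paper's model (and in \cite{Augustine_2018}), the ID is simply part of the agent's stored state---the algorithm section explicitly lists $a_i.ID$ among the parameters each agent maintains in memory. With $o(\log n)$ total bits there are only $2^{o(\log n)} = n^{o(1)}$ possible complete states, ID included, so pigeonhole among $k \geq n+1$ co-located agents immediately produces at least two indistinguishable agents that thereafter execute identical moves. Your ``principal obstacle'' disappears under this reading, and the padding from $k = n+1$ to general $k$ as well as the specific choice of $K_n$ are likewise unnecessary: the pigeonhole step works directly for every $k \geq n+1$ and on any footprint graph.
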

\begin{proof}
    Suppose all agents have $o(\log n)$ bits of memory. Each agent’s state space is $2^{o(\log n)}=n^{o(1)}$. Since there are at least $n+1$ agents, by the pigeonhole principle, there exist at least $k-n+2$ agents with the same state space. Let agents $a_1$, $a_2$, \ldots, $a_{k-n+2}$ have the same state space. Let us suppose that all agents are initially co-located at some node. Since all agents run the same deterministic algorithm, and agents $a_1$, $a_2$, \ldots, $a_{k-n+2}$ are co-located initially, they will run the same moves. It means that agents $a_1$, $a_2$, \ldots, $a_{k-n+2}$ remain stay together. Therefore, by the end of the algorithm, agents $a_1$, $a_2$, \ldots, $a_{k-n+2}$ are at the same node. Solving dispersion requires a configuration where there is at least one agent at every node. We will never solve dispersion because there is no way for agents $a_1$, $a_2$, \ldots, $a_{k-n+2}$ to settle down on at least two different nodes. Therefore, all $k$ agents are at most $n-1$ nodes by the end of the algorithm. Thus, to solve the dispersion problem, agents need $\Omega(\log n)$ memory each. This completes the proof.
\end{proof}

\section{$k$-balanced dispersion in $\ell$-bounded 1-interval connected graphs}\label{sec:l-bdd-1-int}
In this section, we present an algorithm to solve $k$-balanced dispersion in $\ell$-bounded 1-interval connected graphs, assuming agents are equipped with $1$-hop visibility and global communication. The algorithm in \cite{Ajay_dynamicdisp} solves dispersion for $k \leq n$ agents, but it works in a model without a footprint, where port numbers are assigned dynamically based on the current degree of a node. In contrast, our model assumes a fixed footprint graph $G$, so port numbers remain fixed throughout. In this section, we will extend the algorithm of \cite{Ajay_dynamicdisp} to solve $k$-balanced dispersion in $\ell$-bounded 1-interval connected graphs when agents are equipped with 1-hop visibility and global communication. Let’s refer to the algorithm in \cite{Ajay_dynamicdisp} as \( Algo\_weak\_disp(k) \). First, we recall the idea of their algorithm.

\subsection{High-level overview of $Algo\_weak\_disp(k)$}
 The algorithm proceeds in synchronous rounds, and each round has two key steps. First, the agents identify the connected components of $\mathcal{G}_r$ that contain at least one agent. Second, they perform a \emph{slide} operation, where agents try to fill a hole. Each agent is equipped with 1-hop visibility and global communication. At the beginning of each round, the minimum ID agent $a_j$ at node $v$ broadcasts all the information which it gets using 1-hop visibility. As a result, each agent can reconstruct a consistent partial snapshot of the current graph $\mathcal{G}_r$. This reconstructed view consists of subgraphs $H_1, H_2, \ldots, H_x$, where there is at least one agent at nodes of $H_i$ for all $i$, and there are no edges in $\mathcal{G}_r$ that connect nodes of $H_i$ and $H_j$ for all $i\neq j$. 

Now, suppose in some component $H_i$ there exists a multinode $v_m$, and within $H_i$ there is a path $P = v_1, v_2, \ldots, v_l$, where $v_1 = v_m$, each node from $v_2$ to $v_l$ have exactly one agent, and one of the neighbors of $v_l$, say $v_h$, is a hole. In this configuration, a coordinated slide can take place: the agent at $v_1$ moves to $v_2$, the agent at $v_2$ moves to $v_3$, and so on, until the agent at $v_l$ moves to the hole $v_h$. Due to this slide, the number of agents of multinode(s) decreases, and it eventually achieves dispersion for $k\leq n$ agents.

Based on this, they have the following theorem.

\begin{theorem}\label{thm:main_result}
   \cite{Ajay_dynamicdisp} Given $k \leq n$ agents placed arbitrarily on the nodes of any $n$-node graph $\mathcal{G}_r$, where $\mathcal{G}_r$ is the graph at round $r\geq 0$.
  \( Algo\_weak\_Disp(k) \) solves the dispersion in $\Theta(k)$ rounds with $\Theta(\log k)$ bits at each agent in the synchronous setting with global communication and 1-hop visibility. Also, all agents understand that dispersion has been achieved and terminated. 
\end{theorem}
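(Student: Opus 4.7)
The plan is to establish three claims about $Algo\_weak\_Disp(k)$: (i) at every round the agents agree on a reconstructed partial snapshot, (ii) whenever the current configuration is not yet dispersed, a coordinated slide can be executed that reduces a well-chosen potential, and (iii) the round and memory budgets match the claimed bounds. The correctness of the termination test will come essentially for free once (i) is in place.

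First I would verify the snapshot reconstruction. In the communicate step of round $r$, the minimum-ID agent at each occupied node $v$ broadcasts its 1-hop view: the agent IDs at $v$, the port labels of the incident edges of $v$ that are present in $\mathcal{G}_r$, and the IDs at each neighbour of $v$. Global communication lets every agent take the union of these broadcasts, producing a consistent description of the subgraphs $H_1,\ldots,H_x$ induced by occupied nodes together with their one-hop neighbourhoods. Determinism of this aggregation is what makes the coordinated slide feasible, since every agent must agree on the chosen multinode, the chosen hole and the chosen path.

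Next I would argue progress. Define the potential $\Phi(r)=\sum_{v:\,\alpha_r(v)\geq 2}(\alpha_r(v)-1)$. Because $k\leq n$, the existence of a multinode forces the existence of at least one hole. In the component $H_i$ containing the smallest-ID multinode $v_m$, run a canonical BFS in the reconstructed view to pick the shortest path $P=v_1,\ldots,v_l,v_h$ with $v_1=v_m$, every internal $v_j$ holding exactly one agent, and $v_h$ an empty neighbour of $v_l$. Executing the slide (for $j=1,\ldots,l$ one agent at $v_j$ moves to $v_{j+1}$) reduces $\Phi$ by exactly one without creating a new multinode. Iterating, $\Phi$ reaches zero after at most $\Phi(0)\leq k-1$ successful slides, which gives the $O(k)$ upper bound; a matching $\Omega(k)$ lower bound comes from an instance with all agents stacked at one endpoint of a long path, where each slide advances only one agent by one edge. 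The hard part is showing that such a path is always available in the reconstructed view whenever $\Phi(r)>0$; this is where the 1-interval connectivity assumption of the underlying model of \cite{Ajay_dynamicdisp} is used, by arguing that the occupied component containing $v_m$ must touch a hole either directly in $\mathcal{G}_r$ or within a constant number of subsequent snapshots.

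Finally, each agent stores only its own ID, a bounded amount of bookkeeping for the slide, and a constant number of other agent IDs selected from the reconstructed view, all of which fit in $\Theta(\log k)$ bits; the matching lower bound is the argument of Lemma~\ref{lm:memory_lower} specialised to $k\leq n$. For termination, each agent checks from its reconstructed view whether any $H_i$ still contains a multinode and halts in the first round in which none does; consistency of the reconstruction ensures all agents halt simultaneously. The principal obstacle is the progress step above — ruling out an adversary that repeatedly hides every hole from every multinode's component while still respecting connectivity at every round — and this is where the main technical effort is concentrated.
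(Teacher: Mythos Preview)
The paper does not actually prove this theorem; it is quoted from \cite{Ajay_dynamicdisp}, and the paper only sketches the slide mechanism and restates the key progress lemma (Lemma~\ref{lm:decrement_holes}): if the agents occupy $l$ nodes at the beginning of round $r$, they occupy at least $l+1$ nodes at the beginning of round $r+1$. Your potential $\Phi(r)=\sum_{v:\alpha_r(v)\ge 2}(\alpha_r(v)-1)=k-|\{v:\alpha_r(v)\ge 1\}|$ is exactly $k$ minus the number of occupied nodes, so your potential argument and Lemma~\ref{lm:decrement_holes} are the same invariant phrased two ways.

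The one genuine gap in your write-up is the hedged clause ``either directly in $\mathcal{G}_r$ or within a constant number of subsequent snapshots.'' No waiting is ever needed: in the model of \cite{Ajay_dynamicdisp} every snapshot $\mathcal{G}_r$ is connected (1-interval connectivity), so whenever a multinode and a hole coexist there is a path between them in $\mathcal{G}_r$ itself. Take a shortest such path; its last occupied vertex $v_l$ is adjacent to a hole, and if any internal vertex were a multinode you could restart the path there, yielding $v_m=v_1,\ldots,v_l,v_h$ with every internal vertex holding exactly one agent. All of this lies in the reconstructed view because every occupied node broadcasts its 1-hop neighbourhood. Thus the ``principal obstacle'' you flag dissolves immediately from connectivity of $\mathcal{G}_r$; there is no adversary that can hide every hole from every multinode's component while keeping $\mathcal{G}_r$ connected, and the slide succeeds in a single round.

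A minor slip: Lemma~\ref{lm:memory_lower} is stated for $k\ge n+1$; for $k\le n$ the $\Omega(\log k)$ memory lower bound you need is the one from \cite{Augustine_2018}, as the paper remarks just before Lemma~\ref{lm:memory_lower}.
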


The authors have the following lemma, which is used to prove Theorem \ref{thm:main_result}.
\begin{lemma}\label{lm:decrement_holes}
  \cite{Ajay_dynamicdisp} Consider any $n$-node dynamic graph $\mathcal{G}_r$ at round $r \geq 0$. If $k \leq n$ agents are positioned on $l$ nodes of $\mathcal{G}_r$ at the beginning of round $r$, then at the beginning of round $r+1$, the agents are positioned on at least $l + 1$ nodes of $\mathcal{G}_r$. 
\end{lemma}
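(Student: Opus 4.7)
The plan is to show that as long as the configuration is not yet dispersed, one round of $Algo\_weak\_disp(k)$ converts some hole into an occupied node without emptying any previously occupied node. I would first argue that the lemma is substantive only when $l < k \leq n$, for otherwise $l = k$ (every occupied node carries exactly one agent) and dispersion is already reached. Under $l < k \leq n$, two facts follow at once: (i) at least one \emph{hole} exists, since $l < n$, and (ii) at least one \emph{multinode} exists, by pigeonhole on $k$ agents distributed over $l < k$ nodes.

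Next, I would exploit the fact that every snapshot $\mathcal{G}_r$ is connected in the model of \cite{Ajay_dynamicdisp}. Fix any multinode $v_m$ and run BFS from $v_m$ in $\mathcal{G}_r$; let $v_h$ be a hole at minimum BFS distance (such a hole exists by (i)). Let $v_m = v_1, v_2, \ldots, v_t = v_h$ be the BFS shortest path. By minimality of the distance, each intermediate vertex $v_2, \ldots, v_{t-1}$ must be occupied, otherwise a hole closer than $v_h$ would have been discovered. This is the slide path the algorithm searches for in the overview.

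I would then analyse the slide. For $i \in \{1,\ldots,t-1\}$, one agent moves from $v_i$ to $v_{i+1}$. The vertex $v_h$ was previously empty and now receives an agent, so it becomes newly occupied; each intermediate $v_i$ ($2 \leq i \leq t-1$) loses one agent and receives one, hence remains occupied; and $v_1 = v_m$ loses one agent but still has at least one left because it started as a multinode. Consequently, the occupied set strictly gains $v_h$ without losing any prior vertex, so the number of occupied nodes is at least $l+1$ at the beginning of round $r+1$.

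The main obstacle I expect is not the slide itself but the \emph{coordination}: the agents must agree on a single multinode and a single slide path, and must avoid conflicting moves from other agents that might re-empty a node. I would resolve this by invoking the algorithm's toolkit: using 1-hop visibility, the minimum-ID agent at each occupied vertex broadcasts its local view, and via global communication all agents reconstruct the subgraphs $H_1,\ldots,H_x$ consistently. Within each $H_i$ that contains both a multinode and a hole, agents canonically elect a source multinode (e.g., containing the globally minimum ID) and a canonical BFS slide path (breaking ties deterministically by port labels), so exactly the agents on that path perform the move. Slides inside distinct $H_i$ are vertex-disjoint and therefore non-interfering, and any $H_i$ with no hole or no multinode is irrelevant to the count; hence in the single $H_i$ where the argument above applies the gain of at least one occupied node is realised globally, completing the proof.
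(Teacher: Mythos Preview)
This lemma is not proved in the present paper: it is quoted from \cite{Ajay_dynamicdisp} and used as a black box (alongside Theorem~\ref{thm:main_result}) in the proof of Theorem~\ref{thm:algo_final_L_bdd}. There is therefore no in-paper proof to compare your proposal against; the paper only supplies the high-level overview of $Algo\_weak\_disp(k)$ in Section~\ref{sec:l-bdd-1-int}.

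Your reconstruction is faithful to that overview and is the natural argument: when $l<k\leq n$ a multinode and a hole coexist; connectivity of the snapshot $\mathcal{G}_r$ gives a shortest path from the multinode to a nearest hole whose interior is entirely occupied (so it lies inside a single component $H_i$ of occupied nodes, with the hole adjacent to its last vertex); the coordinated slide fills that hole while every vertex on the path retains at least one agent; and global communication together with 1-hop visibility lets all agents compute the same $H_i$'s and agree on a canonical slide, so no conflicting move can empty a previously occupied node. The caveat you flag --- that the statement is vacuous once $l=k$ (dispersion already reached, agents terminate) --- is the right reading of the lemma as used.
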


\subsection{Our algorithm with Correctness}
Let $a_i$ be the agent with the smallest ID agent at node $v$ in $\mathcal{G}_r$ during round $r$. For each port label $q \in \{0, 1, \ldots, \deg(v) - 1\}$, let $u$ be the neighbor of $v$ accessible via port $q$, and let $ID_v = a_i.ID$. Agent $a_i$ constructs and broadcasts the information $S_v$ using global communication, where $S_v$ is defined as follows. If node $u$ contains one or more agents, define the tuple $S_v^q = \left(\alpha(v),\, ID_v,\, q,\, ID_u\right),$ where $\alpha(v)$ denotes the number of agents at node $v$, and $ID_u$ is the smallest ID of at node $u$. If node $u$ is a hole, define the tuple as $S_v^q = \left(\alpha(v),\, ID_v,\, q,\, \perp\right),$ where $\perp$ indicates that port $q$ leads to an unoccupied node. Based on these, $S_v = \left\{S_v^q \;\middle|\; 0 \leq q < \deg(v)\right\}.$ In $Algo\_weak\_disp(k)$, minimum ID agent at node $v$ broadcasts $S_v$ at every round $r$. The detailed description of our algorithm is as follows.

Each agent $a_i$ maintains two parameters.

\begin{itemize}
    \item $a_i.ID$: In this parameter, it stores its ID.
    \item $a_i.P$: This parameter is a dummy variable that takes the value either $0$ or $1$. This parameter agents use to understand whether each node has at least one agent or not. Initially, at round $0$, $a_i.P = 0$.
\end{itemize}

 At round $r$, agent $a_i$ at node $v$ does the following. 

\begin{itemize}
    \item \textbf{If $\bm{a_i.P=0}$:} 
    
    \noindent If agent $a_i$ is the minimum ID agent at node $v$, then it does the following.

\begin{itemize}
    \item If it finds $S_u^q = \left(\alpha(u),\, ID_u,\, q,\, \perp\right)$ for some node $u$, or $S_v^q = \left(\alpha(v),\, ID_v,\, q,\, \perp\right)$, then it does the following.

    \begin{itemize}
        \item If it finds some tuple $S_{u_1}$ with $\alpha(u_1) > 1$ for some node $u_1$, or if $\alpha(v) > 1$, then it executes $Algo\_weak\_disp(k)$.
        \item Otherwise, it terminates.
    \end{itemize}
    
    \item If it does not find any such $S_u^q = \left(\alpha(u),\, ID_u,\, q,\, \perp\right)$ for some node $u$ and $S_v^q = \left(\alpha(v),\, ID_v,\, q,\, \perp\right)$, then it sets $a_i.P=1$.
\end{itemize}

\noindent If agent $a_i$ is not the minimum ID agent at node $v$, then it does the computation for the decision of the minimum ID agent at $v$, say $a_l$. It follows $a_l$'s decision as follows:

\begin{itemize}
    \item If $a_l$ decides to execute $Algo\_weak\_disp(k)$, then $a_i$ does not do anything in this round.
    \item If $a_l$ terminates, then $a_i$ also terminates.
    \item If $a_l$ sets $a_l.P=1$, then $a_i$ also sets $a_i.P=1$.
\end{itemize}

\item \textbf{If $\bm{a_i.P=1}$:} Suppose agent $a_i$ at node $v$ receives $S_{v_1}$, $S_{v_2}$, \ldots, $S_{v_y}$ for nodes $v_1$, $v_2$, \ldots, $v_y$ respectively. Let $V_1=\cup_{i=1}^y \{v_i\} \cup \{v\}$. Let $k_1 = \sum_{u \in V_1} \alpha(u)$, where $\alpha(u)$ is extracted using $S_u$, $n_1$ be the cardinality of set $V_1$, $L=\big\lceil \frac{k_1}{n_1}\big \rceil$, and $l=\big\lfloor \frac{k_1}{n_1}\big \rfloor$.

Define \( x = \max\{\alpha(v) : v \in V_1\} \) and \( y = \min\{\alpha(v) : v \in V_1\} \). Let \( U_x = \{u_j \in V : \alpha(u_j) = x \text{ and }j\in[1,l]\} \), \( W_y = \{w_{j'} \in V : \alpha(w_{j'}) = y \text{ and }j'\in[1,l']\} \). For each \( u_j \in U_x \), let \( c_{j} \) be the minimum ID agent at node \( u_j \); for each \( w_{j'} \in W_y \), let \( b_{j'} \) be the minimum ID agent at node \( w_{j'} \). Without loss of generality, let $c_1$ be the minimum ID agent among $c_j$s, and $b_1$ be the minimum ID agent among $b_{j'}$s.

\noindent Agent \( a_i \) finds sets \( U_x \), \( W_y \), and the IDs of all \( a_j \) at node $u_j$ and \( b_{j'} \) at node $w_{j'}$. Also, using $Algo\_weak\_disp(k)$, it can construct a partial map of $\mathcal{G}_r$ (say $G'$). Using this information, it does the following at round $r$.

\noindent If agent $a_i$ is minimum ID agent at node $v$, then it finds a shortest path $P$ between $u_1$ and $w_1$ in $G'$. If there is more than one such shortest path, then it selects the lexicographically shortest path between $u_1$ and $w_1$. Let $P=z_1(=u_1)\sim z_2 \ldots \sim z_y(=w_1)$.
\begin{itemize}
    \item If $x\geq L+1$ or $y\leq l-1$,
    \begin{itemize}
        \item If $c_1.ID=a_i.ID$, then agent $a_i$ moves to node $z_2$. 
        \item If $c_1.ID\neq a_i.ID$, and it is at node $z_j$, and $j<{y}$, then it moves to node $z_{j+1}$. If agent $a_i$ is at node $z_{y}$, then it does not do anything.
    \end{itemize}
    \item Else if $x\leq L$ and $y\geq l$, then it terminates.
\end{itemize}

\noindent If \( a_i \) is not the minimum ID agent at some node \( u \), then it makes the following decision. Let $a_l$ be the minimum ID agent at node $u$. Since agent $a_i$ has the same information as agent $a_l$, it can calculate the decision of agent $a_l$ at round $r$. If it finds that agent $a_l$ decides to terminate in round $r$, then agent $a_i$ terminates. Otherwise, it does not do anything at round $r$.
\end{itemize}

\begin{theorem}\label{thm:algo_final_L_bdd}
    Given $k \geq 1$ agents placed arbitrarily on the nodes of $G$. Our algorithm solves $k$-balanced dispersion in $O(k)$ rounds with $O(\log n)$ memory at each agent in the synchronous setting with global communication and 1-hop visibility. Also, all agents understand that dispersion has been achieved and terminated. 
\end{theorem}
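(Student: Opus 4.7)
The plan is to analyze the execution in two phases tracked by the flag $a_i.P$. In the first phase ($a_i.P = 0$), the algorithm effectively runs the sub-routine $Algo\_weak\_disp(k)$ whenever the current snapshot contains both a hole and a multinode, terminating only in the case ``hole but no multinode''. Since $k$ agents placed with no multinode have at most one agent per node, this case implies $k < n$, and such a configuration is automatically balanced (as $\lfloor k/n \rfloor = 0$ and $\lceil k/n \rceil = 1$). In the second phase ($a_i.P = 1$), triggered precisely when every node already holds at least one agent, the algorithm repeatedly performs a single-unit slide along a shortest path from a maximum-load node $u_1$ to a minimum-load node $w_1$, stopping when $\max_u \alpha(u) \leq L$ and $\min_u \alpha(u) \geq l$.

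I would first analyze Phase 1 using Lemma \ref{lm:decrement_holes}. The proof of that lemma (sliding agents along a path from a multinode to a hole) uses only the presence of a multinode, a hole, and the connectivity of $\mathcal{G}_r$, so it extends to $k > n$ without modification. Hence the number of occupied nodes strictly increases each round that $Algo\_weak\_disp(k)$ is executed, and after at most $n$ rounds Phase 1 ends either by termination (with a balanced configuration in the $k < n$ branch) or by transitioning to Phase 2 once all nodes are occupied. Since all agents share the same broadcast information, every agent simultaneously sets $a_i.P \gets 1$ at the same round.

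Next, for Phase 2 I would introduce the potential function
\[
\Phi \;=\; \sum_{u \in V} \Bigl( \max\{\alpha(u) - L,\, 0\} \;+\; \max\{l - \alpha(u),\, 0\} \Bigr),
\]
which equals zero exactly on balanced configurations and is initially at most $2k$. A single slide decreases $\alpha(u_1)$ and increases $\alpha(w_1)$ by one, while leaving the intermediate nodes on the shortest path unchanged. A case analysis on whether the triggering condition is $x \geq L+1$, $y \leq l-1$, or both, combined with the averaging facts $y \leq L$ and $x \geq l$ (and the observation that $y = L$ or $x = l$ can only occur in an already balanced configuration), yields that each such round strictly decreases $\Phi$ by at least $1$. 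Hence Phase 2 terminates in at most $2k = O(k)$ rounds. The main obstacle I expect is the careful bookkeeping in this case analysis --- in particular, uniformly handling $q = 0$ versus $q > 0$ and the borderline values $\alpha(u_1) = L+1$, $\alpha(w_1) = l-1$ --- to ensure $\Phi$ never increases along a slide.

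Finally, summing the two phases gives $O(n) + O(k) = O(k)$ total rounds (and $n \leq k$ in the non-trivial Phase-2 regime). For the memory bound, each agent persistently stores only its ID ($O(\log n)$ bits since IDs lie in $[1, n^c]$), the one-bit flag $a_i.P$, and a path index bounded by $n$; quantities such as $L$, $l$, the shortest path in the broadcast view, and per-round aggregates are recomputed each round from the broadcast data without persistent storage, keeping the per-agent memory at $O(\log n)$ bits. For termination detection and correctness, since every agent receives identical broadcast information, all agents compute the same halting predicate in the same round and terminate synchronously in a configuration satisfying the $k$-balanced dispersion condition, which completes the proposed argument.
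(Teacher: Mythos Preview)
Your proposal is correct and follows essentially the same two-phase structure as the paper's proof: invoke Lemma~\ref{lm:decrement_holes} until every node is occupied (or terminate early when $k<n$), then repeatedly slide one agent from the max-load node $u_1$ to the min-load node $w_1$ until balanced. Your potential-function argument for the Phase-2 running time is more explicit than the paper's, which simply asserts the $O(k)$ bound ``as only one node's number of agents reduces per round,'' but the underlying decomposition, correctness reasoning, and memory accounting are the same.
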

\begin{proof}
If there is a hole and a multinode at round $r$, every agent receives this information. Due to Lemma~\ref{lm:decrement_holes}, $Algo\_weak\_disp(k)$ decreases the number of holes by at least one in every round. If $k \leq n$, then by Theorem~\ref{thm:main_result}, agents achieve dispersion within the first $n$ rounds and terminate. If $k > n$, initially there can be at most $n-1$ holes, so by Lemma~\ref{lm:decrement_holes}, there will be no hole after the first $n$ rounds. After that, each agent $a_i$ sets $a_i.P = 1$. Therefore, at any round $r$ where $a_i.P = 1$, every node in $\mathcal{G}_r$ has at least one agent. Since each node is occupied, $Algo\_weak\_disp(k)$ enables every agent to construct the map of $\mathcal{G}_r$. Therefore, in our algorithm at round $r$, graph $G'$ is nothing but $\mathcal{G}_r$. Since each node has at least one agent, set $V_1$ is nothing but $V$, $k_1 = k$ and $n_1 = n$. Hence, agents correctly know that at most \( L = \lceil k/n \rceil \) agents and at least \( l = \lfloor k/n \rfloor \) agents can be present at a node. Once \( a_i.P = 1 \), each agent \( a_i \) can compute the same values of \( x \), \( y \), \( U_x \), and \( U_y \). Therefore, for every agent, the choice of the path \( P \) between nodes \( u_1 \) and \( w_1 \) is unique in our algorithm. In this case, the size of $u_1$ decreases by 1 at the end of round $r$. Agents continue this process till they find that every node has at least $l$ agents and at most $L$ agents. Therefore, when agents terminate, agents are in a $k$-balanced dispersed configuration.

 For $k \leq n$, dispersion is achieved in $O(k)$ rounds as per Theorem~\ref{thm:main_result}. For $k > n$, dispersion completes in $O(n + k)=O(k)$ rounds, where $O(n)$ to remove all holes, then $O(k)$ to maintain agents between $l$ and $L$ agents at every node, as only one node’s number of agents reduces per round. Agents only remember $a_i.P$ and $a_i.ID$, which requires $O(\log n)$ memory per agent. This completes the proof.
\end{proof}

\begin{remark}
    If $n+1 \leq k \leq n^{c_1}$ for some constant $c_1$, then by Lemma~\ref{lm:memory_lower}, the memory required by the agents is $\Theta(\log n)$. Hence, in this range of $k$, the memory requirement by the agents is optimal.
\end{remark} 

\section{Rooted $k$-balanced dispersion in 1-bounded 1-interval connected graphs}\label{sec:1-bdd-1-int}
In Section~\ref{sec:imp_l-bdd}, we have shown that solving $k$-balanced dispersion for $k = pn$, $p \in \mathbb{N}$, in 1-bounded 1-interval connected graphs requires both 1-hop visibility and 1-hop communication. In this section, we present two algorithms to solve $k$-balanced dispersion in 1-bounded 1-interval connected graphs when $k$ agents are co-located. The first algorithm is when $k = n + 1$, and the second extends this to $k \in \{pn + 1, pn + 2\}$, where $p \in \mathbb{N}$. The question remains open whether 1-hop visibility and 1-hop communication are necessary to solve $k$-balanced dispersion for $k = pn + q$, where $q \in [3, n - 1]$, in 1-bounded 1-interval connected graphs. We begin by describing the algorithm for $k = n + 1$, and then generalise it to the case of $k = pn + 1$ and $k = pn + 2$. Both algorithms assume agents have face-to-face (f-2-f) communication, 0-hop visibility, and prior knowledge of $n$.

\subsection{Dispersion with $n+1$ co-located agents}\label{sec:gen_graph}
  In the algorithm, we use the idea of a depth-first search (DFS) traversal by mobile agents from \cite{Augustine_2018}, which also solves the dispersion problem in static graphs $G$ for $k\leq n$ agents. Before describing the main idea of \cite{Augustine_2018}, we first introduce three technical terms used in their work, which we also use in our algorithm description.

\begin{itemize}
    \item \textbf{Settled agent:} An agent $a_i$ is said to be settled at node $v$ if it occupies $v$ at round $r$ and never leaves $v$ in any round after $r$.
    \item \textbf{Unsettled agent:} An agent is unsettled if it is not settled.
    \item \textbf{Root node of the DFS:} The initial node from which an agent starts the DFS traversal is called the root node of the DFS.
\end{itemize}

The idea of \cite{Augustine_2018} to solve dispersion for $k\leq n$ co-located agents is as follows. At each node $v$, a group of agents chooses an unexplored port $p$ and traverses through it: (i) if it reaches a node $u$ and finds that there is no agent settled at $u$, then one agent settles at $u$, and stores port $p'$ (through which it entered node $u$) as its parent port. Now, the unsettled agents try to move through port $(p'+1)$ mod $\delta_u$ in the explore state. (ii) Else, if the group of agents reach a node which already has a settled agent, the agent returns to node $v$ in the backtrack state. If the unsettled agents reach node $v$ in the backtrack state, then the agents compute $q=(p+1)$ mod $\delta_v$. If port $q$ is unexplored, it moves through port $q$ in the explore state. Else, if port $q$ equals the parent port of $v$, then the agents move through port $q$ in the backtrack state. The DFS algorithm takes $4m$ rounds to run on any graph, where $m$ is the number of edges in that graph. The information required for the group of agents at each node to run this algorithm can be stored by the settled agent that is already present at that node.

Recall that, according to our model, an agent does not know if any associated edge with respect to its current position is missing or not. This makes the adversary powerful as it can precompute the agents' decision at a node and accordingly can delete an associated edge at the start of that round. If the group of unsettled agents decides to move through port $p$ according to our algorithm in round $r$, the adversary can remove the edge corresponding to port $p$ in round $r$. Thus, preventing the group of agents from moving in round $r$. 

\vspace{0.3cm}
\noindent \textbf{High-level idea:} No group of agents moves indefinitely through which its movement was unsuccessful. Initially, there is only one group as the configuration is rooted. Whenever the group has an unsuccessful move, the agents divide into two groups, namely $G_1$ and $G_2$. These two groups run their dispersion algorithm via DFS separately. A missing edge can make the movement unsuccessful for both groups. On unsuccessful movement, $G_1$ never deviates from its DFS traversal path, whereas $G_2$ deviates after some finite waiting period. This keeps the algorithm running, and either $G_2$ gets dispersed, or $G_1$ can move one hop according to its DFS traversal. If $G_2$ gets dispersed, $G_1$ can understand it within some finite rounds and get divided into two groups further. Else, if $G_1$ gets dispersed by moving one hop again and again, $G_2$ can understand it within some finite rounds and gets divided into two groups further. We ensure that none of $G_1$ (or $G_2$) divides further unless $G_2$ (or $G_1$) is already dispersed. This keeps the total number of groups bounded by 2. As the settled agents keep the information required for the DFS traversal of all the groups, it helps the agents store the information using $O(\log n)$ memory. As we have different algorithms for the groups and different criteria for the groups to be divided, while dividing a group into two, each new group is given a unique label so that $G_1\text{ and }G_2$ can execute their respective algorithms.
\begin{algorithm}
    \caption{Dispersion in $1$-TVG with $n+1$ agents}\label{algo:disp}
    \While{True}
    {
        $a_i.r=a_i.r+1$\\
        \If{$a_i$ belongs to $G_1$ and $a_i.settled=0$}
        {
            call Algorithm \ref{algo:grp_1}
        }
        \ElseIf{$a_i$ belongs to $G_2$ and $a_i.settled=0$}
        {
            call Algorithm \ref{algo:grp_2}
        }
        \ElseIf{$a_i.settled=1$, there is $a_j$ from $G_1$ at the same node and $a_j.settled=0$}
        {
            call Algorithm \ref{algo:settled-1}
        }
        \ElseIf{$a_i.settled=1$, there is $a_j$ from $G_2$ at the same node and $a_j.settled=0$}
        {
            call Algorithm \ref{algo:settled-2}
        }
    }
\end{algorithm}
\subsubsection{The algorithm}\label{sec:algorithm}
In this section, we provide a detailed description of the algorithm for dispersion in 1-bounded 1-interval connected graph $\mathcal{G}$. In our algorithm, unsettled agents may get divided into two groups, which we call $G_1$ and $G_2$ respectively. At any particular round, there can be no more than two groups. However, over time, if all the unsettled agents of one group get settled, the other group can be divided into two groups again. So, if we look throughout the run of the algorithm, there may exist several groups. To identify these groups, we provide unique labelling to each group whenever two groups are formed, and they store their new ID in a parameter. Now we provide the parameters maintained by each agent $a_i$.

% Movement of Group$_1$

\begin{algorithm}
    \caption{Movement of agent $a_i \in G_1$ at $v$ with $a_i.settled=0$}\label{algo:grp_1}
    \If{$a_i.state=explore$}
    {
        \If{$r$ mod 2 =1}
        {
            \If{$a_i.success=True$}
            {
                set $a_i.count_1=0$\\
                \If{there is no agent at node $v$ with $settled=1$}
                {
                    \If{$a_i.ID$ is the minimum from all the unsettled agents present at the current node}
                    {
                        set $a_i.settled=1$\\
                        update $a_i^v(G_1).(parent, \, label, \, dfs\_label)=(a_i.prt\_in,\, a_i.grp\_label,\, a_i.dfs\_label)$\\
                        \If{there is an agent $a_j$ from $G_2$ with $a_j.settled=0$}
                        {
                            update $a_i^v(G_2).(parent,\, label, \, dfs\_label)=(a_j.prt\_in, \, a_j.grp\_label, \, a_j.dfs\_label)$
                        }
                    }
                    \Else
                    {
                        set $a_i.prt\_out=(a_i.prt\_in+1)\mod\delta_v$\\
                        \If{$a_i.prt\_out=a_i.prt\_in$}
                        {
                            set $a_i.state=backtrack$ and move through $a_i.prt\_out$
                                
                        }
                        \Else
                        {
                            move through $a_i.prt\_out$
                        }
                    }
                }
                \ElseIf{there is an agent $r_j$ at node $v$ with $settled=1$}
                {
                    $a_i.prt\_out=(a_i.prt\_in+1)$ mod $\delta_v$\\
                    \If{$label=a_i.grp\_label$, where $label$ is the component of $r_j^v(G_1)$}
                    {
                        $a_i.state=backtrack$ and move through $a_i.prt\_in$
                    }
                    \ElseIf{$label \neq a_i.grp\_label$, where $label$ is the component of $r_j^v(G_1)$}
                    {
                        move through $a_i.prt\_out$
                    }
                }               
            }
            \ElseIf{$a_i.success=False$}
            {
                \If{$a_i.divide=0$}
                {
                    set $a_i.divide=1$ and call $Group\_divide()$
                }
                \ElseIf{$a_i.divide=1$}
                {
                    update $a_i.count_1=a_i.count_1+1$\\
                    \If{$a_i.count_1=16n^2$}
                    {
                        call $Group\_divide()$
                    }
                    \ElseIf{$a_i.count_1<16n^2$}
                    {
                        move through $a_i.prt\_out$
                    }
                }
            }
    }
     \ElseIf{$r$ mod 2=0}
        {
            \If{the movement of agent $a_i$ in round $r-1$ is successful}
            {
                set $a_i.success=True$
            }
            \Else
            {
                set $a_i.success=False$
            }
        }
    }
    \ElseIf{$a_i.state=backtrack$}
    {
        \If{$r$ mod 2=1}
        {
            \If{$a_i.success=True$}
            {
                set $a_i.count_1=0$\\
                \If{$parent$ component of $r_j^v(G_1)$ is $-1$, where $r_j$ is the settled agent at node $v$}
                {
                    set $a_i.prt\_out=(a_i.prt\_in+1)\mod\delta_v$\\
                    set $a_i.state=explore$ and move through $a_i.prt\_out$
                }
                \ElseIf{$parent$ component of $r_j^v(G_1)$ is not $-1$, where $r_j$ is the settled agent at node $v$}
                {
                    
                    set $a_i.prt\_out=(a_i.prt\_in+1)\mod\delta_v$\\
                    \If{$a_i.prt\_out=parent$ and $parent$ is the component of $r_j^v(G_1)$, where $r_j$ is the settled agent at node $v$}
                    { 
                        move through $a_i.prt\_out$
                    }
                    \Else
                    {
                        set $a_i.state=explore$ and move through $a_i.prt\_out$
                    }
                }
            }
            \ElseIf{$a_i.success=False$}
            {
                \If{$a_i.divide=0$}
                {
                    set $a_i.divide=1$ and call $Group\_divide()$
                }
                \ElseIf{$a_i.divide=1$}
                {
                    update $a_i.count_1=a_i.count_1+1$\\
                    \If{$a_i.count_1=16n^2$}
                    {
                        call $Group\_divide()$
                    }
                    \ElseIf{$a_i.count_1<16n^2$}
                    {
                        move through $a_i.prt\_out$
                    }
                }
            }
        }
        \If{$r$ mod 2=0}
        {
            \If{the movement of agent $a_i$ in round $r-1$ is successful}
            {
                set $a_i.success=True$
            }
            \Else
            {
                set $a_i.success=False$
            }
        }
    }
\end{algorithm}

%%%%%%%%%%%%%%%%%%%%%%%%%%%%%%%%%%%%%%%%%%%%%%%%%%%%%%%%%%%%%%%%
\begin{algorithm}
    \caption{$Group\_divide()$}\label{alg:grp_divide}
    let $\{a_1, a_2, ..., a_x\}$ be the unsettled agents present at the current node in the increasing order of their IDs \\
    \If{$i\leq \lceil{\frac{x}{2}}\rceil$}
    {
        update $a_i.grp\_label=a_i.grp\_label.0$ \\
        $a_i.prt\_out=0$\\
        set $a_i.state=explore$\\
        $a_i.skip=-1$\\
        set $a_i.dfs\_label=a_i.dfs\_label+1$\\
        move through $a_i.prt\_out$ 
    }  
    \Else
    {
        update $a_i.grp\_label=a_i.grp\_label.1$ \\
        $a_i.prt\_out=0$
        set $a_i.state=explore$\\
        $a_i.skip=-1$\\
        set $a_i.dfs\_label=a_i.dfs\_label+1$\\
        move through $a_i.prt\_out$
    }
\end{algorithm}

\begin{itemize}
    \item $\bm{a_i.settled}$: It is a binary variable and indicates whether $a_i$ is settled at the current node or not.  If $a_i.settled=0$ then it is not settled whereas $a_i.settled=1$ denotes that agent $a_i$ is settled. Initially $a_i.settled=0$.
    
    \item $\bm{a_i.prt\_in}$: It stores the port through which an agent $a_i$ enters into the current node. Initially, $a_i.prt\_in=-1$. 
    
    \item $\bm{a_i.prt\_out}$: It stores the port through which an agent $a_i$ exits from the current node. Initially, $a_i.prt\_out=-1$.
    
    \item $\bm{a_i.state}$: It denotes the state of an agent $a_i$ that can be either explore or backtrack. Initially $a_i.state=explore$.
    
    \item $\bm{a_i.dfs\_label}$: It stores the number of new DFS traversals started by agent $a_i$. Initially $a_i.dfs\_label=1$.

    \item $\bm{a_i.success}$: When an agent $a_i$ attempts to move from a node $u$ to $v$ via the edge $(u, \,v)$, in the round $t$, it is either successful or unsuccessful. If the agent reaches the node $v$ in the round $t+1$, it is a successful attempt, thus, $a_i.success=True$; otherwise, $a_i.success=False$. Each agent $a_i$ initially has $a_i.success=True$.

    \item $\bm{a_i.skip}$: If agent $a_i$ skips a port, then it stores it in this parameter. Initially, $a_i.skip=-1$. This parameter allows agents of $G_2$ to store port numbers when they skip at the root of their DFS algorithm.

    \item $\bm{a_i.divide}$: Initially, all agents are together and each agent has $a_i.divide=0$. If and when this group of agents divides into two groups for the first time, each agent sets $a_i.divide=1$. There is no change in the values of this variable thereafter.
    
    \item $\bm{a_i.grp\_label}$: It stores a binary string. Initially $a_i.grp\_label=10$. If $a_i.grp\_label$ ends with 0, it implies $a_i$ is a part of $G_1$. Else if $a_i.grp\_label$ ends with 1, it implies $a_i$ is a part of $G_2$. Later, we elaborate on how this variable gets updated. 
    
    \item $\bm{a_i.count_1}$: If agent $a_i \in G_1$, then it stores the number of consecutive odd rounds with $a_i.success=False$. Initially, each agent has $a_i.count_1 = 0$. This parameter is used by agents in $G_1$ to determine when $G_2$ has been fully dispersed and when $G_1$ can be further divided into two new groups.

    \item $\bm{a_i.count_2}$: If agent $a_i \in G_2$ then it stores the number of consecutive odd rounds with $a_i.success=False$. This parameter is used by agents in $G_2$ to determine when $G_2$ has needs

    \item $\bm{a_i.count_3}$: If agent $a_i \in G_2$, then it stores the number of new DFS traversals it started from the time $a_i$ became a part of this group. This parameter is used by agents in $G_2$ to determine when $G_1$ has been fully dispersed and when $G_2$ can be further divided into two new groups.
\end{itemize}

    Let $a_i$ be a settled agent at node $v$. Any unsettled agent belongs to either $G_1$ or $G_2$. When an unsettled agent $a_j$ while doing its movement (along with other unsettled agents, if any, in its group) according to our algorithm, reaches a node $v$ where an agent $a_i$ is already settled, $a_j$ needs some information regarding $v$ from $a_i$ to continue its traversal. In this regard, each settled agent maintains two parameters, one with respect to each of the groups as mentioned below.

\begin{itemize}
    \item $\bm{a_i^{v}(G_1).(parent, \, label, \, dfs\_label)}$: Initially $a_i^{v}(G_1).(parent, \, label, \,  dfs\_label)=(-1, \,-1, \,0 )$. Let $t$ be the round when agent $a_i$ settles at node $v$. Let $t'\geq t$ be the last round when $a_i$ saw some agent of $G_1$, say, agent $a_j$ at node $v$.  In $a_i^{v}(G_1).(parent, \, label, \, dfs\_label)$, $a_i$ stores $(a_j.prt\_in, \,a_j.grp\_label,\, a_j.dfs\_label)$. That is, it stores the information regarding node $v$ w.r.t. the DFS traversal of $G_1$ in round $t'$. Note that throughout this paper, we denote $a_i^{v}(G_1).(parent, \, label, \, dfs\_label)$ by $a_i^{v}(G_1)$.

    \vspace{0.1cm}
    \item $\bm{a_i^{v}(G_2).(parent, \, label, \, dfs\_label)}$: Initially $a_i^{v}(G_2).(parent, \, label, \,  dfs\_label)=(-1, \,-1, \,0 )$. Let $t$ be the round when agent $a_i$ settles at node $v$. Let $t'\geq t$ be the last round when $a_i$ saw some agent of $G_2$, say, agent $a_j$ at node $v$.  In $a_i^{v}(G_2).(parent, \, label, \, dfs\_label)$, $a_i$ stores $(a_j.prt\_in, \,a_j.grp\_label,\, a_j.dfs\_label)$. That is, it stores the information regarding node $v$ w.r.t. the DFS traversal of $G_2$ in round $t'$. Note that throughout this paper, we denote $a_i^{v}(G_2).(parent, \, label, \, dfs\_label)$ by $a_i^{v}(G_2)$. 
\end{itemize}

  Suppose $n+1$ agents are co-located at a node, say $v$, of the graph (rooted configuration). Initially, for each agent $a_i$, $a_i.grp\_label=10$. Therefore, all agents are a part of $G_1$. According to our model, an agent cannot sense the presence of a missing edge unless it attempts to move through it. To tackle this, we take two consecutive rounds to perform one edge traversal. In every odd round, agents move, and in the next even round, it realizes the movement is successful or unsuccessful. Precisely, say an agent $a_i$ attempts to move in an odd round $r$. If its movement is successful, it updates its parameter $a_i.success$ to $True$ in the even round $r+1$. Otherwise, it updates $a_i.success=False$ in the even round $r+1$. Now we proceed with a detailed description of our algorithm (Refer to Algorithm \ref{algo:disp} for the main algorithm).

\vspace{0.1cm}
\noindent \textbf{Algorithm for unsettled agent $\bm{a_i}$ of $\bm{G_1}$ at odd round:} In any odd round, let agent $a_i\in G_1$ be at node $u$ and $a_i.settled=0$. Based on $a_i.divide$, it follows the following steps.

\begin{itemize}
    \item $\bm{a_i.divide=0}$: Suppose, agent $a_i$ reaches a node $w$ and finds no agents $a_j$ with $a_j.settled=1$ (i.e., no settled agents at node $w$). In this case, it follows the following steps. If agent $a_i$ is the minimum ID at node $w$, then it settles at $w$ and updates $a_i.settled=1$, $a_i.grp\_label=\bot$ (lines 5-8 of Algorithm \ref{algo:grp_1}). Further, it updates $a_i^w(G_1)$. Else if agent $a_i$ is not the minimum ID, then it moves according to the DFS traversal algorithm (lines 11-17 of Algorithm \ref{algo:grp_1}). If agent $a_i$ reaches a node $v$ and finds an agent with $a_j$, with $a_j.settled=1$, then $a_i$ does not settle at node $v$ and moves according to the DFS traversal algorithm (lines 18-24 and lines 42-54 of Algorithm \ref{algo:grp_1}). In this way, the agent $a_i \in G_1$ continues the DFS traversal of the graph unless $a_i$ encounters a missing edge for the first time. In particular, at some odd round $r$, agent $a_i$ moves via an edge $e$, and the movement is not successful at the end of round $r$. This is understood by agent $a_i$ in the round $r+1$. In round $r+2$, agent $a_i$ updates $divide$ by 1 (lines 23-25 and lines 50-52 of Algorithm \ref{algo:grp_1}) and divides into two groups using the $Group\_divide$ procedure as follows.

\vspace{0.15cm}
    \noindent 
    $\bm{Group\_divide}$: (refer to Algorithm \ref{alg:grp_divide}) Half of the agents with smaller IDs at node $u$ are part of $G_1$ and the other half of the agents at node $u$ are part of $G_2$. If agent $a_i$ is part of $G_2$, then sets $a_i.grp\_label=a_i.grp\_label.1$ by appending 1 at the end and starts a new DFS traversal considering the current node as the new root. In this case, agent $a_i$ also updates the other parameters $a_i.skip=-1$, $a_i.state=explore$, $a_i.dfs\_label=a_i.dfs\_label+1$ and $a_i.prt\_out=0$. Else if agent $a_i$ remains as a part of $G_1$, then sets $a_i.grp\_label=a_i.grp\_label.0$ by appending 0 at the end and starts a new DFS traversal considering the current node as the new root. In this case, agent $a_i$ also updates the other parameters $a_i.skip=-1$, $a_i.state=explore$, $a_i.dfs\_label=a_i.dfs\_label+1$ and $a_i.prt\_out=0$. Agent $a_i$ moves through $a_i.prt\_out$.

    \vspace{0.2cm}
    \item $\bm{a_i.divide=1}$: (refer to Algorithm \ref{algo:grp_1}) Suppose, agent $a_i \in G_1$ reaches a node $w$ and finds no agents $a_j$ with $a_j.settled=1$ (i.e., no settled agents at node $w$). In this case, it follows the following steps. If agent $a_i$ is the minimum ID at node $w$, then it settles at $w$ and updates $a_i.settled=1$, $a_i.grp\_label=\bot$. Further, it updates $a_i^w(G_1) =(a_i.prt\_in, \,a_i.grp\_label, \, a_i.dfs\_label)$. If there is any unsettled agent $a_j$ from $G_2$ at node $w$, then $a_i$ updates $a_i^u(G_2)$ to $(a_j.prt\_in, a_j.grp\_label, a_j.dfs\_label)$. If $a_i\in G_1$ reaches a node $w$ and find agent $a_j$ with $a_j.settled=1$, then it does not settle at the node $v$ and it can understand whether node $w$ is the new node according to the current DFS or previously visited node by comparing $label$, $dfs\_label$ components of $a_j^w(G_1)$ with $a_i.grp\_label$, $a_i.dfs\_label$ respectively. If both matches then node $w$ is visited previously in the current DFS traversal of agent $a_i$. Otherwise, if any of those two components do not match, then agent $a_i\in G_1$ treats $w$ as an unvisited node w.r.t. the current DFS. Round $r+2$ onwards, $a_i$ updates its $count_1$ parameter as follows. At odd round $r_1\geq r+2$, if it finds $a_i.success=True$, i.e. it realized a successful movement in the even round $r_1-1$, agent $a_i$ updates $a_i.count_1=0$; else if $a_i.success=False$, agent $a_i$ updates $a_i.count_1=a_i.count_1+1$. If at some odd round, $a_i$ finds $a_i.count_1=16n^2$, then unsettled agents at node $u$ divide into two groups again using aforementioned $Group\_divide$ procedure.\footnote{ Note that all unsettled agents at node $u$ belong are from \( G_1 \). We will show in the analysis that once the \( count_1 \) parameter for agents in \( G_1 \) reaches \( 16n^2 \), all agents in \( G_2 \) becomes settled agent.} That is, $G_1$ divides into two groups only after getting stuck at a node for a particular port for $16n^2$ many consecutive odd rounds. 
    \end{itemize}

\begin{algorithm}
    \caption{$a_i.state=backtrack$}\label{a_i.state:backtrack}
    \If{$r$ mod 2 =1}
    {
        \If{$a_i.success=True$}
        {
            set $a_i.prt\_out=(a_i.prt\_in+1)\mod\delta_v$ and $a_i.count_2=0$\\
            \If{ $parent$ component of $r_j^v(G_2)$ is $-1$, where $r_j$ is the settled agent at node $v$}
            {
                \If{$a_i.prt\_out = $  minimum available port except $a_i.skip$}
                {
                    $a_i.state=explore$, $a_i.skip=-1$, $a_i.prt\_out=0$, and $a_i.dfs\_label=a_i.dfs\_label+1$\\
                    move through $a_i.prt\_out$
                }
                \ElseIf{$a_i.prt\_out \neq $ minimum available port except $a_i.skip$}
                {
                    \If{$a_i.prt\_out=a_i.skip$}
                    {
                        $a_i.prt\_out=(a_i.prt\_out+1)\mod\delta_v$\\
                        \If{$a_i.prt\_out = $  minimum available port except $a_i.skip$}
                        {
                            $a_i.state=explore$, $a_i.skip=-1$, $a_i.prt\_out=0$, and $a_i.dfs\_label=a_i.dfs\_label+1$\\
                            move through $a_i.prt\_out$
                        }
                        \ElseIf{$a_i.prt\_out \neq  $  minimum available port except $a_i.skip$}
                        {
                            $a_i.state=explore$ and move through $a_i.prt\_out$
                        }
                    }
                    \ElseIf{$a_i.prt\_out\neq a_i.skip$}
                    {
                        $a_i.state=explore$ and move through $a_i.prt\_out$
                    }
                }
            }
            \ElseIf{$parent$ component of $r_j^v(G_2)$ is not $-1$, where $r_j$ is the settled agent at node $v$}
            {
                \If{$a_i.prt\_out=parent$, where $parent$ is the component of $r_j^v(G_2)$}
                {
                    move through $a_i.prt\_out$
                }
                \ElseIf{$a_i.prt\_out \neq parent$, where $parent$ is the component of $r_j^v(G_2)$}
                {
                    $a_i.state=explore$ and move through $a_i.prt\_out$
                }
            }
        }
        \ElseIf{$a_i.success=False$}
        {
            
            \If{$b_i$ from $G_1$ is present at node $v$ and $b_i.prt\_out=a_i.prt\_out$}
            {
                $a_i.count_2=0$, $a_i.dfs\_label=a_i.dfs\_label+1$, $a_i.state=explore$, and $a_i.skip=a_i.prt\_out$\\
                set $a_i.prt\_out$ as the minimum port available at the current node except $a_i.skip$\\
                move through $a_i.prt\_out$
            }
            \ElseIf{$b_i$ from $G_1$ is present at node $v$ and $b_i.prt\_out \neq a_i.prt\_out$ or there is no agent from $G_1$ present at node $v$}
            {
                $a_i.count_2=a_i.count_2+1$\\
                \If{$a_i.count_2=4n^2$}
                {
                    set $a_i.count_2=0$, and $a_i.count_3=a_i.count_3+1$\\
                    \If{$a_i.count_3<4n^2$}
                    {
                        set $a_i.state=explore$, $a_i.dfs\_label=a_i.dfs\_label+1$, and $a_i.skip=a_i.prt\_out$\\
                        set $a_i.prt\_out$ as the minimum port available at the current node except $a_i.prt\_out$\\
                        move through $a_i.prt\_out$
                    }
                    \Else
                    {
                        call $Group\_divide()$
                    }
                }
                \ElseIf{$a_i.count_2<4n^2$}
                {
                    $a_i.count_2=a_i.count_2+1$\\
                    move through $a_i.prt\_out$
                }
            }
        }
    }
    \ElseIf{$r$ mod 2 = 0}
    {
        \If{the movement of agent $a_i$ in round $r-1$ is successful}
        {
            set $a_i.success=True$
        }
        \Else
        {
            set $a_i.success=False$
        }
    }
    
\end{algorithm}

\begin{algorithm}
    \caption{$a_i.state=explore$}\label{a_i.state:explore}
    \If{$r$ mod 2 =1}
        {
            \If{$a_i.success=True$}
            {
                set $a_i.count_2=0$\\
                \If{there is no agent at node $v$ with $settled=1$}
                {
                    \If{$a_i.ID$ is the minimum from all the unsettled agents present at node $v$}
                    {
                        set $a_i.settled=1$\\
                        update $a_i^v(G_2).(parent, \, label, \, dfs\_label)=(a_i.prt\_in, \,a_i.grp\_label, \, a_i.dfs\_label)$\\
                        \If{there is an agent $a_j$ from $G_1$ with $a_j.settled=0$ at node $v$}
                        {
                            update $a_i^v(G_1).(parent, \,label, \, dfs\_label)=(a_j.prt\_in, \, a_j.grp\_label, \, a_j.dfs\_label)$
                        }
                    }
                    \Else
                    {
                        set $a_i.prt\_out=(a_i.prt\_in+1)\mod\delta_v$\\
                        \If{$a_i.prt\_out=a_i.prt\_in$}
                        {
                            set $a_i.state=backtrack$
                        }
                        move through $a_i.prt\_out$
                    }
                }
                \ElseIf{there is an agent $r_j$ at node $v$ with $r_j.settled=1$}
                {
                    \If{$a_i.grp\_label= label$ and $a_i.dfs\_label=dfs\_label$, where $label$ and $dfs\_label$ are the component from $r_j^v(G_2)$}
                    {
                        set $a_i.state=backtrack$\\
                        set $a_i.prt\_out=a_i.prt\_in$\\
                        move through $a_i.prt\_out$
                    }
                    \ElseIf{$a_i.grp\_label\neq label$ or $a_i.dfs\_label\neq dfs\_label$, where $label$ and $dfs\_label$ are the component from $r_j^v(G_2)$}
                    {
                        set $a_i.prt\_out=(a_i.prt\_in+1)\mod\delta_v$\\
                        \If{$a_i.prt\_out=a_i.prt\_in$}
                        {
                            set $a_i.state=backtrack$\\
                            move through $a_i.prt\_out$
                        }
                        \Else
                        {
                            move through $a_i.prt\_out$
                        }
                    }
                }
            }
            \ElseIf{$a_i.success=False$}
            {
                \If{ $b_i$ from $G_1$ is present at node $v$ and $b_i.prt\_out=a_i.prt\_out$}
                {
                    \If{$parent$ component of $r_j^v(G_2)$ is $-1$, where $r_j$ is the settled agent at node $v$}
                    {
                        \If{$a_i.prt\_out=\delta_v -1$}
                        {
                            $a_i.count_2=0$, $a_i.skip=-1$, $a_i.dfs\_label=a_i.dfs\_label+1$, $a_i.prt\_out=0$\\
                            move through $a_i.prt\_out$
                        }
                        \ElseIf{$a_i.prt\_out \neq \delta_v -1$}
                        {
                            $a_i.count_2=0$, $a_i.prt\_out=(a_i.prt\_out+1)$ mod $\delta_v$\\
                            move through $a_i.prt\_out$
                        }
                    }
                    \ElseIf{$parent$ component of $r_j^v(G_2)$ is not $-1$, where $r_j$ is the settled agent at node $v$}
                    {
                        $a_i.count_2=0$\\
                        $a_i.prt\_out=(a_i.prt\_out+1)$ mod $\delta_v$\\
                        \If{$a_i.prt\_out=a_i.prt\_in$}
                        {
                            $a_i.state=backtrack$ and move through $a_i.prt\_out$
                        }
                        \ElseIf{$a_i.prt\_out\neq a_i.prt\_in$}
                        {
                            move through $a_i.prt\_out$
                        }
                    }                    
                }
                \ElseIf{$b_i$ from $G_1$ is present at node $v$ and $b_i.prt\_out \neq a_i.prt\_out$ or there is no agent from $G_1$ present at node $v$}
                {
                    set $a_i.count_2=a_i.count_2+1$\\
                    \If{$a_i.count_2<4n^2$}
                    {
                        move through $a_i.prt\_out$
                    }
                    \ElseIf{$a_i.count_2=4n^2$}
                    {
                        $a_i.count_2=0$ and $a_i.count_3=a_i.count_3+1$\\
                        \If{$a_i.count_3<4n^2$}
                        {
                            set $a_i.state=explore$, $a_i.dfs\_label=a_i.dfs\_label+1$, $a_i.skip=a_i.prt\_out$\\
                            set $a_i.prt\_out$ as the minimum port available at the current node except $a_i.skip$\\
                            move through $a_i.prt\_out$
                        }
                        \Else
                        {
                            call $Group\_divide()$
                        }
                    }
                    
                }
            }
        }
        
        \ElseIf{$r$ mod 2=0}
        {
            \If{the movement of agent $a_i$ in round $r-1$ is successful}
            {
                set $a_i.success=True$
            }
            \Else
            {
                set $a_i.success=False$
            }
        }
\end{algorithm}
%%%%%%%%%%%%%%%%%%%%%%%%%%%%%%%%%%%%%%%%%%%%%%%%%%%%%%%%%%%%%%%%%%%%%%%%%

%%%%%%%%%%%%%%%%%%%%%%%%%%%%%%%%%%%%%%%%%%%%%%%%%%%%%%%%%%%%%%%%%%%%%%%%%%%%%%%%%%
%Algorithm for unsettled agent of group 2
\begin{algorithm}
    \caption{Movement of agent $a_i \in G_2$ at $v$ with $a_i.settled=0$}\label{algo:grp_2}
    \If{$a_.state=backtrack$}
    {
        call Algorithm \ref{a_i.state:backtrack}
    }
    \Else
    {
        call Algorithm \ref{a_i.state:explore}
    }
\end{algorithm}

%%%%%%%%%%%%%%%%%%%%%%%%%%%%%%%%%%%%%%%%%%%%%%%%%

%%%%%%%%%%%%%%%%%%%%%%%%%%%%%%%%%%%%%%%%%%%%%%%%%%%%%%%%%%%%%%%%%

%%%%%%%%%%%%%%%%%%%%%%%%%%%%%%%%%%%%%%%%%%%%%%%%%%%%%%%%%%%%%%%%%%%%%%%%%%%%%%%%%%%%%
% Settled agent algorithm for Grp2

%%%%%%%%%%%%%%%%%%%%%%%%%%%%%%%%%%%%%%%%%%%%%%

% \begin{algorithm}
%     \caption{Movement of agent $a_i \in G_2$ at $v$ with $a_i.settled=0$}\label{algo:grp_2}
%     \If{$a_.state=explore$}
%     {
%         call Algorithm \ref{a_i.state:explore}
%     }
%     \Else
%     {
%         call Algorithm \ref{a_i.state:backtrack}
%     }
% \end{algorithm}
\vspace{0.1cm}
\noindent \textbf{Algorithm for unsettled agent $\bm{a_i}$ of $\bm{G_2}$ at odd round:} In any odd round, let agent $a_i\in G_2$ be at node $u$ and $a_i.settled=0$. It follows the following steps.

\begin{itemize}
    \item[(1)] \textbf{If} $\bm{a_i.state=explore}$ \textbf{and} $\bm{a_i.success=True}$: It updates $a_i.count_2=0$ and does the following (lines 2-26 of Algorithm \ref{a_i.state:explore}).
    \begin{itemize}
        \item[(a)] \textbf{Node $\bm{u}$ has no agent with} $\bm{settled=1}$: If $a_i$ is the minimum ID at node $u$, it settles at node $u$ and updates $a_i.settled=1$, $a_i.grp\_label=\bot$ and $a_i^u(G_2)$ to $(a_i.prt\_in, a_i.grp\_label, a_i.dfs\_label)$. Note that in this case, there is no agent from \( G_1 \), since in the $Group\_divide$ procedure, the first half of the agents with smaller IDs are assigned to \( G_1 \).
 
        \vspace{0.1cm}
        
        \noindent If $a_i$ is not the minimum ID, then it updates $a_i.prt\_out=(a_i.prt\_in+1)$ mod $\delta_u$. If $a_i.prt\_out\neq a_i.prt\_in$, then it moves through $a_i.prt\_out$. Else, if $a_i.prt\_out= a_i.prt\_in$, then it sets $a_i.state=backtrack$ and move through $a_i.prt\_out$.  
        
        \item[(b)]  \textbf{Node $\bm{u}$ has an agent with} $\bm{settled=1}$: Let $a_j$ be at node $u$ with $a_j.settled=1$. If $a_i.grp\_label$, $a_i.dfs\_label$ matches $label$, $dfs\_label$ components of $a_j^u(G_2)$ respectively, then it is a previously visited node. In this case, $a_i$ sets $a_i.state=backtrack$, $a_i.prt\_out=a_i.prt\_in$, and it moves through $a_i.prt\_out$. Otherwise, it is an unvisited node w.r.t. the current DFS of $a_i$. In this case, it updates $a_i.prt\_out=(a_i.prt\_in+1)$ mod $\delta_u$. If $a_i.prt\_out\neq a_i.prt\_in$, then it moves through $a_i.prt\_out$. Else, if $a_i.prt\_out= a_i.prt\_in$, then it sets $a_i.state=backtrack$ and move through $a_i.prt\_out$. 
    \end{itemize}

    \item[(2)] \textbf{If} $\bm{a_i.state=explore}$ \textbf{and} $\bm{a_i.success=False}$: It does the following (lines 27-54 of Algorithm \ref{a_i.state:explore}).

    \begin{itemize}
        \item[(a)] \textbf{There is an agent from $\bm{G_1}$ with $\bm{settled=0}$ at node $\bm{u}$:} Let $a_j$ be settled at node $u$, $b_i \in G_1$ be at node $u$, and $b_i.settled=0$. In this case, there are two cases.

        \begin{itemize}
            \item[(i)] If $a_i.prt\_out=b_i.prt\_out$ (observe that based on communication, agent $a_i$ can compute what agent $b_i$ computes, and vice versa), then there are two cases possible. 
            
            \noindent ($\alpha$) \textbf{$\bm{parent}$ component of $\bm{a_j^u(G_2)}$ is $\bm{-1}$}: If $a_i.prt\_out=\delta_u-1$, then agent $a_i$ sets $a_i.state=explore$, stores $a_i.skip=-1$, $a_i.dfs\_label=a_i.dfs\_label+1$ (i.e., start a new DFS traversal), $a_i.prt\_out=0$, and moves through $a_i.prt\_out$. Else if $a_i.prt\_out \neq \delta_u-1$, then it sets $a_i.prt\_out=(a_i.prt\_out+1)$ mod $\delta_u$ and sets $a_i.count_2=0$ and move through $a_i.prt\_out$. 

            \noindent ($\beta$) \textbf{$\bm{parent}$ component of $\bm{a_j^u(G_2)}$ is not $\bm{-1}$}: It sets $a_i.prt\_out=(a_i.prt\_out+1)$ mod $\delta_u$ and sets $a_i.count_2=0$. If $a_i.prt\_out\neq a_i.prt\_in$, then it moves through $a_i.prt\_out$. Else, if $a_i.prt\_out= a_i.prt\_in$, then it sets $a_i.state=backtrack$ and move through $a_i.prt\_out$.

            % It updates $a_i.count_2=0$ and sets $a_i.skip=a_i.prt\_out$, $a_i.prt\_out=$ minimum available port except $a_i.skip$, $a_i.dfs\_label=a_i.dfs\_label+1$. Agent $a_i$ moves through $a_i.prt\_out$.
            \item[(ii)] If $a_i.prt\_out\neq b_i.prt\_out$, it updates $a_i.count_2=a_i.count_2+1$. If $a_i.count_2<4n^2$, then move through $a_i.prt\_out$. Else, if $a_i.count_2=4n^2$, then $a_i$ updates $a_i.count_2=0$ and $a_i.count_3=a_i.count_3+1$. In this case, agent $a_i$ sets $a_i.state=explore$, stores $a_i.skip=a_i.prt\_out$, $a_i.dfs\_label=a_i.dfs\_label+1$, $a_i.prt\_out=$ minimum available port except $a_i.skip$, and moves through $a_i.prt\_out$. Note that $a_i.count_3$ always remains less than $4n^2$ in this case as justified in the correctness.
        \end{itemize}
    
        \item[(b)] \textbf{There is no agent from $\bm{G_1}$ with $\bm{settled=0}$ at node $\bm{u}$:} It updates $a_i.count_2=a_i.count_2+1$. If $a_i.count_2<4n^2$, then move through $a_i.prt\_out$. Else, if $a_i.count_2=4n^2$, then $a_i$ updates $a_i.count_2=0$ and $a_i.count_3=a_i.count_3+1$. If $a_i.count_3<4n^2$, then agent $a_i$ sets $a_i.state=explore$, $a_i.dfs\_label=a_i.dfs\_label+1$, $a_i.skip=a_i.prt\_out$, $a_i.prt\_out=$ minimum available port except $a_i.skip$, and moves through $a_i.prt\_out$. Else, if $a_i.count_3=4n^2$, it divides into two groups using $Group\_divide$ procedure mentioned in the algorithm for unsettled agents of $G_1$. Note that if $a_i.count_3=4n^2$, then $G_1$ is already dispersed as justified in the correctness.
    \end{itemize}
    
     \item[(3)] \textbf{If} $\bm{a_i.state=backtrack}$ \textbf{and} $\bm{a_i.success=True}:$ It updates $a_i.prt\_out=(a_i.prt\_in+1)\mod \delta_u$ and $a_i.count_2=0$. Let $a_j$ be the agent at node $u$ with $a_j.settled=1$. Based on $parent$ component of $a_j^u(G_2)$, there are two cases (lines 2-22 of Algorithm \ref{a_i.state:backtrack}).
     \begin{itemize}
         \item[(a)] \textbf{$\bm{parent}$ component of $\bm{a_j^u(G_2)}$ is $\bm{-1}$}: If agent $a_i$ finds $parent$ component of $a_j^u(G_2)$ to be $-1$, then node $u$ is the root node of the current DFS traversal of agent $a_i$. In this case, if $a_i.prt\_out=$ minimum available port except for $a_i.skip$, then it sets $a_i.state=explore$, $a_i.skip=-1$, $a_i.prt\_out=0$, $a_i.dfs\_label=a_i.dfs\_label+1$, and moves through $a_i.prt\_out$. Else if $a_i.prt\_out \neq $ minimum available port except for $a_i.skip$, then it does the following. 
         
         If $a_i.prt\_out=a_i.skip$, then $a_i.prt\_out=(a_i.prt\_out+1)$ mod $\delta_v$. If $a_i.prt\_out=$ minimum available port except for $a_i.skip$, then it sets $a_i.state=explore$, $a_i.skip=-1$, $a_i.prt\_out=0$, $a_i.dfs\_label=a_i.dfs\_label+1$, and moves through $a_i.prt\_out$. Else if, it sets $a_i.state=explore$ and moves through $a_i.prt\_out$.

         If $a_i.prt\_out\neq a_i.skip$, it sets $a_i.state=explore$ and moves through $a_i.prt\_out$. 

         \item[(b)] \textbf{$\bm{parent}$ component of $\bm{a_j^u(G_2)}$ is not $\bm{-1}$}: If agent $a_i$ does not find $parent$ component of $a_j^u(G_2)$ to be $-1$, then node $u$ is not the root node of the current DFS traversal of agent $a_i$. If $a_i.prt\_out$ matches with $ parent$ component of $a_j^u(G_2)$, then agent $a_i$ moves through $a_i.prt\_out$. Else if $a_i.prt\_out$ does not match with $parent$ component of $a_j^u(G_2)$, then agent $a_i$ sets $a_i.state=explore$ and moves through $a_i.prt\_out$. 
     \end{itemize}
\begin{algorithm}
    \caption{Algorithm for settled agent $a_i$ at node $v$ and at least one unsettled agent $a_j$ from $G_1$ present at node $v$}\label{algo:settled-1}
    \If{$r$ mod 2=1}
    {
    \If{unsettled agent $a_j$ from $G_1$ and $a_j.success=True$}
    {
        \If{unsettled agents $a_j$ from $G_1$ is present at node $v$ with $a_j.state=backtrack$}
        {
            It does not do anything.
        }
        \If{unsettled agents $a_j$ from $G_1$ is present at node $v$ with $a_j.state=explore$}
        {
            \If{$label$ component of $a_i^v(G_1)$ does not match with $a_j.grp\_label$}
            {
                set $a_i^v(G_1).(parent, \,label, \,dfs\_label)=(a_j.prt\_in, \,a_j.grp\_label, \,a_j.dfs\_label)$  
            }
            \ElseIf{$label$ component of $a_i^v(G_1)$ matches with $a_j.grp\_label$}
            {
                It does not do anything
            }
        }
    }
    \ElseIf{unsettled agent $a_j$ from $G_1$ and $a_j.success=False$}
    {
        \If{unsettled agents $a_j$ from $G_1$ is present at node $v$ with $a_j.state=backtrack$ or $explore$}
        {
            \If{$a_j.count_1+1< 16n^2$}
            {
                It does not do anything.
            }
            \ElseIf{$a_j.count_1+1=16n^2$}
            {
                let $\{a_1, a_2, ..., a_x\}$ be the unsettled agents present at the current node in the increasing order of their IDs. Let $L=a_i.grp\_label=y$, $i\in[1,\,x]$ \\
                \If{$i\leq \lceil{\frac{x}{2}}\rceil$}
                {
                    set $a_i^v(G_1).(parent, label, dfs\_label)=(-1, \,L.0,\, a_j.dfs\_label+1)$
                }  
                \Else
                {
                    set $a_i^v(G_2).(parent, label, dfs\_label)=(-1, \,L.1, \, a_j.dfs\_label+1)$\\
                }   
            }
        }
    }
}
\ElseIf{$r$ mod 2=0}
{
    It does not do anything.
}
\end{algorithm}     
 
     \item[(4)] \textbf{If} $\bm{a_i.state=backtrack}$ \textbf{and} $\bm{a_i.success=False}:$ Two cases are possible (lines 24-40 of Algorithm \ref{a_i.state:backtrack}).
     \vspace{0.15cm}
    \begin{itemize}
        \item[(a)] \textbf{There is an agent from $\bm{G_1}$ with $\bm{settled=0}$ at node $\bm{u}$:}  Let $a_j$ be settled at node $u$, $b_i \in G_1$ be at node $u$, and $b_i.settled=0$. In this case, there are two cases. 
        \begin{itemize}
            \item[(i)]  If $a_i.prt\_out=b_i.prt\_out$ (observe that based on communication, agent $a_i$ can compute what agent $b_i$ computes, and vice versa), then agent $a_i$ does the following. It updates $a_i.count_2=0$ and sets $a_i.skip=a_i.prt\_out$, $a_i.prt\_out=$ minimum available port except $a_i.skip$, \\$a_i.dfs\_label=a_i.dfs\_label+1$. Agent $a_i$ moves through $a_i.prt\_out$.

            \item[(ii)]  If $a_i.prt\_out\neq b_i.prt\_out$, it updates $a_i.count_2=a_i.count_2+1$. If $a_i.count_2<4n^2$, then move through $a_i.prt\_out$. Else, if $a_i.count_2=4n^2$, then $a_i$ updates $a_i.count_2=0$ and $a_i.count_3=a_i.count_3+1$. If $a_i.count_3<4n^2$, then agent $a_i$ sets $a_i.state=explore$, $a_i.port=a_i.prt\_out$, $a_i.dfs\_label=a_i.dfs\_label+1$, $a_i.prt\_out=0$, $a_i.skip=-1$, and moves through $a_i.prt\_out$. Note that $a_i.count_3$ always remains less than $4n^2$ in this case as justified in the correctness.
        \end{itemize}
        
        \item[(b)] \textbf{There is no agent from $\bm{G_1}$ with $\bm{settled=0}$ at node $\bm{u}$:} It updates $a_i.count_2=a_i.count_2+1$. If $a_i.count_2<4n^2$, then move through $a_i.prt\_out$. Else, if $a_i.count_2=4n^2$, then $a_i$ updates $a_i.count_2=0$ and $a_i.count_3=a_i.count_3+1$. If $a_i.count_3<4n^2$, then agent $a_i$ sets $a_i.state=explore$, $a_i.dfs\_label=a_i.dfs\_label+1$, $a_i.skip=a_i.prt\_out$, $a_i.prt\_out=$ minimum available port except $a_i.skip$, and moves through $a_i.prt\_out$. Else, if $a_i.count_3=4n^2$, it divides into two groups using $Group\_divide$ procedure mentioned in the algorithm for unsettled agents of $G_1$.
\end{itemize}
 \end{itemize} 
\begin{algorithm}
    \caption{Algorithm for settled agent $a_i$ at node $v$ and at least one unsettled agent $a_j$ from $G_2$ is present at node $v$}\label{algo:settled-2}
    \If{$r$ mod 2=1}
    {
        \If{unsettled agent $a_j$ from $G_2$ and $a_j.success=True$}
        {
            \If{$a_j.state=backtrack$}
            {
                \If{$parent$ component of $a_i^v(G_2)$ is $-1$}
                {
                    \If{$(a_j.prt\_out+1)$ mod $\delta_v=$ minimum available port except for $a_j.skip$}
                    {
                        set $a_i^v(G_2).(parent,\, label, \, dfs\_label)=(-1, \, a_j.grp\_label,\,  a_j.dfs\_label+1)$
                    }
                    \ElseIf{$(a_j.prt\_out+1)$ mod $\delta_v\neq $ minimum available port except for $a_j.skip$}
                    {
                        \If{$(a_j.prt\_out+1)$ mod $\delta_v = a_j.skip$}
                        {
                            \If{$(a_j.prt\_out+2)=\delta_v-1$}
                            {
                                set $a_i^v(G_2).(parent,\, label, \, dfs\_label)=(-1, \, a_j.grp\_label,\,  a_j.dfs\_label+1)$
                            }
                            \ElseIf{$(a_j.prt\_out+2)\neq \delta_v-1$}
                            {
                                It does not do anything
                            }
                        }
                        \ElseIf{$(a_j.prt\_out+1)$ mod $\delta_v \neq a_j.skip$}
                        {
                            It does not do anything
                        }
                    }
                }
                \ElseIf{$parent$ component of $a_i^v(G_2)$ is not -1}
                {
                    It does not do anything
                }
            }
            \If{$a_j.state=explore$}
            {
                \If{$label$ and $dfs\_label$ components from $a_i^v(G_2)$ do not match with $a_j.grp\_label$ and $a_j.dfs\_label$}
                {
                    set $a_i^v(G_2).(parent,\, label, \, dfs\_label)=(a_j.prt\_in, \, a_j.grp\_label,\,  a_j.dfs\_label)$
                }
                \If{$label$ and $dfs\_label$ components from $a_i^v(G_2)$ match with $a_j.grp\_label$ and $a_j.dfs\_label$}
                {
                    It does not do anything
                }
            }
        }
        \ElseIf{unsettled agent $a_j$ from $G_2$ and $a_j.success=False$}
        {
            \If{unsettled agents $b_i$ from $G_1$ is present at node $v$}
            {
                \If{$b_i.prt\_out = a_j.prt\_out$ and $a_j.state = backtrack$}
                {
                    $a_i^v(G_2).(parent, \,label, \, dfs\_label)=(-1, \, a_i.grp\_label\, a_i.dfs\_label+1)$
                }
                \ElseIf{$b_i.prt\_out \neq a_j.prt\_out$ and $a_j.state = backtrack$ or there is no agent from $G_1$}
                {
                    \If{$a_j.count_2+1= 4n^2$}
                    {
                        \If{$a_j.count_3+1<4n^2$}
                        {
                            set $a_i^v(G_2).(parent, label, dfs\_label)=(-1, a_j.grp\_label, a_j.dfs\_label+1)$
                        }
                        \ElseIf{$a_j.count_3+1=4n^2$}
                        {
                            let $\{a_1, a_2, ..., a_x\}$ be the unsettled agents present at the current node in the increasing order of their IDs. Let $L=a_j.grp\_label$, $j\in[1,\,x]$ \\
                            \If{$j\leq \lceil{\frac{x}{2}}\rceil$}
                            {
                                set $a_i^v(G_1).(parent, label, dfs\_label)=(-1, \,L.0, \,a_j.dfs\_label+1)$
                            }  
                            \Else
                            {
                                set $a_i^v(G_2).(parent, label, dfs\_label)=(-1, \, L.1, \,a_j.dfs\_label+1)$
                            }   
                        }
                    }
                    \ElseIf{$a_j.count_2+1< 4n^2$}
                    {
                        It does not do anything
                    }
                }
                \ElseIf{$b_i.prt\_out = a_j.prt\_out$ and $a_j.state = explore$}
                {
                    \If{$parent$ component of $a_i^v(G_2)$ is $-1$}
                    {
                        \If{$a_i.prt\_out+1=\delta_v -1$}
                        {
                            $a_i^v(G_2).(parent, label, dfs\_label)=(-1, \, a_i.grp\_label, \,a_j.dfs\_label+1)$
                        }
                        \ElseIf{$a_i.prt\_out+1 \neq \delta_v -1$}
                        {
                            It does not do anything
                        }
                    }
                }
                \ElseIf{$b_i.prt\_out \neq a_j.prt\_out$ and $a_j.state = explore$ or there is no agent from $G_1$}
                {
                    \If{$a_j.count_2+1= 4n^2$}
                    {
                        \If{$a_j.count_3+1<4n^2$}
                        {
                            set $a_i^v(G_2).(parent, label, dfs\_label)=(-1, a_j.grp\_label, a_j.dfs\_label+1)$
                        }
                        \Else
                        {
                            let $\{a_1, a_2, ..., a_x\}$ be the unsettled agents present at the current node in the increasing order of their IDs. Let $L=a_j.grp\_label$, $j\in[1,\,x]$ \\
                            \If{$j\leq \lceil{\frac{x}{2}}\rceil$}
                            {
                                set $a_i^v(G_1).(parent, label, dfs\_label)=(-1, \,L.0, \, a_j.dfs\_label+1)$
                            }  
                            \Else
                            {
                                set $a_i^v(G_2).(parent, label, dfs\_label)=(-1, \,L.1, \,a_j.dfs\_label+1)$\\
                            }   
                        }
                    }
                    \ElseIf{$a_j.count_2+1< 4n^2$}
                    {
                        It does not do anything
                    }
                }
            }
        }
    }
    \ElseIf{$r$ mod 2=0}
    {
        It does not do anything. 
    }
\end{algorithm}

\noindent \textbf{Algorithm for unsettled agent $\bm{a_i}$ at even round:}
At even round, each unsettled agent $a_i$ of $G_1$ or $G_2$ updates $a_i.success$ parameter. If their movement in the previous rounds is successful, agent $a_i$ updates their $a_i.success=True$. Otherwise, it updates their parameter $a_i.success=False$. If $a_i.state=explore$, then refer to lines 41-45 of Algorithm \ref{a_i.state:backtrack}. Otherwise, refer to lines 55-59 of Algorithm \ref{a_i.state:explore}.

\noindent \textbf{Algorithm for settled agents}: (refer to Algorithm \ref{algo:settled-1} and Algorithm \ref{algo:settled-2}) Let $a_i$ be at node $u$ with $a_i.settled=1$. It has two parameter $a_i^u(G_1)$ and $a_i^u(G_2)$. At each odd round, based on the communication, it computes the decision of $G_1$ and $G_2$, and it stores $G_1$ information in $a_i^u(G_1)$ and $G_2$ information in $a_i^u(G_2)$. In the move, it does not do anything. At even rounds, it does not do anything.

\subsubsection{Correctness and analysis of algorithm}
Let $m$ be the number of edges in $G$. At the beginning of the algorithm, $n+1$ agents are co-located at a node of $G$, and agents are aware of parameter $n$. Initially, each agent $a_i$ is part of $G_1$ as $a_i.grp\_label=10$. Assume that the adversary does not remove an edge in the first $4m$ odd rounds. In this case, all agents reach the dispersion configuration because this is nothing but the dispersion algorithm using the DFS traversal of \cite{Augustine_2018}. Hence, all agents reach dispersion configuration in the first $4m$ odd rounds. After including even rounds, agents achieve dispersion in the first $8m$ rounds.

At round $r<8m$, all agents in $G_1$ are at node $u$ and want to go through edge $e= \overline{uv}$. At round $r$, if the adversary removes an edge $e$, then the movement of agents in $G_1$ becomes unsuccessful. In this case, the group $G_1$ is divided into two new groups $G_1$ and $G_2$. Both groups consider the current node $u$ as the root node for the new DFS traversal and restart the algorithm for DFS via port 0. Let $r'\geq r$ be round when agents in $G_1$ are at node $u'$ and want to go through edge $e'= \overline{u'v'}$ via port $p$, and their movement becomes unsuccessful due to the absence of the edge $e'$ at round $r'$. We have the following claim.

\begin{claim}\label{claim:1}
    If edge $e'$ does not appear within the next $16n^2$ odd rounds after it is deleted by the adversary in round $r'$, then agents in $G_2$ get dispersed in those $16n^2$ rounds.
\end{claim}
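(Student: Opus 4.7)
The plan is to leverage the $1$-bounded $1$-interval connectivity constraint so that, from round $r'$ onward, $G_2$ is effectively executing its DFS-based algorithm on a static connected graph, and then to bound the time $G_2$ needs to complete this DFS.

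First I would observe that, since the adversary can keep at most one edge absent per round and $e'$ is missing throughout all $16n^2$ odd rounds starting at $r'$, no other edge can be missing during this interval. Combined with the fact that each snapshot is connected, this forces the snapshot to equal $G\setminus\{e'\}$ throughout, and $G\setminus\{e'\}$ itself to be connected. Consequently every traversal $G_2$ attempts through an edge other than $e'$ succeeds in the next odd round, and failures occur only when $G_2$ tries to cross $e'$ at $u'$ or $v'$.

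Next I would bound the productive work required by $G_2$. A full DFS traversal of $G\setminus\{e'\}$ uses at most $4(m-1) < 2n^2$ successful odd rounds and visits every node; since $G_2$ has at most $\lfloor (n+1)/2 \rfloor$ unsettled agents at round $r'$ and one $G_2$ agent settles at each unoccupied node it is the first to visit, a single completed DFS on $G\setminus\{e'\}$ suffices to disperse all of $G_2$. Each time $G_2$ is stuck at $e'$, the $count_2$ mechanism consumes exactly $4n^2$ odd rounds, after which $dfs\_label$ is incremented, the failed port is recorded in $skip$, and a fresh DFS is launched from the current node. I would argue by case analysis that $G_2$ suffers only a small constant number of such stuck events before it disperses: because a stuck event can occur only at $u'$ or $v'$, the $skip$ markers progressively forbid the offending ports at the restart root, and after at most two or three obstructions the restarted DFS traverses $G\setminus\{e'\}$ to completion without further failures. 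Summing, the total time is at most $2n^2 + O(n^2) \leq 16n^2$ odd rounds, and $count_3$ never reaches $4n^2$, so $G_2$ does not invoke $Group\_divide$ during this interval and all of its agents settle.

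The hard part will be the case analysis that bounds the number of stuck events: because $skip$ only suppresses the first move of each restarted DFS, one must carefully argue that after stuck events at both endpoints of $e'$ (in whichever order they occur), the combined effect of the $skip$ markers, together with the choice of the stuck node as the new DFS root and the fresh $dfs\_label$, is sufficient to make the subsequent DFS avoid $e'$. The key ingredient is that a DFS rooted at $u'$ (respectively $v'$) with $skip$ set to the port toward $e'$ cannot attempt that port at its root, so the offending endpoint is effectively removed from the restart's outgoing choices; this, combined with the connectivity of $G\setminus\{e'\}$, forces the final DFS to reach every unoccupied node and to settle every remaining agent of $G_2$ well within the allotted $16n^2$ odd rounds.
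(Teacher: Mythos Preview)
Your overall plan---exploit the $1$-bounded constraint to reduce the picture to a static $G\setminus\{e'\}$, then bound the number of ``stuck'' events plus DFS time---matches the paper's approach. The first observation (no edge other than $e'$ can be missing during those rounds, so every non-$e'$ traversal succeeds) is correct and is exactly what the paper uses implicitly.

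The gap is in how you bound the number of $4n^2$-round stuck events. You write that ``the $skip$ markers progressively forbid the offending ports at the restart root, and after at most two or three obstructions the restarted DFS traverses $G\setminus\{e'\}$ to completion.'' But $a_i.skip$ is a \emph{single} port, it is overwritten at every restart, and it is consulted only at the root of the current DFS. So if $G_2$ gets stuck at $v'$, restarts with root $v'$ and $skip$ set to the port toward $u'$, the new DFS will eventually reach $u'$ as a \emph{non-root} node and try the port toward $v'$; nothing in the $skip$ mechanism blocks that, and you would wait another $4n^2$ rounds, restart from $u'$, later reach $v'$ as a non-root, and so on indefinitely. The $skip$ mechanism alone does not give you the constant bound you need.

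What you are missing is the role of $G_1$. By the setup of the claim, $G_1$ is sitting at $u'$ with $prt\_out$ equal to the $e'$-port for the entire $16n^2$-round window (the $G_1$ algorithm keeps retrying until $count_1=16n^2$). The $G_2$ algorithm explicitly tests for this: whenever $a_i\in G_2$ fails a move and an unsettled $b_i\in G_1$ is present at the same node with $b_i.prt\_out=a_i.prt\_out$, agent $a_i$ resets $count_2$ and advances to the next port immediately (cases~2(a)(i) and~4(a)(i) in the $G_2$ description). Hence $G_2$ \emph{never} incurs a $4n^2$ wait at $u'$; the only place a long stuck event can happen is $v'$, and after that single wait the restarted DFS from $v'$ (with $skip$ blocking the $e'$-port at the root and $G_1$'s presence blocking it at $u'$) visits all of $G\setminus\{e'\}$. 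This is precisely how the paper gets the bound $8m+4n^2+4m=12m+4n^2\le 16n^2$.
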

\begin{proof}
   Suppose at round \( r' \), the agents in \( G_2 \) are at node \( u'' \). Within the next \( 8m \) odd rounds, one of two things is possible: the agents in \( G_2 \) achieve dispersion, or the unsettled agents from \( G_2 \) attempt to move via the edge \(\overline{v'u'}\). This is due to the fact that during the first \( 4m \) odd rounds, agents in \( G_2 \) either achieve dispersion, start a new DFS from some node \( w \), or reach the root, say \( v_r \), of the current DFS traversal of agents in \( G_2 \). From node \( v_r \) (or respectively \( w \)), agents in \( G_2 \) start the new DFS by incrementing their \( dfs\_label \). In the next \( 4m \) odd rounds, the unsettled agents from \( G_2 \) reach node \( v' \) and attempt to move via the edge \( \overline{v'u'} \) due to the following reason: if an agent reaches node \( u' \), then it reaches in $explore$ state, and it moves through the next available port.

After reaching node \( v' \), it tries to move via edge \(\overline{v'u'}\). Since edge \( e' \) is not present, its \( count_2 \) parameter reaches \( 4n^2 \). After this, agents in \( G_2 \) store the port corresponding to the edge \(\overline{v'u'}\) in their \( skip \) parameter. Then, during the next \( 4m \) odd rounds, unsettled agents in \( G_2 \) visit each node of the graph \( G - \{e'\} \) as they reach node \( u' \) to move via edge \(\overline{u'v'}\) in $explore$ state, and move via the next available port, and whenever they reach node \( v' \), they pick a port to explore, excluding the port in their \( skip \) parameter. The total time from round \( r' \) for agents in \( G_2 \) to achieve dispersion is \( 12m + 4n^2 \leq 16n^2 \) odd rounds. This completes the proof.
\end{proof}

If the edge $e'$ reappears within $16n^2$ odd rounds, then agents in $G_1$ move at least one round of their current DFS. Now we make a claim that guarantees the dispersion of agents in $G_1$.

\begin{claim}\label{claim:2}
    For any unsettled agent $a_i\in G_2$, if $a_i.count_3$ matches with $4n^2$, then agents in $G_1$ are already dispersed.
\end{claim}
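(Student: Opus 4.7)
The plan is to bound the total number of odd rounds that must elapse before $a_i.count_3$ can reach $4n^2$, and then to argue that within that window the agents of $G_1$ necessarily complete their DFS traversal and therefore become settled. First, I would interpret the counters. By the algorithm, $a_i.count_3$ is incremented only in the odd round where $a_i.count_2$ attains the threshold $4n^2$, and $a_i.count_2$ counts the number of consecutive odd rounds in which $a_i \in G_2$ experiences $a_i.success=\text{False}$. Any successful move (or, in the algorithmic cases listed, any move that resets $count_2$) resets $a_i.count_2$ to $0$. Consequently, each unit increment of $count_3$ witnesses a stretch of at least $4n^2$ consecutive odd rounds in which $a_i$'s attempted move through a fixed port $a_i.prt\_out$ fails; hence reaching $count_3=4n^2$ consumes at least $4n^2 \cdot 4n^2 = 16n^4$ odd rounds.

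Next, I would exploit $1$-boundedness. In each failed-move odd round for $a_i$, the adversary has removed exactly one edge, and it must be the edge incident to $a_i$ through $a_i.prt\_out$ (otherwise $a_i$'s attempt would have succeeded). Therefore every other edge is present in that round, and any unsettled agent of $G_1$ whose intended traversal is through a different edge succeeds. Combined with Claim~\ref{claim:1}, which states that $G_1$ cannot remain stuck at any single missing edge for $16n^2$ consecutive odd rounds without causing $G_2$ to already be fully dispersed (contradicting the hypothesis that $a_i \in G_2$ is unsettled), I conclude that across each sub-window of $16n^2$ odd rounds the agents of $G_1$ must execute at least one successful DFS edge traversal, or else $G_1$ has already finished.

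Finally, I would combine these estimates with the DFS analysis of \cite{Augustine_2018}: a complete DFS traversal of $G$ by a group of agents requires at most $4m \leq 2n^2$ successful edge traversals. Since the $16n^4$-round window over which $count_3$ reaches $4n^2$ contains $\Omega(n^2)$ disjoint sub-windows of length $16n^2$, the cumulative number of successful edge traversals by $G_1$ during this interval far exceeds $4m$, so all agents of $G_1$ must have settled via the DFS procedure and are therefore dispersed.

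The main technical obstacle is handling the case where $G_1$ and $G_2$ are poised across the same missing edge from opposite endpoints, so that both groups' attempts fail in the same odd round and $G_1$ does not progress while $G_2$'s $count_2$ also grows. I would handle this by invoking the calibration of the thresholds: Claim~\ref{claim:1} caps any such mutually-stuck phase at less than $16n^2$ odd rounds (otherwise $G_2$ would already be dispersed), and during the remainder of the window the missing edge the adversary removes to block $a_i$ must necessarily be disjoint from $G_1$'s current DFS move, yielding the successful traversals that the counting argument needs.
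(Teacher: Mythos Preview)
Your overall strategy—lower-bounding the elapsed time and then counting successful $G_1$ DFS steps—matches the paper's, but the quantitative bookkeeping does not close. You obtain at least $16n^4$ odd rounds and partition them into $16n^4/16n^2 = n^2$ windows of length $16n^2$, guaranteeing (via Claim~\ref{claim:1}) only $n^2$ successful $G_1$ moves. However, $G_1$'s DFS may require up to $4m \le 2n(n-1)$ successful traversals, and for dense footprints this exceeds $n^2$; the assertion that $n^2$ ``far exceeds $4m$'' is simply false. So the argument, as written, does not establish the claim.

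The paper avoids this loss by working at the finer granularity of $count_3$ increments rather than $16n^2$-windows. Each increment of $count_3$ is preceded by a block of exactly $4n^2$ consecutive odd rounds in which $G_2$ fails on a fixed edge $e$; during that block the adversary must be deleting $e$, so $G_1$ is blocked only if it too wants $e$. If $G_1$ and $G_2$ sit at the \emph{same} endpoint with the same port, the algorithm resets $count_2$, contradicting the assumption; hence $G_1$ is either moving freely (and makes a step) or is stuck at the \emph{other} endpoint of $e$. In the latter case the paper uses $G_2$'s restart with $skip$ set: in the next $4m\le 4n^2$ odd rounds $G_2$ performs a DFS of $G\setminus\{e\}$, so either the adversary keeps $e$ missing and $G_2$ disperses (contradicting that $a_i\in G_2$ is still unsettled), or $e$ reappears and $G_1$ takes a step. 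Since consecutive $count_3$ increments are separated by at least $4n^2\ge 4m$ odd rounds, these associated windows are disjoint, yielding at least $4n^2\ge 4m$ distinct successful $G_1$ steps—enough to finish $G_1$'s DFS. Your final paragraph correctly identifies the opposite-endpoint case as the crux, but invoking Claim~\ref{claim:1} there only bounds the length of a single stuck phase; it does not supply the per-increment progress guarantee that the counting needs.
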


\begin{proof}
    Let $r''$ be an odd round when edge $e'$ reappears and agents in $G_1$ move at least one round on its current DFS traversal. Note that $r'' >r'$. After round $r''$ onwards, the following cases are possible. 

    \begin{itemize}
        \item \textbf{Case (A) Agents in group $G_1$ and $G_2$ want to go through the same deleted edge:} At round $r'''\geq r''$, unsettled agents in $G_1$ and $G_2$ want to go through the same edge $e_1= \overline{u_1v_1}$. As discussed in Claim \ref{claim:1}, if edge $e_1$ does not appear within the next $12n^2$ odd rounds, then $G_2$ gets dispersed. Otherwise, unsettled agents in $G_1$ move on its current DFS traversal path. 
       
        \item \textbf{Case (B) Agents in $G_2$ want to go through deleted edge:} In this case, agents in $G_2$ at node $u_2$ try to move via edge $e_2=(u_2,\,v_2)$ using port $p_2$. Agent $a_i \in G_2$ has parameter $a_i.count_2$ and it increases this parameter by 1 whenever its movement becomes unsuccessful. If its movement is successful, then $a_i.count_2=0$. If at some odd round, say $t'$, $a_i.count_2$ reaches to $4n^2$ (it means $4n^2$ consecutive odd rounds edge $e'$ is missing), then agent $a_i \in G_2$ restarts the new DFS and stores $a_i.skip=p_2$ corresponding edge $e'$ through which it is trying to move. It is important to note that if edge $e_2$ is missing for consecutive $4n^2$ odd rounds, then two cases are possible, which are as follows. 
        
        \vspace{0.2cm}
        \begin{itemize}
            \item \textbf{Case (a)}: While $G_2$ is waiting for $e_2$ to appear, then within $4n^2$ odd rounds unsettled agents in $G_1$ get stuck at node $u_2$ (or $v_2$). If each unsettled agent $a_j \in G_1$ reaches node $u_2$ and wants to go through edge $e_2$, then it is similar to Case (A). If each unsettled agent $a_j \in G_1$ gets stuck at node $v_2$, then it tries to move through edge $\overline{v_2u_2}$ and starts increasing $a_j.count_1$ parameter by 1 in each unsuccessful movement. When $a_i.count_2=4n^2$, then agent $a_i \in G_2$ restarts new DFS, stores $a_i.skip=p_2$ and increases parameter $a_i.count_3$ by one. If edge $e_2$ does not appear next odd $4m$ rounds (note that $a_j.count_1$ for agent $a_j \in G_1$ is less than $16n^2$), then agents in $G_2$ achieve the dispersion. It is because even if it reaches node $v_2$ and wants to go through edge $\overline{v_2u_2}$ via some port $p$, it reaches as $a_i.state=explore$. Agent $a_i$ changes the outgoing port by $(p+1)$ mode $\delta_{v_2}$ and continues its DFS traversal on $G-\{e_2\}$. Therefore, if edge $e_2$ does not appear in the next consecutive $4m$ odd rounds, then the group $G_2$ is successfully dispersed at the nodes of $G$. If edge $e_2$ reappears, then agents in $G_1$ move at least one round of their current DFS.
            \vspace{0.2cm}
            \item \textbf{Case (b)}: The DFS traversal of $G_1$ is unaffected due to the missing edge $e_2$. In this case, agents in $G_1$ get dispersed. 
        \end{itemize} 
    \end{itemize}

Therefore, if $a_i.count_3$ reaches $4n^2$, and agents in $G_2$ are not dispersed, then they can understand that agents in $G_1$ are dispersed, as whenever the unsettled agent $a_i \in G_2$ increases the parameter $a_i.count_3$, then within the next $4m$ odd rounds, either agent $a_i \in G_2$ achieves dispersion, or agents $a_j \in G_1$ move at least one round of their current DFS. Therefore, unsettled agents in $G_1$ already have executed their dispersion algorithm via DFS for at least $4 \cdot 4m$ odd rounds. This completes the proof.
\end{proof}

Based on Claim \ref{claim:1} $\text{and}$ \ref{claim:2}, we have the following lemma. 
\begin{lemma}\label{lm:correct}
The algorithm correctly solves the $k$-balanced dispersion problem in 1-bounded 1-interval connected graphs for $k = n+1$ co-located agents.
\end{lemma}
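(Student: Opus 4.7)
The plan is to combine the progress guarantees of Claims \ref{claim:1} and \ref{claim:2} with the correctness of the underlying DFS dispersion of \cite{Augustine_2018} to show that (i) every agent eventually becomes settled, and (ii) the final configuration has exactly one node with two agents and every other node with one agent, which is precisely the $k$-balanced dispersion for $k=n+1$ since $\lfloor (n+1)/n\rfloor=1$ and $\lceil (n+1)/n\rceil=2$.

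First, I would establish the invariant that at any round of the execution, there are at most two groups of unsettled agents, namely $G_1$ and $G_2$, together holding $n+1$ agents. This follows from the $Group\_divide$ procedure: a group splits only when its unsettled agents have been blocked for $16n^2$ (for $G_1$) or when $count_3=4n^2$ (for $G_2$), and by Claims \ref{claim:1} and \ref{claim:2} this can happen only after the other group has fully dispersed. Hence, whenever a fresh split occurs, the splitting group is the only active one, so the total number of simultaneously active groups never exceeds two.

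Next, I would show that each active group eventually completes its DFS dispersion. If the adversary never removes an edge relevant to the group, then within $4m$ odd rounds the standard DFS of \cite{Augustine_2018} settles each of the group's agents at a distinct node, with the root of that DFS potentially retaining unsettled agents if the group has more agents than the DFS has had time to settle. If the adversary does block the group at some edge, Claim \ref{claim:1} forces the other group to disperse within $16n^2$ odd rounds, after which the blocked group can resume and in particular split further using $Group\_divide$; symmetrically, Claim \ref{claim:2} guarantees that whenever $G_2$ sees $count_3$ reach $4n^2$, the agents of $G_1$ are already dispersed, so $G_2$ can resume alone. Iterating this argument, the total number of unsettled agents strictly decreases over time until no unsettled agent remains.

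Finally, I would verify the balancing condition at termination. Because every settled agent occupies a distinct node within its own DFS traversal, and the union of the visited nodes of all successive DFSes covers all $n$ nodes of $G$ (the footprint is connected and the DFSes are restarted from the current position each time a split occurs), every node hosts at least one settled agent, giving $\lfloor (n+1)/n\rfloor = 1$ as the lower bound. Since only $n+1$ agents exist and the root of the very last active DFS can hold a single surplus unsettled agent that eventually settles there when no new unvisited neighbor remains, exactly one node hosts two agents and the remaining $n-1$ nodes host one agent each, meeting the upper bound $\lceil (n+1)/n\rceil = 2$. The main obstacle is precisely this counting argument, i.e.\ ruling out that a split could ever leave three or more agents trapped at a single node; this is where the bound of at most two simultaneous groups, together with the settling rule that forbids a second agent from settling on a node already containing a settled agent, is crucial.
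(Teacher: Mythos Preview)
Your overall strategy matches the paper's: invoke Claims~\ref{claim:1} and~\ref{claim:2} to argue that one of the two active groups always finishes dispersing before the other is allowed to split again, and iterate until all $n$ nodes carry a settled agent. But your claim (i) that ``every agent eventually becomes settled'' is false, and it contradicts the settling rule you yourself quote in the last paragraph: an agent sets $settled=1$ only at a node with no previously settled agent. No branch of Algorithm~\ref{algo:grp_1} or Algorithm~\ref{a_i.state:explore} ever allows a second agent to settle on an already occupied node, so once $n$ agents have settled (one per node) the $(n{+}1)$st agent can \emph{never} settle. The paper makes this explicit in the sentence immediately following the lemma: ``all agents are getting dispersed except one agent, which will be part of the last remaining group.'' Your assertion that the surplus agent ``eventually settles there when no new unvisited neighbour remains'' has no support in the algorithm and is inconsistent with the rule you cite.

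The correct argument for the balancing condition is not that the last agent settles, but simply that once $n$ agents occupy the $n$ distinct nodes, the remaining unsettled agent is located at \emph{some} node in every subsequent round, giving that node two agents and every other node one --- which is exactly the $k$-balanced configuration for $k=n+1$. Accordingly, your sentence ``the total number of unsettled agents strictly decreases over time until no unsettled agent remains'' should stop at one, not zero, and the final counting paragraph should be rewritten to reflect that the surplus agent stays unsettled rather than settling at the root of the last DFS.
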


\begin{proof}
    Due to Claim \ref{claim:1} $\text{and}$ 2, we can say either $G_1$ or $G_2$ is dispersed and unsettled agents in $G_1\,(G_2)$ understand whether agents in $G_2\,(G_1)$ are dispersed. And, agents in $G_1\,($or $G_2)$ divide into two new groups. In this way, in the end, group $G_1$ (and $G_2$) contains 1 agent. Both groups start running our algorithm, and due to Claim \ref{claim:1} $\text{and}$ \ref{claim:2}, we can say that either agent in $G_1$ or $G_2$ is settled at some node. Therefore, our algorithm correctly solves the dispersion problem in 1-bounded 1-interval graphs when $n+1$ agents are co-located at some node initially. This completes the proof.
\end{proof}

 Due to Lemma \ref{lm:correct}, we can say that one group is divided into two groups only after the other group achieves dispersion. Since we have $n+1$ agents, in this way, all agents are getting dispersed except one agent, which will be part of the last remaining group.

\vspace{0.15cm}

\begin{lemma}\label{lm:time}
    The time complexity of our algorithm is $O(n^4 \cdot \log n)$.
\end{lemma}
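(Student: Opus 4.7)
The plan is to decompose the execution into \emph{phases}---the lifespan of an individual $G_1$ or $G_2$ instance between its creation and the moment it either fully disperses or triggers a $Group\_divide$ call---and then bound (a) the length of any single phase and (b) the total number of phases.

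First, I would bound the length of any single phase by $O(n^4)$ rounds. Ignoring adversarial blocking, the DFS subroutine underlying a group's behaviour terminates in $4m = O(n^2)$ rounds, exactly as in the static analysis of \cite{Augustine_2018}. The adversary's only lever is to keep some edge missing while the group attempts to cross it. For $G_1$ this is controlled by the $a_i.count_1$ threshold: a division is triggered the moment $a_i.count_1$ reaches $16n^2$, so any single blocked edge contributes at most $O(n^2)$ odd rounds to the phase. Since DFS crosses each of the $O(n^2)$ edges at most twice, the total blocked rounds inside one $G_1$-phase is $O(n^2)\cdot O(n^2)=O(n^4)$. For $G_2$ the argument uses $a_i.count_2$ and $a_i.count_3$: once $a_i.count_2$ hits $4n^2$, $G_2$ restarts its DFS and increments $a_i.count_3$, and division is forced at $a_i.count_3 = 4n^2$. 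Hence $G_2$ absorbs at most $4n^2 \cdot 4n^2 = 16n^4$ blocked odd rounds, together with at most $4n^2$ fresh DFS traversals of $O(n^2)$ rounds each, again $O(n^4)$ in total.

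Second, I would bound the total number of phases by $O(\log n)$. By Claims~\ref{claim:1} and \ref{claim:2}, at most one active pair $(G_1, G_2)$ exists at any time, and a new division happens only after the sibling group has dispersed. The $Group\_divide$ procedure splits the dividing group's agents into halves by ID order, so the size of the largest unsettled active group shrinks by a factor of two after every division. Starting from $n+1$ co-located agents, at most $\lceil \log_2(n+1) \rceil = O(\log n)$ divisions can occur before every active group contains a single agent, after which no further division is possible.

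Finally, the two siblings at the same division depth run concurrently in wall-clock time, so the time charged to that depth is the maximum of the two sibling phase lengths, namely $O(n^4)$. Summing across the $O(\log n)$ division depths yields a total running time of $O(n^4 \log n)$. The main obstacle will be the phase-length argument: one must carefully separate the ``useful'' DFS rounds from the ``waiting'' rounds induced by missing edges, and verify that the adversary's ability to reset $a_i.count_2$ via sporadically successful moves does not inflate a $G_2$-phase beyond its $a_i.count_3$ budget. An amortised accounting across the $O(n^2)$ DFS edge-traversals, combined with the hard thresholds on $a_i.count_1$ and $a_i.count_3$, should make the bound go through cleanly.
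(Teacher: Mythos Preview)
Your $G_1$ phase bound is correct and matches the paper's counting. The gap is in your $G_2$ phase bound. You rightly flag partial blocks (the adversary letting $G_2$ through after fewer than $4n^2$ failures, resetting $a_i.count_2$ without touching $a_i.count_3$) as the obstacle, but your proposed amortisation over ``$O(n^2)$ DFS edge-traversals'' cannot close it: a single $G_2$ phase may contain up to $4n^2$ fresh DFS instances (one per $count_3$ increment), each with $O(n^2)$ edge traversals, so there are $\Theta(n^4)$ traversals in total, and charging up to $4n^2-1$ partial-block rounds to each yields only an $O(n^6)$ bound. Nothing in $G_2$'s counters alone rules this out, so an independent $G_2$ phase bound of $O(n^4)$ does not follow, and your ``max of the two sibling phase lengths'' step breaks down precisely when $G_2$ is the sibling that eventually divides.

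The paper does not try to bound a $G_2$ phase in isolation. Instead it exploits the coupling between the two siblings through the $1$-bounded adversary: by Claim~\ref{claim:1}, whenever $G_1$ is blocked on one fixed edge for $16n^2$ consecutive odd rounds, that single missing edge is the \emph{only} obstruction in the whole graph, and $G_2$ (using its $skip$ mechanism) completes a traversal of $G\setminus\{e'\}$ and disperses within that window. Contrapositively, as long as $G_2$ has not dispersed, $G_1$ makes at least one DFS step every $16n^2$ odd rounds; since $G_1$'s DFS needs at most $4m\le 4n^2$ steps, one of $G_1,G_2$ is dispersed within $16n^2\cdot 4n^2$ odd rounds of the pair's creation. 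Combined with the $O(\log n)$ division depth (which you argue correctly), this interaction argument---not an independent $G_2$ accounting---is what delivers the $O(n^4)$ per epoch and hence $O(n^4\log n)$ overall.
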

\begin{proof}
    In Lemma \ref{lm:correct}, we have shown that the agents achieve the dispersion. If the adversary does not remove any edge in the first $4m$ odd rounds then all agents are dispersed except one agent. In an odd round $r$ (where $r<8m$), if the adversary removes the edge $e$ and the group of agents wants to go through the edge $e$ according to their DFS algorithm, then they divide into two groups. We show that in the next $64n^4$ odd rounds, either $G_1$ or $G_2$ gets dispersed. We have proved that whenever $a_i \in G_2$ increases $a_i.count_3$ by 1, then within the next $4m$ odd rounds, either agent $a_i \in G_2$ achieves dispersion, or agents in $a_j \in G_1$ move at least one round of their current DFS. In Claim \ref{claim:1}, we have shown that if agents in $G_1$ get stuck for $16n^2$ odd rounds for some edge, then $G_2$ is dispersed. In Claim \ref{claim:2}, we have shown that whenever the unsettled agent $a_i \in G_2$ increases the parameter $a_i.count_3$, then within the next $4m$ odd rounds, either agent $a_i \in G_2$ achieves dispersion, or agents $a_j \in G_1$ move at least one round of their current DFS. It implies that if agents in $G_2$ are not dispersed and for $a_i \in G_2$, $a_i.count_3$ reaches $4n^2$, then agents in $G_2$ can understand that agents in $G_1$ are dispersed. Hence, in each $16n^2$ consecutive rounds, one of the possibilities occurs: either agents $a_j \in G_1$ move at least one round of their current DFS or $G_2$ gets dispersed. Therefore, between rounds $r$ and $r + 16n^2 \cdot 4n^2$, either the unsettled agents in $G_1$ or the unsettled agents in $G_2$ get dispersed. Including even rounds, we can say that after round $r$, in the next $128n^4$ rounds, either $G_1$ or $G_2$ is dispersed.

    Using Lemma~\ref{lm:correct}, if $G_1$ (or $G_2$) understands that $G_2$ (or $G_1$) is dispersed, then $G_1$ (or $G_2$) divides into two new groups, and both groups contain half the agents of $G_1$ (or $G_2$). These two groups repeat the same procedure. The number of such group divisions is $O(\log n)$, and one of the two groups is dispersed in at most $128n^4$ rounds. Therefore, our algorithm takes $r + 128n^4 \cdot \log n \leq 8m + 128n^4 \cdot \log n = O(n^4 \cdot \log n)$, as $m \leq n^2$, rounds to solve the dispersion problem in 1-bounded 1-interval graphs when $n+1$ agents are co-located at some node initially. This completes the proof.
\end{proof}

We are now presenting the results that show the memory requirements per agent in our algorithm.

\begin{lemma}\label{lm:memory}
    In our algorithm, agents require $\Theta(\log n)$ memory.
\end{lemma}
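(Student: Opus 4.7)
The plan is to prove matching upper and lower bounds. For the lower bound, since we are dispersing $k=n+1$ co-located agents and $n+1 \geq n+1$, Lemma~\ref{lm:memory_lower} directly applies and gives $\Omega(\log n)$ bits per agent. So the substantive content of the proof is the matching $O(\log n)$ upper bound, which is obtained by auditing every parameter maintained by an agent and showing each fits in $O(\log n)$ bits.

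The per-agent parameters split into three groups. First, the constant-size flags: $a_i.settled$, $a_i.state$, $a_i.success$, and $a_i.divide$ are each a single bit. Second, the port-valued fields $a_i.prt\_in$, $a_i.prt\_out$, and $a_i.skip$ take values in $\{-1,0,\ldots,\delta_v-1\}$ with $\delta_v \leq n-1$, so each occupies $O(\log n)$ bits. Third, the counters $a_i.count_1$, $a_i.count_2$, $a_i.count_3$ are all capped at $16n^2$, $4n^2$, $4n^2$ respectively by the algorithm (whenever they would exceed the cap, the agent either triggers \textit{Group\_divide}, restarts a DFS, or resets), so each needs $O(\log n)$ bits. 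The two settled-agent records $a_i^v(G_1)$ and $a_i^v(G_2)$ store a port, a group label, and a dfs label, so their size reduces to bounding those two labels.

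The main obstacle, and where I would spend care, is bounding $a_i.grp\_label$ and $a_i.dfs\_label$. For $a_i.grp\_label$: the label is appended with a single bit every time a group splits via \textit{Group\_divide}, and each split halves the surviving group. By Lemma~\ref{lm:correct}, at any moment at most two groups are alive and a new split happens only after one of them has fully dispersed, so the entire split history forms a binary tree whose leaves correspond to settled agents; since there are only $n+1$ agents, the depth (hence the string length) is at most $\lceil \log_2(n+1)\rceil + O(1) = O(\log n)$ bits. For $a_i.dfs\_label$: it is incremented only when a new DFS traversal is initiated at the current root (either via \textit{Group\_divide} or by the restart rules involving $count_2$ and $count_3$), and each such restart consumes at least one round. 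By Lemma~\ref{lm:time} the total running time is $O(n^4 \log n)$, so $a_i.dfs\_label = O(n^4 \log n)$, which is $O(\log n)$ bits to store.

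Combining all fields yields an aggregate storage of $O(\log n)$ bits per agent, which together with the lower bound from Lemma~\ref{lm:memory_lower} establishes the $\Theta(\log n)$ bound. The only subtlety to double-check is that the settled-agent record, which must hold information about possibly both groups $G_1$ and $G_2$, still fits in $O(\log n)$ bits; this follows because each of its components (a port, a group label of length $O(\log n)$, and a dfs label of value $O(n^4\log n)$) is individually $O(\log n)$ bits.
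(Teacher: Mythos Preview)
Your proposal is correct and follows essentially the same approach as the paper: invoke Lemma~\ref{lm:memory_lower} for the lower bound, bound $a_i.grp\_label$ by the $O(\log n)$ depth of the halving-split tree, bound $a_i.dfs\_label$ via the $O(n^4\log n)$ running time from Lemma~\ref{lm:time}, and observe that all remaining fields are ports, flags, or polynomially-bounded counters. Your audit is in fact more careful than the paper's, which lumps the counters $count_1,count_2,count_3$ under ``boolean variables or port information'' rather than noting their explicit $O(n^2)$ caps.
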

\begin{proof}
        For agent $a_i$, $a_i.grp\_label$ is nothing but a binary string. Whenever we divide the group, the $a_i.grp\_label$ is appended by 0 or 1. The number of group divisions is $O(\log n)$. Therefore, the length of $a_i.grp\_label$ is $O(\log n)$, which agent $a_i$ can store in its $O(\log n)$ memory. According to Lemma \ref{lm:time}, all agents achieve dispersion in $O(n^4 \cdot \log n)$ rounds. Therefore, $a_i.dfs\_label$ can not be more than $O(n^4 \cdot \log n)$, which agent $a_i$ can store in its $O(\log n)$ memory. Apart from these two parameters, all unsettled agents $a_i$ either store boolean variables or port information. Such parameters are constant as per our algorithm. To store each parameter, we need $O(\log n)$ storage. Therefore, each unsettled agent $a_i$ needs $O(\log n)$ memory. All settled agents maintain two parameters as per our algorithm $a_i^u(G_1)$ and $a_i^u(G_2)$. To store each parameter, settled agents need $O(\log n)$ memory. Therefore, each settled agent $a_i$ needs $O(\log n)$ memory. 
    
    Due to Lemma \ref{lm:memory_lower}, the memory lower bound for the dispersion of $k\geq n+1$ agents is $\Omega(\log n)$, and in our algorithm, agents use $O(\log n)$ memory each. Therefore, the agents in our algorithm require $\Theta(\log n)$ memory due to $k=n+1$. 
\end{proof}

The following theorem is an outcome of Lemma \ref{lm:correct}, \ref{lm:time}, and \ref{lm:memory}.

\begin{theorem}
Our algorithm solves $k$-balanced dispersion in 1-bounded 1-interval connected graphs with $k = n+1$ co-located agents in $\mathcal{G}$ within $O(n^4 \log n)$ rounds, using $\Theta(\log n)$ memory per agent.
\end{theorem}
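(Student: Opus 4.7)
The plan is to assemble this theorem directly from the three supporting lemmas established immediately before it, namely Lemma \ref{lm:correct}, Lemma \ref{lm:time}, and Lemma \ref{lm:memory}, since all the heavy lifting has already been done in their proofs. First, I would note that for $k=n+1$, the balanced dispersion requirement reduces to $\lfloor k/n\rfloor = 1$ and $\lceil k/n\rceil = 2$, so the target configuration is one in which every node hosts at least one agent and exactly one node hosts two agents. This precisely matches the terminal configuration produced by the algorithm, where group divisions continue until both resulting groups eventually shrink to a single unsettled agent each, with every settled agent having permanently claimed a distinct node.

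Next, I would invoke Lemma \ref{lm:correct} to guarantee that the algorithm indeed terminates in such a dispersed configuration: Claims \ref{claim:1} and \ref{claim:2} together ensure that whenever an adversarial edge deletion stalls one of the two groups, the other group makes enough progress to complete its own DFS-based dispersion, and the stalled group can detect this (via the $count_1$ or $count_3$ threshold being hit) and further subdivide. Inductively, every subdivision halves the number of unsettled agents in the stalled group while the other group finishes, so after $O(\log n)$ such subdivisions only singleton groups remain, and a singleton can always settle (either immediately at its current node if unoccupied, or by sliding along its DFS until it reaches a hole, of which there is exactly one left when $k=n+1$).

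For the running time, I would cite Lemma \ref{lm:time}, which bounds by $128n^4$ the number of rounds between consecutive events in which one of the two active groups becomes fully dispersed; combined with the fact that there are $O(\log n)$ such group-division events in total (each division requires the complementary group to have dispersed first), we obtain a total time of $O(n^4 \log n)$ rounds, subsuming the initial $O(m)=O(n^2)$ rounds of unobstructed DFS. For the memory bound, Lemma \ref{lm:memory} already shows that the per-agent storage is $O(\log n)$, dominated by $a_i.grp\_label$ (a binary string of length $O(\log n)$) and $a_i.dfs\_label$ (bounded by the round count, which is polynomial in $n$), while Lemma \ref{lm:memory_lower} gives the matching $\Omega(\log n)$ lower bound whenever $k\geq n+1$. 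Combining these three ingredients yields the theorem.

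I do not anticipate any genuine obstacle here, since the proof is essentially a one-line composition of the three lemmas; the only point requiring a brief justification is that the final configuration meets the balanced-dispersion definition rather than the weaker ordinary dispersion condition, which follows automatically from $\lceil (n+1)/n\rceil = 2$ and the fact that $n+1$ agents occupying $n$ nodes (one per node, with one extra) forces exactly one multinode of size two.
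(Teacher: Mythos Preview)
Your proposal is correct and matches the paper's approach exactly: the paper states this theorem simply as ``an outcome of Lemma \ref{lm:correct}, \ref{lm:time}, and \ref{lm:memory}'' with no further proof, so your assembly of the three lemmas is precisely what is intended. Your additional remark that $\lfloor (n+1)/n\rfloor=1$ and $\lceil (n+1)/n\rceil=2$, so the terminal configuration with one agent settled at each node plus one leftover unsettled agent automatically satisfies the balanced-dispersion definition, is a helpful clarification that the paper leaves implicit.
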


\begin{observation}\label{obs:termination}
   All agents know the value of $n$ and $m \leq n^2$. If all agents want to achieve the termination, they can terminate after $8n^2+128n^4 \cdot \lceil \log n \rceil$ rounds. 
\end{observation}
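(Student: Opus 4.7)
The plan is to justify the stated termination bound as a direct corollary of the time analysis already established in Lemma~\ref{lm:time}, combined with the assumption that each agent knows $n$. Since the bound involves only $n$ (and the trivial inequality $m \le n^2$), every agent can locally compute the quantity $8n^2+128n^4\lceil\log n\rceil$ at the start of the execution, so there is no need to discover any global parameter.

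First I would recall the exact bound from the proof of Lemma~\ref{lm:time}: from the moment the algorithm starts, after at most $8m+128n^4\lceil\log n\rceil$ synchronous rounds every agent is settled and a $k$-balanced dispersion is reached. Substituting the worst-case bound $m\le n^2$ for a simple graph on $n$ vertices, this becomes $8n^2+128n^4\lceil\log n\rceil$. Hence by that round the configuration is already dispersed, and no further movement is needed for correctness.

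Next I would introduce an extra local counter at each agent (adding only $O(\log n)$ bits, thus preserving the memory bound of Lemma~\ref{lm:memory}) that is incremented once per round. When the counter reaches $8n^2+128n^4\lceil\log n\rceil$, the agent halts. Because the rounds are synchronous and every agent has the same value of $n$, all agents compute the same threshold and therefore halt simultaneously in the same round. At that round the agents are in a dispersed configuration by Lemma~\ref{lm:time}, so simultaneous termination is consistent with the problem requirement.

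The only subtlety, and the one worth stating explicitly, is that termination here is unconditional on adversarial edge removals: the analysis of Lemma~\ref{lm:time} already accounts for the worst-case adversary under the 1-bounded 1-interval connectivity assumption, so the computed bound is a deterministic upper bound on the completion time regardless of how the adversary behaves. No further obstacle arises; the observation follows immediately by combining the known time bound, the inequality $m\le n^2$, and the agents' knowledge of $n$.
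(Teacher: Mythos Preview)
Your proposal is correct and matches the paper's intent. The paper does not give a separate proof of this observation; it is stated immediately after Lemma~\ref{lm:time} precisely because it is a direct consequence of the explicit bound $8m+128n^4\lceil\log n\rceil$ derived there together with $m\le n^2$ and the agents' knowledge of $n$, which is exactly the argument you give (note also that the round counter you introduce is already present as $a_i.r$ in Algorithm~\ref{algo:disp}).
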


\subsection{Dispersion with $k = pn + 1 (\text{or }pn+2)$ co-located agents}
In this section, we extend the algorithm from Section \ref{sec:gen_graph} to solve dispersion with $k$ co-located agents in 1-bounded 1-interval connected graphs, where $k=pn+q, q\in \{1,2\}, \;p\in \mathbb{N}\cup \{0\}$. We refer to the algorithm described in Section~\ref{sec:algorithm} as $Algo\_Int\_Disp()$. We assume that agents are aware of the parameter $n$. Here, agents have two additional parameters.

\begin{itemize}
    \item $a_i.r:$ Agent $a_i$ uses this parameter to count rounds since the start of the algorithm. Initially, $a_i.r=0$, and in each round, agent $a_i$ increases it by 1.
    \item $a_i.ID_1:$ This parameter is used by agents to store the new assigned ID. Initially, it is $a_i.ID_1=\bot$.
\end{itemize}

We divide our algorithm into three phases, and the details of each phase are as follows.

\begin{itemize}
    \item \textbf{Phase 1:}  
    At round $r=0$, since agents know $n$ and are co-located at node $v$, they compute the value of $p$ and $q$.

    At round 1, if $p=0$, then they do the following based on $deg(v)$.
    
\begin{itemize}
  \item If \( \deg(v) = 1 \), one agent remains at \( v \), and the other moves through port 0; both terminate.
  \item If \( \deg(v) \geq 2 \), then one agent moves via port 0 and the other via port 1; both terminate in the next round.
\end{itemize}

Else if $p\geq 1$, at round $r=1$, agents distribute themselves into $n$ groups: $g_1, g_2, \ldots, g_n$. Each group $g_i$, for $1 \leq i \leq n-1$, contains $p$ agents, while $g_n$ contains $p + q$ agents. Every agent $a_i$ in group $g_j$ stores $a_i.ID_1=j$. After this grouping, they proceed to the next phase.

    \item \textbf{Phase 2:}  
    From the second round onward, agents begin executing $Algo\_Int\_Disp()$ with the following modification: agents in group $g_i$ behave as a single agent as ID $i$. Just as in earlier algorithms, when a hole is found, one agent gets settled. Here, when a group $g_i$ finds a hole, all agents in $g_i$ settle together. Similarly, when a group divides into two groups, we divide the list $g_1, g_2, \ldots, g_l$ into two parts: $g_1, \ldots, g_{\lfloor l/2 \rfloor}$ becomes $G_1$, and $g_{\lfloor l/2 \rfloor+1}, \ldots, g_l$ becomes $G_2$.\footnote{After forming these two groups, group $G_1$ contains $\lfloor l/2 \rfloor$ $g_i$s, and group $G_2$ contains $\ell - \lfloor l/2 \rfloor$ $g_i$s.} These $g_i$s stop executing $Algo\_Int\_Disp()$ once group $G_1$ (respectively, $G_2$) at round $r$ determines that group $G_2$ (respectively, $G_1$) is dispersed, and that the size of $G_1$ (respectively, $G_2$) is 1. Let $g_i$ be a member of $G_1$ (respectively, $G_2$) and be at node $u$ at the end of round $r$. At round $r$, $G_1$ (or $G_2$) proceeds to the next phase.

   \item \textbf{Phase 3:} Since the agents are divided into $n$ groups, it is possible that at the end of Phase 2, node $u$ has another group $g_j$ that is already settled. In such a case, there exists a node, say $v_h$, that remains a hole at the beginning of round $r+1$. At most $p+1$ agents and at least $p$ agents need to settle at node $v_h$. Note that it is possible that $u = v_h$ at round $r+1$. We now describe how the agents fill the hole at $v_h$ using $Algo\_Int\_Disp()$. There are two cases to consider:
\begin{itemize}
    \item \textbf{Case 1 $(i=n)$:} In this case, $g_n$ contains $p+q$ agents. At round $r+1$, the agents in group $g_n$ begin executing $Algo\_Int\_Disp()$ as $p+q$ co-located agents at node $u$, in place of $n+1$ co-located agents (i.e., $G_1$ contains $p+q$ agents). While execution of $Algo\_Int\_Disp()$, whenever \( G_1 \) (or \( G_2 \)) reaches node, say \( v_h \), that has fewer than \( p \) agents, it settles agents there such that the number of agents at \( v_h \) does not exceed \( p+1 \).

    If \( G_1 \) (respectively \( G_2 \)) detects that \( G_2 \) (respectively \( G_1 \)) is fully dispersed and its own size is 1, it does the following. Without loss of generality, assume that the size of $G_1$ is 1, and agent $a_i\in G_1$ is at node $v$. In this case, if node $v$ has $p$ agents except agent $a_i$, then agent $a_i$ terminates. Else if node $v$ has $p+1$ agents except agent $a_i$, then the minimum ID agent $a_j$ (except agent $a_i$) does the following.

    \begin{itemize}
  \item If \( \deg(v) = 1 \), agent $a_j$ remains at \( v \), and agent $a_i$ moves through port 0; both terminate.
  \item If \( \deg(v) \geq 2 \), then agent $a_i$ moves via port 0 and agent $a_j$ via port 1; both terminate in the next round.
\end{itemize}

    Otherwise, if $G_1$ (respectively, $G_2$) detects that $G_2$ (respectively, $G_1$) is fully dispersed and its own size is 2, then it proceeds as follows. In the analysis, we show that at this point, node $v_h$ contains $p$ agents. Without loss of generality, assume that the size of $G_1$ is 2, and both agents are at node $v$.

\begin{itemize}
  \item If \( \deg(v) = 1 \), one agent remains at \( v \), and the other moves through port 0; both terminate.
  \item If \( \deg(v) \geq 2 \), then one agent moves via port 0 and the other via port 1; both terminate in the next round.
\end{itemize}

\item \textbf{Case 2 $(i<n)$:} By Observation 1, $n+1$ agents achieve dispersion in at most $8n^2+128n^4 \cdot \lceil \log n \rceil$ rounds. Let $T$ denotes $8n^2+128n^4 \cdot \lceil \log n \rceil$. Therefore, Phase 2 ends within the first $T+2$ rounds. In this case, group \( g_n \), consisting of \( p+q \) agents, had already settled at some node \( u_1 \) in a previous round. Without loss of generality, let $q=2$. Let \( b_1 \) and \( b_2 \) be the two agents with the smallest IDs at node \( u_1 \).

After \( b_1.r = b_2.r = T + 2 \) rounds, agents $b_1$ and $b_2$ form \( G_2 \). Similarly, after After \( a_j.r =T + 2 \) for every $a_j\in g_i$ becomes part of group \( G_1 \). From round $T+3$, agents start executing $Algo\_Int\_Disp()$ in the following way.

While execution of $Algo\_Int\_Disp()$, whenever \( G_1 \) (or \( G_2 \)) reaches node, say \( v_h \), that has fewer than \( p \) agents, it settles agents there such that the number of agents at \( v_h \) does not exceed \( p+1 \). 
    
If \( G_1 \) (respectively \( G_2 \)) detects that \( G_2 \) (respectively \( G_1 \)) is fully dispersed and its own size is 1, it does the following. Without loss of generality, assume that the size of $G_1$ is 1, and agent $a_i\in G_1$ is at node $v$. In this case, if node $v$ has $p$ agents except agent $a_i$, then agent $a_i$ terminates. Else if node $v$ has $p+1$ agents except agent $a_i$, then the minimum ID agent $a_j$ (except agent $a_i$) does the following.

    \begin{itemize}
  \item If \( \deg(v) = 1 \), agent $a_j$ remains at \( v \), and agent $a_i$ moves through port 0; both terminate.
  \item If \( \deg(v) \geq 2 \), then agent $a_i$ moves via port 0 and agent $a_j$ via port 1; both terminate in the next round.
\end{itemize}

Otherwise, if $G_1$ (respectively, $G_2$) detects that $G_2$ (respectively, $G_1$) is fully dispersed and its own size is 2, then it proceeds as follows. In the analysis, we show that at this point, node $v_h$ contains $p$ agents. Without loss of generality, assume that the size of $G_1$ is 2, and both agents are at node $v$.

\begin{itemize}
  \item If \( \deg(v) = 1 \), one agent remains at \( v \), and the other moves through port 0; both terminate.
  \item If \( \deg(v) \geq 2 \), then one agent moves via port 0 and the other via port 1; both terminate in the next round.
\end{itemize}
\end{itemize}
\end{itemize}

% According to Claim 1, if \( G_1 \) waits for \( 16n^2 \) consecutive odd rounds, then \( G_2 \) will have visited every node at least once, guaranteeing a meeting between \( G_1 \) and \( G_2 \). The two groups then combine to form a group of size \( p+q \), reducing the scenario to Case 1.

% If \( G_1 \) does not wait for \( 16n^2 \) odd rounds, it continues the DFS traversal. In at most \( 16n^2 \times 4n^2 = 64n^4 \) odd rounds, it reaches some node \( u \) and settles there. If \( G_2 \) does not encounter \( G_1 \) during this time, it concludes that all agents in \( G_1 \) are already settled. Then \( G_2 \) performs the following termination procedure.

% Suppose the $q$ agents in \( G_2 \) are at node \( v \). If $q=1$, it terminates; otherwise (i.e., $q=2$), it does the following.

% \begin{itemize}
%   \item If \( \deg(v) = 1 \), one agent remains, and the other moves through port 0; both terminate.
%   \item If \( \deg(v) \geq 2 \), then one agent moves via port 0, and the other via port 1; both terminate in the next round.
% \end{itemize}
% \end{itemize}
% \end{itemize}

Based on this, we have the final theorem.
\begin{theorem}
Our algorithm solves $k$-balanced dispersion in 1-bounded 1-interval connected graphs with $k = pn+q, \text{ where }, p\in \mathbb{N}\cup 0, q\in \{1,2\}$, co-located agents in $\mathcal{G}$ within $O( n^4 \cdot max(\log n,\, \log p))$ rounds, using $O\big(\max\{\log n, \log p\}\big)$ memory per agent.
\end{theorem}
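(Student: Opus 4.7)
The plan is to prove correctness, time, and memory bounds separately, reusing the analysis already established for $Algo\_Int\_Disp()$ (Lemmas~\ref{lm:correct}, \ref{lm:time}, \ref{lm:memory}, and Observation~\ref{obs:termination}). I would first dispatch the trivial base case $p=0$ in Phase~1: since $k = q \in \{1,2\}$, the agents terminate in at most two rounds after placing one agent per node as described, and balanced dispersion holds because $\lfloor k/n \rfloor = 0$ and $\lceil k/n \rceil = 1$. For $p \geq 1$, Phase~1 creates $n$ groups $g_1, \ldots, g_n$ where $g_i$ for $i<n$ has exactly $p$ agents and $g_n$ has $p+q$ agents; these $n$ groups collectively behave as $n$ virtual ``super-agents'' co-located at a single node, which is essentially the $n+1$-agent scenario handled in Section~\ref{sec:gen_graph} (since $g_n$ can later split into $p+q$ ordinary agents). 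Thus Phase~2 reduces to running $Algo\_Int\_Disp()$ with IDs $1, \ldots, n$, and by Lemmas~\ref{lm:correct} and \ref{lm:time} every node receives at least one group within $O(n^4 \log n)$ rounds, placing either $p$ or $p+q$ agents per node.

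Next I would argue that Phase~3 converts this ``$n-1$ nodes at $p$ agents and one node at $p+q$ agents'' configuration into the balanced target. If $g_n$ is the group that ends up needing to split (Case~1), its $p+q$ co-located agents re-invoke $Algo\_Int\_Disp()$ and move onto a hole $v_h$ while retaining enough agents locally; since $q \in \{1,2\}$, at most two surplus agents must be redistributed, and the termination rules at degree~$1$ and degree~$\geq 2$ nodes place these remaining agents so that each node ends with $p$ or $p+1$ agents, matching $\lfloor k/n \rfloor$ and $\lceil k/n \rceil$. For Case~2, I would use the global clock $a_i.r$ together with Observation~\ref{obs:termination}: by round $T+2$, Phase~2 has surely terminated, so all participants simultaneously recognize which groups form the new $G_1$ and $G_2$ for the next invocation of $Algo\_Int\_Disp()$, and the same settlement rules apply. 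The subtle point to verify is that when $G_1$ (or $G_2$) detects its counterpart is dispersed and reduces to size~$1$ or size~$2$, the current node $v$ indeed has either exactly $p$ or exactly $p+1$ residents already settled; this follows by a bookkeeping argument on the invariant that at most two ``surplus'' agents exist across the whole graph throughout Phase~3.

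The time complexity stacks Phase~2 and Phase~3, each a single execution of $Algo\_Int\_Disp()$ on groups of co-located agents, plus the $T+2$ synchronization wait in Case~2. By Lemma~\ref{lm:time} each invocation costs $O(n^4 \log n)$ rounds on $n$ virtual agents; the inner splits in Phase~3 involve groups of size $p+q$, which introduces a $\log p$ factor into the depth of group subdivisions and pushes the total to $O(n^4 \max(\log n, \log p))$. For memory, every agent inherits the $O(\log n)$ parameters from $Algo\_Int\_Disp()$ (Lemma~\ref{lm:memory}), and additionally stores $a_i.ID_1 \in [1,n]$, $a_i.r$ counting up to $T+2 = O(n^4 \log n)$, and group-size counters that range up to $p$; summing these and observing that $\log(n^4 \log n) = O(\log n)$ yields the stated $O(\max\{\log n, \log p\})$ bound.

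The main obstacle I anticipate is the Phase~3 correctness invariant, specifically ensuring that when the ``extra'' $q$ agents are pushed toward the remaining hole $v_h$, no intermediate node is left with fewer than $p$ or more than $p+1$ agents; this requires showing that the settlement rule inside the modified $Algo\_Int\_Disp()$ (settle only when current load is below $p$, stop settling at $p+1$) is compatible with the DFS/backtrack logic of the original algorithm, which was designed for the $\{0,1\}$ load regime. A careful case analysis distinguishing whether $u = v_h$ after Phase~2 and whether the last undispersed group has size $1$ or $2$ at termination will close this gap.
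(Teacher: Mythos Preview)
Your proposal is correct and follows essentially the same three-phase decomposition as the paper, invoking Lemmas~\ref{lm:correct}--\ref{lm:memory} and Observation~\ref{obs:termination} in the same places and arriving at the same $O(n^4\log n)+O(n^4\log p)$ time split and $O(\max\{\log n,\log p\})$ memory accounting. One imprecision worth tightening: your description of the state after Phase~2 (``every node receives at least one group \ldots\ placing either $p$ or $p+q$ agents per node'') is not what the algorithm guarantees---Phase~2 halts when exactly one virtual group $g_i$ remains unsettled, so $n-1$ nodes carry a settled group and there is one potential hole $v_h$ (possibly coinciding with $g_i$'s current node); the paper states this explicitly and your own Phase~3 discussion of $v_h$ already presumes it, so just align the Phase~2 conclusion accordingly.
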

\begin{proof}
Since the agents know the value of $n$, they can compute $p$ and $q$ correctly in Phase 1. At the end of this phase, we have $n$ groups $g_i$ of agents. In Phase~2, these $g_i$s form $G_1$ and start executing the algorithm described in Section \ref{sec:algorithm}. This phase ends when $G_1$ (respectively, $G_2$) determines that $G_2$ (respectively, $G_1$) is dispersed and the size of set $G_1$ is 1 (respectively, the size of set $G_1$ is 1). This is guaranteed by Lemma~\ref{lm:correct}, and the phase completes in $O(n^4 \cdot \log n)$ rounds by Lemma~\ref{lm:time}. At this point, one group $g_i$ may remain unsettled, and there exists a hole at node $v_h$.

In Phase 3, we consider two cases for the group $g_i$.  
In the first case (i.e., $i=n$), the $p+q$ agents (with $q \in \{1,2\}$) are co-located at node $w$. The agents in $g_i$ becomes $G_1$, and execute $Algo\_Int\_Disp()$. As per Lemma \ref{lm:time}, in $8m+128n^4$ rounds, either $G_1$ or $G_2$ archives dispersion using $Algo\_Int\_Disp()$. It implies that either $G_1$ or $G_2$ is able to visit every node of $G$ at least once. Therefore, either $G_1$ or $G_2$ reaches node $v_h$. In this way, they can get settled at node $v_h$. Since $p+q$ agents are co-located, therefore using Lemma \ref{lm:time}, node $v_h$ has at least $p$ agents in $O(n^4\cdot \log(p+q))=O(n^4\cdot \log p)$. After $O(n^4 \cdot \log p)$ rounds, $G_1$ (respectively, $G_2$) determines that $G_2$ (respectively, $G_1$) is dispersed, and the size of $G_1$ (respectively, $G_2$) is 1 or 2, due to Lemma~\ref{lm:correct}. Now, suppose at most two agents remain unsettled. If both are not at the same node, then they can terminate. Otherwise, if both are at the same node $v$, then they can understand this as the number of agents at node $v$ will be $p+2$. The adversary cannot remove an edge incident to a node of degree 1. If the $q$ agents are at node $v$ with $\deg(v) = 1$, then one can settle at $v$ and the other at its neighbour. If $\deg(v) \geq 2$, then one agent tries to move via port 0 and the other via port 1. Since the adversary can remove at most one edge, and $G$ is simple, at least one of the two reaches a new node. The other either stays at $v$ (due to a missing edge) or also reaches a new node. In both situations, dispersion is completed.

In the second case (i.e., $i\neq n$), agents in $g_i$ are at node $w$. In this case, $g_n$ has already settled. As per our algorithm, after $T+2$ rounds since beginning from $g_n$, if $q=1$, then the minimum ID agents from $g_n$ become part of $G_2$; else if $q=2$, then the first two minimum ID agents become part of $G_2$. All agents in $g_i$ become part of $G_1$. From $T+3$ rounds, both groups start executing the algorithm from Section~\ref{sec:algorithm}. The correctness of this case is similar to the first case. Therefore, in Phase 3, the $p+q$ agents get settled using $Algo\_Int\_Disp()$, and require $O(n^4 \cdot \log p)$ rounds, as the $p+q$ agents divide into two groups recursively until each group has size at most 2 (similar to the $n+1$ case discussed in Lemma~\ref{lm:time}). Hence, the total time complexity is $O(n^4 \cdot \log n) + O(n^4 \cdot \log p) = O(n^4 \cdot \max\{\log n, \log p\}).$

Since each agent $a_i$ needs to remember $n$, $p$, its own round count $a_i.r = O(n^4 \cdot \max\{\log n, \log p\})$, and its new ID $a_i.ID_1$, $1 \leq a_i.\mathit{ID}_1 \leq n$, and due to Lemma~\ref{lm:memory}, the memory requirement is $O\big(\max\{\log n, \log p\}\big)$ per agent. This completes the proof.
\end{proof}

\begin{remark}
    If $1 \leq k \leq n^{c_1}$ for some constant $c_1$, then by Lemma~\ref{lm:memory_lower}, the memory required by the agents is $O\big(\max\{\log n, \log p\}\big) = \Theta(\log n)$,
    since $p \leq k \leq n^{c_1}$. Hence, in this range of $k$, the memory requirement by the agents is optimal.
\end{remark}

\section{Conclusion}
In this work, we established a clear connection between dispersion and load balancing through the notion of $k$-balanced dispersion. We analyzed the problem under different connectivity models and showed that while prior works \cite{Ajay_dynamicdisp,Saxena_2025_} leave no gap under their restricted communication assumptions, our setting reveals a nontrivial gap between necessary and sufficient conditions. An interesting future direction is to determine for which values of $l$ the problem becomes unsolvable when agents are allowed $l$-hop communication.  

We also designed an algorithm that solves $k$-balanced dispersion in 1-bounded 1-interval connected graphs for $k = pn + q$ agents with $q \in \{1,2\}$ under face-to-face communication and zero visibility. Extending this result to the case $q \in [3, n-1]$ appears technically challenging if agents are given f-2-f communication and 0-hop visibility, and remains a compelling open problem.

\section{Acknowledgement}
Ashish Saxena would like to acknowledge the financial support from IIT Ropar. Kaushik Mondal would like to acknowledge the ISIRD grant provided by IIT Ropar. This work was partially supported by the FIST program of the Department of Science and Technology, Government of India, Reference No. SR/FST/MS-I/2018/22(C).

\vspace{0.4cm}
\noindent\textbf{Declaration of generative AI and AI-assisted technologies in the writing process} 

During the preparation of this work, we used \emph{Grammarly, QuiltBot and InstaText} tools in order to improve language quality. After using this tool, we reviewed and edited the content as needed and take full responsibility for the content of the publication.

\bibliography{bib}
\newpage

\end{document}